\definecolor{linkcol}{rgb}{0,0,0.38}
\definecolor{citecol}{rgb}{0.8,0,0}
\definecolor{urlcol}{rgb}{0.1,0.35,0}
\newtheoremstyle{light} %
    {\topsep}                    %
    {\topsep}                    %
    {\itshape}                   %
    {}                           %
    {\scshape}                   %
    {.}                          %
    {.5em}                       %
    {}  %
\newtheorem{theorem}{Theorem}[section]
\newtheorem{lemma}[theorem]{Lemma}
\newtheorem{proposition}[theorem]{Proposition}
\newtheorem{claim}[theorem]{Claim}
\theoremstyle{light}
\crefname{claiminproof}{Claim}{Claims}
\crefname{claiminproof}{claim}{claims}
\crefname{algocf}{Algorithm}{Algorithms}
\crefname{algocf}{algorithm}{algorithms}
\crefname{conjecture}{Conjecture}{Conjectures}
\crefname{conjecture}{conjecture}{conjectures}
\crefname{thm}{Theorem}{Theorems}
\crefname{thm}{theorem}{theorems}
\crefname{lem}{Lemma}{Lemmas}
\crefname{lem}{lemma}{lemmas}
\newcommand{\labeltarget}[1]{\Hy@raisedlink{\hypertarget{#1}{}}}
\setlist[enumerate]{nosep,topsep=0.1em}
\setlist[enumerate,1]{label=(\roman*), leftmargin=2.2em}
\setlist[itemize]{nosep,topsep=0.3em}
\newcommand\appendtographicspath[1]{%
  \g@addto@macro\Ginput@path{#1}%
}
\let\truehypersetup\hypersetup
\renewcommand\hypersetup[1]{}
\let\hypersetup\truehypersetup
\definecolor{morado}{rgb}{1,0,1}
\definecolor{verde}{rgb}{0,0.8,0}
\definecolor{amarillo}{rgb}{1,0.6,0.2}
\definecolor{azul}{rgb}{0,0.6,1}
\definecolor{gris}{rgb}{0.12549019607843137,0.12549019607843137,0.12549019607843137}
\newcommand{\junk}[1]{}
\newcommand{\bon}{\ensuremath{\mathbbm{1}}}
\renewcommand{\R}{\ensuremath{\mathbb{R}}}%real numbers
\newcommand{\Dmax}{\ensuremath{d_{{\max}}}}%maximum demand
\newcommand{\F}{\ensuremath{\mathcal{F}}}% faces of G
\newcommand{\fout}{\ensuremath{f_{\infty}}}% outer face of G
\renewcommand{\W}{\ensuremath{\mathcal{W}}}%partition set for outerplanar theorem
\newcommand{\Wk}{\ensuremath{\mathcal{W}^{(\ite)}}}%partition set for outerplanar theorem
\newcommand{\Wkplus}{\ensuremath{\mathcal{W}^{(\ite+1)}}}%partition set for outerplanar theorem
\newcommand{\Wkminus}{\ensuremath{\mathcal{W}^{(\ite-1)}}}%partition set for outerplanar theorem
\newcommand{\I}{\ensuremath{\mathcal{I}}}%multicommodity flow instance
\renewcommand{\RL}{\ensuremath{\mathcal{RL}}}%ring loading instance
\newcommand{\Alg}{\ensuremath{\mathcal{A}_{RL}}}%ring loading instance
\newcommand{\openleft}{\ensuremath{\llparenthesis}}% 
\newcommand{\openright}{\ensuremath{\rrparenthesis}}%
\newcommand{\closedleft}{\ensuremath{[}}%
\newcommand{\closedright}{\ensuremath{]}}%
\newcommand{\GD}{\ensuremath{G^D}}%dual graph
\newcommand{\VD}{\ensuremath{V^D}}%dual vertices
\newcommand{\ED}{\ensuremath{E^D}}%dual edges
\newcommand{\cp}{\ensuremath{c^{(\ite)}}}%capacities during iterations of algorithm
\newcommand{\crl}{\ensuremath{c_{\text{rl}}}}%ring loading instance capacities
\newcommand{\lab}{\ensuremath{\textsc{label}}}%edge label
\newcommand{\labk}{\ensuremath{\textsc{label}^{\hspace{-0.07cm}(k)}}}%edge label
\newcommand{\labone}{\ensuremath{\textsc{label}^{\hspace{-0.07cm}(1)}}}%edge label
\newcommand{\labkminus}{\ensuremath{\textsc{label}^{\hspace{-0.07cm}(k-1)}}}%
\newcommand{\Dp}{\ensuremath{D^{(\ite)}}}%New demand edges
\newcommand{\Dpminus}{\ensuremath{D^{(\ite-1)}}}%New demand edges
\newcommand{\Dpplus}{\ensuremath{D^{(\ite+1)}}}%New demand edges
\newcommand{\dpr}{\ensuremath{d^{(\ite)}}}%New demands
\newcommand{\ite}{\ensuremath{k}}%iteration number
\let\@@pmod\pmod
\DeclareRobustCommand{\pmod}{\@ifstar\@pmods\@@pmod}
\def\@pmods#1{\mkern8mu({\operator@font mod}\mkern 6mu#1)}
\let\@@mod\mod
\DeclareRobustCommand{\mod}{\@ifstar\@mods\@@mod}
\def\@mods#1{\mkern8mu{\operator@font mod}\mkern 6mu#1}
\definecolor{green}{rgb}{0.4,0.85,0.6}
\def\@fnsymbol#1{\ensuremath{\ifcase#1\or *\or %
\ddagger\or
    \mathsection\or \mathparagraph\or \|\or **\or \dagger\dagger
    \or \ddagger\ddagger \else\@ctrerr\fi}}
\title{Unsplittable Multicommodity Flows in Outerplanar Graphs%
\ifbool{arxiv}{}{%
\thanks{This work was partially supported by the NSERC Discovery Grant Program, grant number RGPIN-2024-04532.}}
}
\author{}
\author{
David Alem\'an-Espinosa\thanks{
Department of Combinatorics and Optimization, University of Waterloo, Waterloo, Canada.
Email: \href{mailto:dalemanespinosa@uwaterloo.ca}%
{dalemanespinosa@uwaterloo.ca}.
}
\and
Nikhil Kumar\thanks{
Department of Combinatorics and Optimization, University of Waterloo, Waterloo, Canada.
Email: \href{mailto:nikhil.kumar2@uwaterloo.ca}%
{nikhil.kumar2@uwaterloo.ca}.}
}
\date{}
\begin{document}
\maketitle
% \thispagestyle{empty}
%  \addtocounter{page}{-1}
%  \enlargethispage{-1cm}
%
\begin{abstract}
We consider the problem of multicommodity flows in outerplanar graphs. Okamura and Seymour~\cite{okamura1981multicommodity} showed that the cut-condition is sufficient for routing demands in outerplanar graphs. We consider the unsplittable version of the problem and prove that if the cut-condition is satisfied, then we can route each demand along a single path by exceeding the capacity of an edge by no more than $\frac{18}{5} \cdot \Dmax$, where $\Dmax$ is the value of the maximum demand. 
\end{abstract}

\section{Introduction}
Given a graph $G$ with edge capacities and multiple source-sink pairs, each with an associated demand, the multicommodity flow problem consists of routing all demands simultaneously without violating edge capacities. The graph obtained by including an edge $(s_{i},t_{i})$ for a demand with source-sink $s_{i},t_{i}$ is called the demand graph and will be denoted by $H$. The problem was first formulated in the context of VLSI routing in the 70's and since then it has been the subject of a long and impressive line of work.

A multicommodity flow is called unsplittable if all the flow between a source-sink pair is routed along a single path. In general, existence of a feasible (fractional) flow does not imply the existence of an unsplittable flow, even in very simple settings. 
Consider a four-cycle $v_1v_2v_3v_4v_1$ with uniform edge capacity $d\geq0$, where the demand edges are $(v_1,v_3)$ and $(v_2,v_4)$, both of value $d$.
A feasible flow is given by evenly splitting the demand across the two possible paths for each demand edge. However, any unsplittable flow would violate the capacity of at least one edge by $d$. 
The above leads to a natural question: given a feasible flow, does there exist an unsplittable flow which satisfies all the demands and violates the edge capacities (in an additive sense) by at most a small factor times the value of the maximum demand $\Dmax$? 
This problem was first considered by Kleinberg~\cite{kleinberg1996single} in the setting where all the demands are incident at a single source. He also explored connections to some well-studied combinatorial optimization problems such as the generalized assignment problem~\cite{shmoys1993approximation}.
Dinitz, Garg and Goemans~\cite{dinitz1999single} proved that in the single-source setting,
any feasible fractional flow can be converted into an unsplittable flow that violates the edge capacities by no more than $\Dmax$.
%The following is therefore another natural question: which non-trivial graph classes guarantee the existence of an unsplittable flow that incurs an additive violation of the edge capacities by at most $\mathcal{O}(\Dmax)$, for an arbitrary $H$ and whenever a feasible fractional flow exists?

The problem is significantly less understood when the demand graph $H$ is arbitrary.
Schrijver, Seymour and Winkler~\cite{schrijver1998ringloading} proved that if $G$ is a cycle, then any feasible multicommodity flow can be converted into an unsplittable one that violates the edge capacities by at most $\frac{3}{2}\cdot \Dmax$. Before our work, cycles were the only known nontrivial class of graphs for which an unsplittable flow was guaranteed to exist, incurring at most an additive \(\mathcal{O}(\Dmax)\) violation of edge capacities, whenever a feasible flow existed. Our main contribution is extending this result to the class of \emph{outerplanar graphs}.

In particular, we show that any feasible multicommodity flow in an outerplanar graph can be converted into an unsplittable one that violates the edge capacities by no more than $\frac{18}{5}\cdot \Dmax$. 
This improves a recent result of Shapley and Shmoys~\cite{shapleyshmoys}, who proved that if $G$ is an outerplanar graph, then one can convert any feasible multicommodity flow into an unsplittable one that violates the edge capacities by at most $\mathcal{O}(\log k)\cdot\Dmax$, 
where $k$ denotes the number of faces in the graph. 

To prove our result, we crucially rely on the fact that there is a nice characterization for the existence of a feasible flow in outerplanar graphs. A necessary condition for the existence of a feasible flow is that the total demand across a cut should not exceed the capacity of the cut. 
This condition is called the $\emph{cut-condition}$ and is known to be sufficient in many cases, such as: (a) when all demand edges in $H$ are incident on a single vertex (the single source setting)~\cite{ford1956maximal}, (b) when $G$ is planar and all demand edges in $H$ are incident on one face (the Okamura-Seymour setting)~\cite{okamura1981multicommodity}, and (c) when $G+H$ is planar \cite{seymour1981odd}. In outerplanar graphs, all the vertices are incident on one face. Hence, the result of Okamura-Seymour~\cite{okamura1981multicommodity} implies that the cut-condition is sufficient for a feasible routing if $G$ is outerplanar and $H$ is arbitary. In general, the cut-condition is not sufficient for the existence of a feasible flow, even for very small instances. It is worthwhile to point out that all the results mentioned above for unsplittable flows are for instances where the cut-condition is also sufficient for the existence of a feasible multicommodity flow, although none of the proofs use this fact explicitly. Our result can be stated equivalently as follows: if $G$ is outerplanar and the cut-condition is satisfied, then there exists an unsplittable flow satisfying all the demands that does not violate the capacity of any edge by more than $\frac{18}{5}\cdot \Dmax$. 

\section{Definitions and Preliminaries}

For an integer $n\geq 1$, we use $[n]$ to denote $\{1,\hdots,n\}$. For any universe $U$, function $c:U\mapsto\R$, set $S\subseteq U$ and an element $e\in S$, we use $S-e$ and $S+e$ to denote $S\setminus\{e\}$ and $S\cup\{e\}$, respectively. We use $c(S)$ to denote $\sum_{g\in S}c(g)$.
%\textcolor{red}{add further notations}
%We say that a cycle is simple if it is 2-vertex connected. We say that a path is simple if no vertex has degree more than 2. For a path $p$ or cycle $c$, we overload notation and use $p$ and $c$ to also denote the set of vertices and edges on $G$. 

\subsection{Multicommodity Flows}
An instance $\I(G,c,D,d)$ of {\em multicommodity flow} is given by an undirected graph $G=(V,E)$ with edge capacities $c:E\mapsto\R_{\geq 0}$, and a graph $H=(V,D)$ with demands $d:D\mapsto \R_{\geq 0}$. 
%We note that $H$ could have parallel edges but no loops. We assume that $G$ has no parallel edges or loops.
We note that $G$ and $H$ could have parallel edges but no loops.
We refer to $G$ and $E$ as the supply graph and supply edges, respectively. We refer to $H$ and $D$ as the demand graph and demand edges, respectively. 
%We denote the instance by $\I(G,H)$ if the edge capacities and demands are clear from the context.
For a demand edge $g \in D$, we denote by $\mathcal{P}_{g}$ the set of all (simple) paths between endpoints of $g$ in $G$. For a $g \in D$, a demand flow $x_g$ is an assignment of non-negative real numbers to paths in $\mathcal{P}_g$, i.e.~$x_g:\mathcal{P}_g \mapsto \R_{\geq 0}$. A collection of demand flows $x=\{x_g~|~g \in D\}$ constitutes a flow for the instance. For $D' \subseteq D$, we use $\sum_{g \in D'} x_g$ to denote the collection of demand flows in $D'$. 
%\textcolor{red}{change f to x}

For an edge $e \in E$ and demand $g \in D$, we use $x_g(e)$ to denote the total flow going through an edge $e$ in $x_g$, i.e.~$x_g(e)=\sum_{p: e \in p} x_g(p)$. For an edge $e \in E$ and flow $x$, we use $x(e)$ to denote the total flow going through an edge $e$ in $x$, i.e.~$x(e)=\sum_{g \in D} x_g(e)$. We say that a flow $x$ is \emph{feasible} if it does not violate the edge capacity constraints, i.e.~$x(e) \leq c(e)$ for all $e \in E$, and satisfies all the demands, i.e.~$\sum_{p \in \mathcal{P}_g} x_g(p) = d(g)$ for all $g \in D$. 
%\david{There are parallel demands (at least in the instance we create for outerplanar graphs). Can you please address this in the definition below.}

We say that a flow $x$ is \emph{unsplittable} if at most one of the paths in $\mathcal{P}_{g}$ is assigned a non-zero value in $x_g$. We use $\Dmax:=\max_{g\in D} d(g)$ to denote the maximum demand value. We say that that an unsplittable flow $x$ is {\em $\alpha$-feasible} if $x(e) \leq c(e) + \alpha \cdot \Dmax$ for all $e \in E$ and $\sum_{p \in \mathcal{P}_g} x_g(p) = d(g)$ for all $g \in D$.

\subsection{The cut-condition}

For $S \subseteq V$, a cut $(S,V \setminus S)$ is a bi-partition of the vertex set. We refer to the cut $(S,V \setminus S)$ by $S$ for convenience. Given a set of edges $X \subseteq E$, we will use $c(X)$ to denote the sum of capacities of edges in $X$. Given a set of edges $Y \subseteq D$, we will use $d(Y)$ to denote the sum of demands edges in $Y$. The set of supply edges going across the cut $S$ is denoted by $\delta_{G}(S)$. Similarly, $\delta_{H}(S)$ denotes the set of demand edges going from $S$ to $V \setminus S$. One necessary condition for the existence of a feasible flow is as follows: $c(\delta_{G}(S)) \geq d(\delta_{H}(S))$ for every $S \subseteq V$. In other words, across every cut, the total capacity of supply edges should be at least the total value of demand edges. This condition is also known as the \emph{cut-condition}. In general, the cut-condition is not sufficient for a feasible routing; see \cite{okamura1981multicommodity} for an example. 
The following classic result identifies a setting where the cut-condition is also sufficient for routing demands in planar graphs. We will be invoking this to prove our main result.
\begin{theorem} [Okamura-Seymour~\cite{okamura1981multicommodity}]\label{OS}
Let $\I(G,c,D,d)$ be an instance of multicommodity flow such that $G$ is planar and all the edges of $H$ are incident on a fixed face \footnote{i.e.~there exists a face $f$ such that for each $(u,v) \in D$, both $u$ and $v$ lie on $f$}, then the cut-condition is necessary and sufficient for the existence of a feasible flow.
\end{theorem}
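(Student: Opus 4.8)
I would prove the nontrivial implication, that the cut-condition is sufficient; the converse is immediate, since for any feasible flow $x$ and any $S\subseteq V$ every unit of a demand in $\delta_H(S)$ must traverse an edge of $\delta_G(S)$, so $c(\delta_G(S))\ge x(\delta_G(S))\ge d(\delta_H(S))$. The plan for sufficiency is an induction in which each step routes one unit of one demand across an edge on the boundary of the distinguished face $f$ and recurses on a strictly smaller instance that still satisfies the cut-condition.

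Since the cut-condition is scale-invariant, I would first rescale $c$ and $d$ to be nonnegative \emph{even} integers and pass to the multigraph model: a supply edge of capacity $c(e)$ becomes $c(e)$ parallel unit-capacity edges, and a demand of value $d(g)$ becomes $d(g)$ parallel unit demands. Then every vertex of $G+H$ has even degree, so $G+H$ is Eulerian, and it suffices to find an \emph{integral} routing --- one simple path per unit of demand, with each supply edge used at most once --- since undoing the scaling turns such a routing into a feasible fractional flow for $\I(G,c,D,d)$. Fix a planar embedding with $f$ as the outer face. For the purpose of the inductive step, standard moves let me assume $G$ is connected and $2$-connected (split at cut-vertices and argue blockwise), so the boundary of $f$ is a simple cycle $C$; and if there are no demand edges we are done, so some vertex of $C$ is an endpoint of a demand edge. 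I would then induct on $|E(G)|$.

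The inductive step rests on the following claim: there are a vertex $r\in V(C)$ incident to a demand edge, a demand edge $g=\{r,t\}$, and an edge $e=\{r,r'\}\in E(G)$ with $r'\in V(C)$, such that the instance $\I'$ obtained from $\I$ by deleting one copy of $e$ and re-attaching one unit of demand $g$ from the pair $\{r,t\}$ to the pair $\{r',t\}$ still satisfies the cut-condition (and is still Eulerian, since the degrees of $r$ and $r'$ each change by an even amount and all other degrees are unchanged). After deleting $e$ all terminals still lie on a single face, and $r'$ already lay on $C$, so $\I'$ is again an instance of the required form with one fewer supply edge. Granting the claim, induction finishes the proof: $\I'$ has an integral routing; the path $Q$ carrying the re-attached unit runs from $r'$ to $t$ in $G-e$, so $e$ followed by $Q$ is an $r$--$t$ walk in $G$, which after deleting cycles is a simple $r$--$t$ path using only edges of $E(Q)\cup\{e\}$; replacing $Q$ by this path in the routing of $\I'$ yields an integral routing of $\I$.

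Proving the claim is where planarity and the hypothesis that all terminals lie on $f$ are used decisively, and I expect it to be the main obstacle. Writing $\slack(S)=c(\delta_G(S))-d(\delta_H(S))$, a short check shows that the re-attachment decreases $\slack(S)$ only for cuts $S$ having $r$ and $t$ on one side and $r'$ on the other --- and then by exactly $2$ --- so the claim reduces to choosing $r$, $g$, and $r'$ such that no \emph{tight} cut ($\slack=0$) separates $\{r,t\}$ from $r'$. Here I would use that $S\mapsto\slack(S)$ is submodular, so tight cuts can be uncrossed, together with the structural fact --- forced by planarity and all terminals being on $C$ --- that an inclusion-minimal tight cut $U$ with $r\in U$ and a fixed boundary-neighbour of $r$ outside $U$ meets $C$ in an arc having $r$ as an endpoint. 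Scanning the demand-neighbours of $r$ in cyclic order along $C$ then lets one exhibit a good pair $(g,r')$ --- roughly, $g=\{r,t\}$ with $t$ the demand-neighbour of $r$ reached first along $C$ in the direction of $r'$ --- while if every pair were blocked, uncrossing the tight cuts that block the two boundary directions at $r$ would produce a tight cut whose only vertex on $C$ is $r$, with all of $r$'s demand-neighbours outside it, contradicting the cut-condition. (Equivalently, one routes $g$ along a geodesic $r$--$t$ path for a length function read off from the tight cuts and checks that it meets each tight cut at most once.) The remaining pieces --- the reductions, termination of the induction ($|E(G)|$ strictly decreases), and undoing the scaling to pass from this integral Eulerian statement back to the fractional one --- are routine.
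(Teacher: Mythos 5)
The paper does not prove this theorem: it is imported verbatim as a black box from Okamura and Seymour~\cite{okamura1981multicommodity}, so there is no internal proof to compare against. What you have written is a recognizable outline of the classical inductive proof of that result, and the skeleton is right: reduce to even integral data so that $G+H$ is Eulerian (hence every cut has even slack, which is what lets a non-tight cut absorb a decrease of $2$), induct on $|E(G)|$, delete a boundary edge of the distinguished face, re-attach one unit of a demand across it, and control the damage via uncrossing of tight cuts. Two points are routine but should be said: the instance here has real-valued $c$ and $d$, so "rescale to even integers" needs a limiting or LP-closedness argument for irrational data; and the evenness of $\slack$ should be stated explicitly, since it is exactly what reduces the claim to tight cuts.

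The genuine soft spot is the key claim itself. First, your operation insists that the re-routed demand $g=\{r,t\}$ be incident to the endpoint $r$ of the deleted boundary edge. In the standard proof one fixes the boundary edge $e=\{v_0,v_1\}$ first, takes the inclusion-minimal tight set $X$ with $v_1\in X$, $v_0\notin X$, and picks a demand \emph{crossing} $X$ whose endpoint $s\in X\cap C$ is extremal along the arc $X\cap C$; that demand need not be incident to $v_0$ or $v_1$, and the unit is split into $\{s,v_1\}$, the edge $e$, and $\{v_0,t\}$. Your restricted move (demand incident to $r$) is not obviously always available and would need its own existence argument. Second, the contradiction you gesture at does not close: a tight cut whose only $C$-vertex is $r$ and which excludes all demand-neighbours of $r$ does \emph{not} contradict the cut-condition --- it merely says the capacity leaving $r$ equals the demand at $r$, which is entirely consistent. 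The classical argument instead derives a contradiction with the inclusion-minimality of $X$ or the extremality of $s$ (or with parity) after uncrossing the blocking tight set with $X$. So the outline is the right one, but the step you yourself identify as the main obstacle is not yet a proof; as this is a celebrated theorem the paper is entitled to cite, the honest course is either to cite it as the paper does or to carry out the extremal-choice-plus-uncrossing argument in full.
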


We say that $g \in D$ is \emph{good} if there exists an edge $e \in E$ with exactly the same end-points as $g$. A demand is \emph{bad} if it is not good. 
%Given a \emph{good} demand with end-points $s,t$ and a vertex $u \in V$, the process of forming a new instance by replacing $g$ with two new demands $g_1,g_2$, with end-points $s,u$ and $s,v$ is called \emph{pinning}.
%\david{Changed the following sentence.}
Given a bad demand $g\in D$ with end-points $s,t$ and a vertex $u \in V$, the process of forming a new instance by replacing $g$ with two new demands $g_1,g_2$, with end-points $s,u$ and $u,t$ is called \emph{pinning}.
Note that $g_1,g_2$ above have the same demand value as $g$. Given an instance where the cut-condition is satisfied, we say that a given pinning operation is \emph{feasible} if the resulting instance also satisfies the cut-condition. 

\subsection{Planar Graphs}\label{section:planar}
We assume a fixed planar embedding of $G$. 
Without loss of generality, one can assume that $G$ is 2-vertex connected. If there is a cut-vertex $v$ and $ab$ is a demand separated by the removal of $v$, then replacing $ab$ by $av,vb$ maintains the cut-condition. By doing this for every cut-vertex and each demand separated by them, we get separate smaller instances for each 2-vertex connected component. Hence, every vertex is a part of a cycle corresponding to some face. We abuse notation and use $f$ to also denote the edges and vertices associated with the cycle of face $f$. Given a set $S\subseteq V$, we denote the subgraph induced by vertices in $S$ as $G[S]$. We call a subset $A \subseteq V$ \emph{central} if both $G[A]$ and $G[V-A]$ are connected. A cut $(A, V \setminus A) $ is called central if $A$ is central. The following is well-known.
\begin{lemma}[\cite{schrijver2003combinatorial}]\label{central}
$(G,H)$ satisfies the cut-condition if and only if $c(\delta_G(A)) \geq d(\delta_H(A))$ for all central sets $ A \subseteq V$.
\end{lemma}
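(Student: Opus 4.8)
The plan is as follows. The forward implication is immediate, since every central cut is in particular a cut. For the converse I would argue by taking a minimal counterexample. Write $f(S) := c(\delta_G(S)) - d(\delta_H(S))$ for the slack of a cut $S$; the cut-condition is exactly the statement ``$f(S) \geq 0$ for every $S \subseteq V$'', and we are given ``$f(S) \geq 0$ for every central $S$''. Since $\delta_G(S) = \delta_G(V\setminus S)$ and $\delta_H(S) = \delta_H(V\setminus S)$ we have $f(S) = f(V\setminus S)$, and $f(\emptyset) = f(V) = 0$, so any counterexample $S$ satisfies $\emptyset \neq S \neq V$ and $f(S) < 0$. Recall that we may assume $G$ is connected (indeed $2$-vertex connected, cf.\ Section~\ref{section:planar}).

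The one structural fact I would isolate is what happens when $G[S]$ is disconnected. If $S_1, \dots, S_m$ are the vertex sets of the connected components of $G[S]$, then $\delta_G(S)$ is the \emph{disjoint} union of the $\delta_G(S_i)$: an edge of $G$ leaving $S$ leaves exactly the component containing its endpoint in $S$, and no edge of $G$ runs between two of the $S_i$. On the demand side one only gets $\delta_H(S) \subseteq \bigcup_i \delta_H(S_i)$, since a demand edge leaving $S$ still leaves the component of its $S$-endpoint (a demand edge between two different $S_i$ is counted on the right but not on the left). As capacities and demands are non-negative, this gives $c(\delta_G(S)) = \sum_i c(\delta_G(S_i))$ and $d(\delta_H(S)) \leq \sum_i d(\delta_H(S_i))$, hence $f(S) \geq \sum_{i=1}^m f(S_i)$. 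In particular, if $f(S) < 0$ and $G[S]$ is disconnected, then $f(S_i) < 0$ for some $i$, and $S_i$ is a strictly smaller counterexample.

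With this in hand, I would take a counterexample $S$ with $|S|$ minimum and derive a contradiction. First, $G[S]$ must be connected, for otherwise a component of $G[S]$ is a smaller counterexample. Next I would apply the same reasoning to $V \setminus S$, which is legitimate because $f(V\setminus S) = f(S) < 0$: if $G[V\setminus S]$ were disconnected, with components $T_1, \dots, T_\ell$ ($\ell \geq 2$), then $f(T_j) < 0$ for some $j$, and I claim $T_j$ is actually \emph{central}, contradicting the hypothesis directly. Indeed $G[T_j]$ is connected (it is a component), and $G[V\setminus T_j] = G\bigl[S \cup \bigcup_{i\neq j} T_i\bigr]$ is connected because $G[S]$ is connected, each $G[T_i]$ is connected, and each $T_i$ ($i \neq j$) has at least one edge to $S$ — the only edges of $G$ leaving the component $T_i$ of $G[V\setminus S]$ go to $S$, and at least one such edge exists since $G$ is connected and $\emptyset \neq T_i \neq V$. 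Hence $G[V\setminus S]$ is connected too, so $S$ itself is central, contradicting the hypothesis. Therefore no counterexample exists, and the cut-condition holds for all $S \subseteq V$.

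I do not expect a genuine obstacle here — the statement is essentially bookkeeping about the cut functions $\delta_G$ and $\delta_H$. The only step that needs a moment of care is the last one: after reducing a counterexample to a connected component $T_j$ on the ``complement side'', one must verify that $G[V \setminus T_j]$ is still connected so that $T_j$ is genuinely a central cut, and this is exactly where connectivity of $G$ (every component of $G[V\setminus S]$ has an edge to $S$) gets used.
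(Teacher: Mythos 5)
Your proof is correct. The paper states this lemma without proof (it is cited to Schrijver's textbook), and your argument is the standard one for it: decompose a violating set into connected pieces using that $\delta_G(S)$ splits as a disjoint union over the components of $G[S]$ while $d(\delta_H(S))$ can only shrink, and then use connectivity of $G$ to verify that each surviving piece on the complement side is genuinely central.
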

We use $\F$ to denote the set of all faces of $G$. The $\bf{dual}$ of a planar graph, denoted by $\GD=(\VD,\ED)$, is defined as follows: 
%$\VD= \F$ and if $f_{i},f_{j} \in \F$ share an edge in $G$, then $(f_{i},f_{j}) \in \ED $. 
The vertex set $\VD$ of $\GD$ is the face set $\F$ of $G$.
There is a one-to-one correspondence between $E$ and $\ED$,
where every edge $e\in E$ induces precisely one edge $e^D$ of $\ED$ incident with the two faces of $G$ that contain $e$. 
%Observe that $|\{e\in E:e\in f_1\cap f_2\}|=k$ for two distinct faces $f_1,f_2\in\F$, if and only if there are $k$ edges in $\GD$ whose endpoints are $f_1$ and $f_2$. 
The $\bf{weak}$ $\bf{dual}$ of $G$, denoted by $G^*$, is defined as the induced graph $\GD[\VD-\fout]$,
%\footnote{$\GD-\fout$ denotes the induced graph $\GD[\VD-\fout]$.}
where $\fout$ denotes the unbounded face of $G$. 
%We overload notation and also use $\fout$ to denote the set of edges on the unbounded face.
The following are well-known facts;
see for instance Section 14.3 in~\cite{godsil2001algebraicGraphTheory}.
\begin{theorem}\label{theorem:dualofplanarGcycles}
Given a planar graph $G=(V,E)$ and $F\subseteq E$, $F$ is the set of edges of a central cut in $G$ if and only if $\{e^D:e\in F\}$ is the set of edges of a simple circuit\footnote{A simple circuit is a closed walk with at least one edge, where the only repeated vertices are the first and the last.} in $\GD$.
\end{theorem}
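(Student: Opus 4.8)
The plan is to factor the stated equivalence through the classical notion of a \emph{bond} of $G$, by which I mean an inclusion-minimal non-empty edge cut. Since $G$ is connected --- in fact $2$-vertex connected, after the reduction described in \Cref{section:planar} --- the first step is to show that a non-empty $F \subseteq E$ is the edge set of a central cut of $G$ if and only if $F$ is a bond. For the forward implication, if $F = \delta_G(A)$ with $A$ central then $G - F$ has exactly the two connected components $G[A]$ and $G[V-A]$; moreover, for any proper subset $F' \subsetneq F$ the graph $G - F'$ is still connected, because it contains the connected subgraphs $G[A]$ and $G[V-A]$ together with at least one edge of $F \setminus F'$ joining them, so $F$ is minimal. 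For the converse, if $F$ is a bond then $G - F$ has exactly two components, whose vertex sets form a bipartition $(A, V-A)$ with $F = \delta_G(A)$, and since each component is connected, $A$ is central.

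The second step is to invoke planar cut--cycle duality (see Section~14.3 of~\cite{godsil2001algebraicGraphTheory}): the $\mathrm{GF}(2)$-linear map induced by $e \mapsto e^D$ carries the cut space of $G$ isomorphically onto the cycle space of \GD. The quickest way to see this is to fix a spanning tree $T$ of $G$; a standard fact is that $\{ e^D : e \in E \setminus T\}$ is then a spanning tree of \GD, and for every $e \in T$ the fundamental cut of $e$ with respect to $T$ is sent exactly onto the fundamental cycle of $e^D$ with respect to that dual spanning tree. As the two spaces have the same dimension (by Euler's formula), matching the bases of fundamental cuts and fundamental cycles shows the map is an isomorphism. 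Since it is induced by a bijection on edges, it preserves supports, hence sends inclusion-minimal non-zero elements to inclusion-minimal non-zero elements. The inclusion-minimal non-zero elements of the cut space of $G$ are precisely its bonds, and those of the cycle space of \GD are precisely the simple circuits of \GD. Composing the two equivalences gives the theorem, in both directions simultaneously.

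Everything here is, frankly, bookkeeping rather than a genuine obstacle; the one place that rewards care is the identification of the extremal elements --- verifying that bonds are exactly the inclusion-minimal non-empty edge cuts, and dually that simple circuits are exactly the inclusion-minimal non-empty even subgraphs --- so that the support-preserving isomorphism really does match the two families. One should also note that \GD has no loops (because $G$ is bridgeless) but may have parallel edges; neither disturbs the argument, as the $\mathrm{GF}(2)$ cycle-space formalism handles them transparently. An alternative, more hands-on route also works: contract a spanning tree of $G[A]$ and a spanning tree of $G[V-A]$ so that $G$ collapses to a two-vertex graph whose non-loop edges are exactly $F$, and observe that in the dual these non-loop edges form the unique circuit; but the algebraic argument is the shortest to make fully rigorous.
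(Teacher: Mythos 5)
The paper does not prove this statement; it is invoked as a well-known fact, with Section~14.3 of \cite{godsil2001algebraicGraphTheory} cited as the source, so there is no proof in the paper to compare against. Your self-contained argument is correct and is the standard one: in a connected graph, central cuts with non-empty edge boundary coincide exactly with bonds; the tree--cotree correspondence makes $e \mapsto e^D$ a support-preserving $\mathrm{GF}(2)$-linear isomorphism from the cut space of $G$ onto the cycle space of $\GD$; and the inclusion-minimal non-zero elements of those two spaces are respectively the bonds of $G$ and the edge sets of simple circuits of $\GD$, so the isomorphism matches them. Your explicit check that the first equivalence survives proper subsets $F' \subsetneq F$, and your remark that $\GD$ is loop-free (since $G$ is bridgeless after the $2$-connectivity reduction) but may have parallel edges, are both correct and worth keeping, as loops or multi-edges in $\GD$ are the only place a careless reader could worry. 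This is essentially the argument one would extract from the cited reference.
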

\begin{proposition}\label{fact:weakdualtree}
Let $G$ be a 2-vertex connected outerplanar graph. Then $G^*$ is a tree.
\end{proposition}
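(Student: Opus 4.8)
The plan is to combine an Euler-type edge count with the cut--circuit correspondence of \Cref{theorem:dualofplanarGcycles}. Fix the outerplanar embedding of $G$, so that all $n := |V|$ vertices lie on $\fout$. Since $G$ is $2$-vertex connected, the boundary of every face is a simple cycle, and as the boundary of $\fout$ passes through every vertex it is a Hamiltonian cycle $C$ with exactly $n$ edges. Now $e^D \in E(G^*)$ holds for an edge $e \in E$ exactly when both faces incident to $e$ are bounded, i.e.\ exactly when $e \notin E(C)$; hence $G^*$ has $|\F| - 1$ vertices and $|E| - n$ edges. Euler's formula $n - |E| + |\F| = 2$ then gives $|E| - n = |\F| - 2 = (|\F|-1) - 1$, so $G^*$ has exactly one fewer edge than it has vertices. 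Since any forest on $p$ vertices with $p - 1$ edges is connected, it suffices to prove that $G^*$ is acyclic.

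Suppose, for contradiction, that $G^*$ contains a simple cycle $Z$. As $G^*$ is a subgraph of $\GD$, $Z$ is a simple circuit of $\GD$, so by \Cref{theorem:dualofplanarGcycles} the set $F := \{ e \in E : e^D \in Z \}$ equals $\delta_G(A)$ for some central set $\emptyset \neq A \subsetneq V$. Each edge of $F$ has its dual in $G^*$, so its two incident faces are both bounded, i.e.\ $F \cap E(C) = \emptyset$. On the other hand $C$ is a cycle spanning $V$ and $A$ is a proper nonempty vertex set, so $\delta_C(A) \neq \emptyset$; but $\delta_C(A) \subseteq \delta_G(A) = F$ while every edge of $\delta_C(A)$ lies on $C$, a contradiction. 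Hence $G^*$ is a forest, and by the edge count above a tree.

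The bookkeeping of the first paragraph is the only delicate point: one must verify that $2$-vertex connectivity forces $\fout$ to be bounded by a spanning cycle, and therefore that exactly the chords of $G$ (and none of the boundary edges) contribute edges to $G^*$. Once this is set up, \Cref{theorem:dualofplanarGcycles} does the rest and no genuinely hard step remains. A more topological derivation of both connectivity and acyclicity is also possible --- the bounded faces tile the closed disk enclosed by $C$ --- but the argument above has the advantage of reusing only results already stated.
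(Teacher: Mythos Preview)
Your proof is correct. The paper itself does not prove \Cref{fact:weakdualtree}; it is stated as a well-known fact with a reference to Godsil--Royle, so there is no in-paper argument to compare against. Your route --- Euler's formula for the edge/vertex count of $G^*$, then \Cref{theorem:dualofplanarGcycles} to rule out cycles via the Hamiltonian outer boundary --- is a clean self-contained proof that reuses only results already stated in the paper. The one point worth tightening is purely cosmetic: the final paragraph is commentary rather than proof and could be dropped or folded into the argument.
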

%\vspace*{-3ex}
\subsection{The Ring Loading Problem}
When $G$ is a cycle, we refer to the unsplittable multicommodity flow instance as a {\em ring-loading} instance. Such instances were first studied by Cosares and Saniee~\cite{cosares1994optimization}. For such instances, we assume that $G$ has vertex set $V:=[n]$ and edge set $E=\{\{i,i+1\}\}_{i\in[n]}$, where we define $n+1:=1$. We use $\RL([n],c,D,d)$ to denote a ring-loading instance where $c,D,d$ are defined as above.

Theorem~\ref{OS} implies that the cut-condition is sufficient for the existence of a feasible flow if $G$ is a cycle and $H$ is arbitrary. In their seminal paper, Schrijver, Seymour and Winkler~\cite{schrijver1998ringloading} showed that if the cut-condition is satisfied for a ring-loading instance, then there exists a $\frac{3}{2}$-feasible unsplittable flow. This was later improved to $\frac{19}{14}$ by Skutella~\cite{skutella2016ringloading}, and ultimately to $\frac{13}{10}$ by D\"{a}ubel~\cite{daubel2019ringloadingBestUpperbound}. Skutella~\cite{skutella2016ringloading} also showed that it is not possible to obtain a $\alpha$- feasible unsplittable flow for $\alpha\geq \frac{11}{10}$, improving upon the previous best lower bound of $\frac{101}{100}$ provided in \cite{schrijver1998ringloading}. Let $\alpha_{\textsc{rl}}$ denote the minimum value of $\alpha$ that can be achieved by a polynomial time algorithm for the ring-loading problem. We know that $\alpha_{\textsc{rl}} \in [\frac{11}{10}, \frac{13}{10}]$, and it is an interesting open question to pin down its precise value.
%Prior to our work, cycles were the only (2-connected) class of supply graphs for which an additive constant-factor approximation algorithm was known.  
%\vspace*{-1ex}
\section{Our Contributions}

As mentioned above, there exists a feasible flow for a multicommodity instance on outerplanar graphs if and only if the cut-condition is satisfied. We study the unsplittable version of the problem, and show that if the cut-condition is satisfied for a multicommodity flow instance on an outerplanar graph, then there exists a polynomial time computable $(2 \cdot \alpha_{\textsc{rl}}+1)$-feasible unsplittable flow. By the result of D\"{a}ubel~\cite{daubel2019ringloadingBestUpperbound}, $\alpha_{\textsc{rl}} \leq \frac{13}{10}$, and hence we obtain a $\frac{18}{5}$-feasible unsplittable flow. We use the results on unsplittable flow for the ring-loading problem as a blackbox, and any improvement in the upper bound for $\alpha_{\textsc{rl}}$ would also imply an improved bound for outerplanar graphs.
\begin{restatable}{rethm}{theoremmain}\label{theorem:outerplanar}
Let $\I(G,c,D,d)$ be an instance of the multicommodity flow problem such that $G$ is outerplanar and the cut-condition is satisfied.
Then there exists a polynomial time computable $(2 \cdot \alpha_{\textsc{rl}}+1)$-feasible unsplittable flow for the instance.
\end{restatable}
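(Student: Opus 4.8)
The plan is to reduce the outerplanar instance to a collection of ring-loading instances, one per face, and stitch their unsplittable solutions together along the tree structure of the weak dual $G^*$. By Proposition~\ref{fact:weakdualtree}, $G^*$ is a tree; root it at an arbitrary leaf face. The idea is to process faces from the leaves toward the root: each face $f$ shares exactly one edge $e_f$ with its parent face, and we will treat $e_f$ as the single ``exit edge'' through which all demand not internal to the subtree below $f$ must leave. Concretely, for a leaf face $f$ of $G^*$, every demand edge $g$ with both endpoints on $f$ can be handled entirely within the cycle $f$; a demand with only one endpoint on the part of $G$ hanging below $f$ gets pinned (in the sense defined in the excerpt) to the vertex where that endpoint's path must cross $e_f$, i.e.\ an endpoint of $e_f$. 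The key point is that after pinning, the demand restricted to $f$ is a genuine ring-loading instance on the cycle $f$, and we may invoke the $\alpha_{\textsc{rl}}$-feasibility guarantee as a black box.

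The first technical step is to check that the pinnings are feasible, i.e.\ that they preserve the cut-condition (Lemma~\ref{central} lets us restrict attention to central cuts, and Theorem~\ref{theorem:dualofplanarGcycles} identifies those with simple circuits in $\GD$). When we pin a demand $g=(s,t)$ to an endpoint $u$ of the exit edge $e_f$, we need that for every central cut $A$, $d(\delta_H(A))$ does not increase beyond $c(\delta_G(A))$; the only cuts affected are those separating exactly one of $\{s,t\}$ from $u$, and because $u$ is a cut vertex-like separator between the inside and outside of $f$ in the relevant sense, such a cut already separated $s$ from $t$ before and had slack at least $d(g)$ — here one uses that $e_f$ is the unique edge of the cut that lies ``between'' the two sides, together with outerplanarity. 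I expect this bookkeeping to be the main obstacle: making precise which vertex of $e_f$ each one-sided demand should be pinned to, and verifying that the simultaneous pinning of all demands crossing $e_f$ keeps every central cut satisfied, requires care because several demands may now all pass through $e_f$ and their contributions add up. This is exactly where the constant blows up: the ring-loading solver on face $f$ overloads $e_f$ by up to $\alpha_{\textsc{rl}}\cdot \Dmax$, and this ``rounded'' load must be absorbed as exact demand by the parent face, which then overloads its own exit edge by another $\alpha_{\textsc{rl}}\cdot\Dmax$ relative to that — but since $e_f$ is shared, the two overloads stack, giving $2\alpha_{\textsc{rl}}\cdot\Dmax$, plus one more $\Dmax$ from rounding the aggregate flow across $e_f$ itself to an integer multiple of a single path, for a total of $(2\alpha_{\textsc{rl}}+1)\cdot\Dmax$.

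After the pinnings are shown feasible, the second step is the induction itself. Process $G^*$ bottom-up. At face $f$ with children already solved, each child $f'$ has handed up an unsplittable routing of all demands crossing $e_{f'}$, which on the edge $e_{f'}$ amounts to some total flow value $F_{f'}$; we record a new demand of value $F_{f'}$ between the endpoints of $e_{f'}$ on the cycle $f$ (this is a \emph{good} demand since $e_{f'}\in f$). Together with the demands pinned into $f$ and the original demands internal to $f$, this is a ring-loading instance $\RL$ on the cycle $f$; it satisfies the cut-condition because the child's exact flow value $F_{f'}$ is at most $c(e_{f'})+\alpha_{\textsc{rl}}\Dmax$ and the cut-condition on the original outerplanar instance (restricted via the circuit–cut correspondence of Theorem~\ref{theorem:dualofplanarGcycles}) covers every cut of the cycle $f$ up to the additive slack we have budgeted. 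Solve it with the $\alpha_{\textsc{rl}}$-feasible black box, obtaining single paths in $f$; compose these with the paths already fixed inside the children to get single $s$–$t$ paths in $G$ for the demands that were split by the pinning. At the root, there is no parent edge, so nothing further is handed up and the routing is complete.

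Finally, the third step is to account for the capacity violation carefully. For an edge $e$ that is internal to a single face $f$ (not an exit edge of any child and not $e_f$), the only load on $e$ is from the ring-loading solution on $f$, so the violation is at most $\alpha_{\textsc{rl}}\cdot\Dmax\le (2\alpha_{\textsc{rl}}+1)\Dmax$. For a shared edge $e=e_{f'}$ between a child $f'$ and its parent $f$, the load is whatever the child's solution put on $e_{f'}$ (at most $c(e_{f'})+\alpha_{\textsc{rl}}\Dmax$ worth of real flow, itself overloaded) — but note the parent face does \emph{not} add further flow to $e_{f'}$, it only ``pays'' for the child's overload by carrying $F_{f'}$ as a demand routed elsewhere in $f$; so a second, more delicate argument is needed to see that $e_{f'}$ is loaded by at most $c(e_{f'})+(2\alpha_{\textsc{rl}}+1)\Dmax$. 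The slack of $\Dmax$ beyond $2\alpha_{\textsc{rl}}\Dmax$ comes from the fact that $F_{f'}$, the aggregate handed up, can be taken to be at most $\lceil (\text{fractional flow across }e_{f'})/\Dmax\rceil \cdot$-type rounding of the true optimum — i.e.\ one pays a single extra $\Dmax$ once, at the moment of first routing across $e_{f'}$, and never again. Summing over all edges, the unsplittable flow is $(2\alpha_{\textsc{rl}}+1)$-feasible, and since each step — the pinnings, the face-by-face ring-loading calls (which are polynomial by D\"aubel's algorithm), and the tree traversal — is polynomial, the whole construction is polynomial time. Plugging in $\alpha_{\textsc{rl}}\le \tfrac{13}{10}$ gives the $\tfrac{18}{5}$ bound in the abstract.
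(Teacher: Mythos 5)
Your high-level strategy -- process faces of the weak dual tree, use ring-loading as a black box, pin demands to endpoints of a separating chord, and pay a $2\alpha_{\textsc{rl}}+1$ overload -- is indeed close in spirit to what the paper does. But there are concrete gaps in the ``handing up'' mechanism that make the argument as stated incorrect.

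The central problem is the step ``we record a new demand of value $F_{f'}$ between the endpoints of $e_{f'}$.'' The value $F_{f'}$ is the total flow that the child's unsplittable routing placed on the chord $e_{f'}$, which can be as large as $c(e_{f'})+\alpha_{\textsc{rl}}\Dmax$ and hence arbitrarily larger than $\Dmax$. The ring-loading guarantee is an $\alpha_{\textsc{rl}}$-feasible solution relative to the maximum demand \emph{in that ring-loading instance}. Once you inject a single demand of value $F_{f'}\gg\Dmax$, the parent's ring-loading instance has $\Dmax'=F_{f'}$, and the overload you must budget for becomes $\alpha_{\textsc{rl}}\cdot F_{f'}$, not $\alpha_{\textsc{rl}}\cdot\Dmax$. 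This compounds up the tree and the additive error is no longer $O(\Dmax)$. The ``round the aggregate to an integer multiple of a single path for one extra $\Dmax$'' step does not fix this: $F_{f'}$ already routes along a single path, so there is nothing to round; and even if there were, the overshoot stems from the magnitude of the demand, not from splitting. The paper avoids this entirely: it never aggregates. It keeps each demand edge in the ring-loading instances of value at most $\Dmax$ by pinning each bad demand into a \emph{path} of small demands (see \eqref{1stD1option}--\eqref{1stD2option} and the partitions $\W_g$), and it builds the ring-loading instance on the full vertex set $[n]$ with capacities $\crl(\{i,i+1\})=x^*(G^*_{f,f_i})+x^*(\{i,i+1\})$ derived from the current feasible fractional flow, which is what makes the cut-condition argument go through (Lemmas~\ref{lemma:usefullemmafor[i,j]} and \ref{lemma:instancesSatisfyCutCond}).

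Two further, smaller gaps: (i) you do not decide \emph{which} endpoint of $e_f$ to pin a crossing demand to -- the paper makes this choice based on whether the ring-loading path $z_h$ passes through $i_1$ or $i_\ell$, and this choice is what lets the pinned pieces form a path and keeps the cut-condition intact; your phrase ``cut-vertex-like separator'' is not a proof that the pinning is feasible, and a central cut in $G$ may pick up many new crossings from a simultaneous pinning. (ii) Your accounting treats the chord $e_{f'}$ as receiving overload from exactly two ring-loading calls, but you have both the child's actual flow on $e_{f'}$ and the parent's routing of the phantom demand $F_{f'}$ hitting the same edge -- these double-count the load on $e_{f'}$. The paper's label bookkeeping (labels $0\to 1\to 2$, capacity bumped by $\alpha\Dmax$ each step) and the separate Theorem~\ref{theorem:G+Hplanar} for the final all-good instance cleanly separate the two sources of overload and give the $+\Dmax$ from a one-shot routing argument (Claim~\ref{claim:pathgood}), rather than from the vague rounding you invoke.
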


We also prove the following theorem, which is useful in proving Theorem~\ref{theorem:outerplanar}. 
\begin{restatable}{rethm}{theoremgood}\label{theorem:G+Hplanar}
Let $\I(G,c,D,d)$ be an instance of the multicommodity flow problem such that $G$ is outerplanar, all the demand edges are $\emph{good}$, and the cut-condition is satisfied. Then there exists a polynomial time computable $1$-feasible unsplittable flow for the instance.
\end{restatable}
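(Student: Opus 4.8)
The plan is to exploit the fact that when all demand edges are \emph{good}, every demand edge $g=(u,v)$ has a parallel supply edge $e_g\in E$, so $G+H$ is still outerplanar (drawing each $g$ just outside its parallel supply edge on the outer face). I would first reduce to the $2$-vertex-connected case exactly as in Section~\ref{section:planar}: split at cut vertices, noting that a demand incident to two blocks stays good (its parallel supply edge lies in one block). So assume $G$ is $2$-connected outerplanar, hence by Proposition~\ref{fact:weakdualtree} the weak dual $G^*$ is a tree. The idea is to route each demand $g$ along \emph{one of the two arcs} of the outer cycle between its endpoints --- since $g$ is good there is a trivial length-$1$ choice, but we may also route along the complementary arc --- and to set this up as a ring-loading problem on the outer cycle $C$ of $G$. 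Let $c_C$ be the restriction of $c$ to the cycle edges. The first key step is to show that the cut-condition on $(G,H)$ implies the cut-condition for the ring-loading instance $\RL(C, c_C, D, d)$: cuts of the cycle $C$ are pairs of cycle edges, and by Theorem~\ref{theorem:dualofplanarGcycles} together with the tree structure of $G^*$, every such pair either is a central cut of $G$ already or can be ``completed'' inside $G$ to a central cut whose supply edges are a superset; in either case $c_C(\delta_C(S)) \ge c(\delta_G(S')) \ge d(\delta_H(S'))=d(\delta_H(S))$.

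Given that, apply the ring-loading result as a blackbox: there is a polynomial-time $\alpha_{\textsc{rl}}$-feasible unsplittable flow $x$ for $\RL(C,c_C,D,d)$, routing each $g$ along one of its two cycle-arcs with total load at most $c(e)+\alpha_{\textsc{rl}}\Dmax$ on each cycle edge $e$. I now need to convert this into an unsplittable flow in $G$ that respects all edges, including the chords. The plan is a rerouting/uncrossing argument inside $G$: whenever a demand $g$ is routed along the ``long'' arc of $C$, I want instead to push it across chords, using the outerplanar structure to shortcut. Concretely, order the faces via the tree $G^*$ and process demands so that each demand is routed along the unique path in $G$ obtained from its chosen arc by replacing, face by face, the two boundary-arc edges of a face with the third side (the chord) whenever that reduces congestion; because $G^*$ is a tree this shortcutting is well-defined and each chord edge $e$ receives flow only from demands whose arc ``spans'' the corresponding subtree, so the load on $e$ is bounded by the load the ring solution put on the cut of $C$ that $e$ induces, which is $\le c(e)+\alpha_{\textsc{rl}}\Dmax$. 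Since $\alpha_{\textsc{rl}}\le \tfrac{13}{10}<1$? --- no: the claimed bound here is $1$-feasible, so I would instead argue the sharper statement that for \emph{good} instances the congestion can be reduced to $\Dmax$ by the following observation: every demand $g$ has the option of routing on its length-$1$ parallel edge $e_g$, contributing $d(g)\le\Dmax$ to that one edge and $0$ elsewhere, so the only edges that can be overloaded beyond $\Dmax$ are those carrying genuinely rerouted long-arc flow; a direct charging shows each such edge is overloaded by at most $\Dmax$.

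The main obstacle I anticipate is the conversion step: going from an unsplittable solution of the auxiliary ring-loading instance on the outer cycle $C$ back to an unsplittable solution in $G$ without blowing up the congestion on the chords. Controlling the chord loads requires precisely matching each chord $e$ to a cut of $C$ (via Theorem~\ref{theorem:dualofplanarGcycles}: the fundamental cut of the tree edge $e^D$ in $G^*$) and verifying that the set of demands routed across $e$ in $G$ is exactly the set of demands the ring solution sends across that cut of $C$ --- this is where the ``uncrossing along the tree $G^*$'' has to be done carefully, and where the bound of $1$ (rather than $\alpha_{\textsc{rl}}$) must be extracted from the goodness hypothesis. A secondary technical point is ensuring the whole procedure, including the reduction to $2$-connectivity and the face-by-face shortcutting, runs in polynomial time, which is immediate once the combinatorial structure is in place.
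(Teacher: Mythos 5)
Your core reduction does not work. The central claim is that the cut-condition on $(G,H)$ implies the cut-condition for the ring-loading instance $\RL(C,c_C,D,d)$ on the outer cycle with capacities $c_C$ the restriction of $c$ to cycle edges. But the inequality you write is backwards: for a central cut $[i,j]$ of $G$, the set of cycle edges crossing the cut is $\{i-1,i\},\{j,j+1\}$, and $\delta_G([i,j])$ is a \emph{superset} of these two edges (it also contains the chords on the dual path by Lemma~\ref{lemma:usefullemmafor[i,j]}). Hence $c_C(\{i-1,i\})+c_C(\{j,j+1\}) \le c(\delta_G([i,j]))$, not $\ge$. The chord capacities help the cut in $G$ but are absent from the ring, so the ring-loading cut-condition is \emph{stronger} and can fail. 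Concretely: take $G$ a $4$-cycle $v_1v_2v_3v_4$ with a chord $v_1v_3$, unit capacities on the cycle, $c(v_1v_3)=10$, and a single good demand $g=(v_1,v_3)$ of value $5$ (parallel to the chord). The cut-condition holds in $G$ (route $g$ on the chord), but the ring cut $\{v_1,v_2\}$ has cycle capacity $2$ against demand $5$. Relatedly, a good demand parallel to a chord has no ``trivial length-$1$'' cycle arc: its parallel supply edge is not a cycle edge at all, so it is not even available in your ring-loading instance. (The paper's ring-loading construction in Section~\ref{section:outer} avoids precisely this by augmenting the ring capacities with dual-path terms $x^*(G^*_{f,f_i})$; simply restricting $c$ to the cycle is not enough.)

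Even setting aside the cut-condition, the second half of the proposal is not a proof. Ring-loading only guarantees $\alpha_{\textsc{rl}}$-feasibility with $\alpha_{\textsc{rl}}>1$, so to claim $1$-feasibility you would need to establish the ``direct charging'' argument from scratch, and that step is exactly where all the work would have to happen --- as written it is a sketch of a hoped-for lemma, not an argument. The paper's actual proof of Theorem~\ref{theorem:G+Hplanar} does not pass through ring-loading at all: it peels off \emph{ears} of the outerplanar graph one at a time, greedily distributes the good demands on each ear among their parallel supply edges (Claim~\ref{claim:pathgood}, giving $\Dmax$ slack per edge), observes via the cut-condition that at most one ear-edge class can be overloaded (the displayed inequality after~\eqref{gooddemands:cases}), and when one class overflows, pins the excess to the ear endpoints $i_1,i_\ell$ and recurses on a smaller $2$-connected instance; Claim~\ref{claim:residual} keeps the residual instance feasible throughout. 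That local, ear-by-ear structure is what delivers the sharp constant $1$, and it has no counterpart in your proposal.
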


\subsection{Overview of Techniques}
%We first give an overview of the proof of Theorem~\ref{theorem:outerplanar}.
%and defer the details of proof of Theorem~\ref{theorem:G+Hplanar} to the full version.
Demand pinning is a well known technique that has been used to prove similar results. Recall that given a bad demand $g$ with end-points $s,t$ and an arbitrary vertex $u$, a feasible pinning involves creating a new instance that satisfies the cut-condition by replacing $g$ by $g_1,g_2$ with end-points $s,u$ and $u,t$. In many cases, it is possible to show that such a feasible pinning always exists and this immediately implies that the cut-condition is sufficient for feasibly routing all the demands. For example, when $G$ is an outerplanar graph such that $G+H$ is Eulerian, all the demand values are one and there are no good demand edges \footnote{if there is a good demand edge, we can always create a smaller instance by deleting that demand edge and reducing the capacity of the corresponding supply edge by one unit}, then there always exists a feasible pinning.
Unfortunately, the value of demands could be arbitrary in our setting and such a pinning is not guaranteed to exist. Fortunately, this does not rule out a feasible pinning if we are also allowed to increase the edge capacities, and we crucially exploit this fact to design our algorithm.

We deviate from the traditional pinning paradigm described above, and instead of pinning one demand in an iteration, we pin a set of demands simultaneously. As far as we know, this is the first time such an operation has been used in this setting. This allows us to combine the operation of pinning and increasing edge capacities in a careful manner, and ensure that the capacity of an edge is not increased by too much during the whole process. In particular, we show that there exists a careful sequence of pinning and edge capacity increases such that (i) the resulting intermediate instances satisfy the cut-condition (ii) the capacity of no edge is increased by more than $2 \cdot \alpha_{\textsc{RL}} \cdot d_{\max}$ during the whole process. Furthermore, each demand $h$ in the final instance is \emph{good}, i.e.,~there is an edge of $G$ with the same end-points as $h$. This immediately implies that the resulting instance satisfies the conditions of Theorem~\ref{theorem:G+Hplanar}, and that there exists a $(2 \cdot \alpha_{\textsc{RL}}+1)$-feasible unsplittable flow. 
We prove the main result in section~\ref{section:outer}, and postpone the proof of Theorem~\ref{theorem:G+Hplanar} to section~\ref{section:good}.

%\vspace*{-1ex}
\section{Unsplittable Flows in Outerplanar Graphs}\label{section:outer}
We assume that $G=(V,E)$ is a simple outerplanar graph for the remainder of this section.
We describe in section~\ref{parallele} how to handle the case in which $G$ has parallel edges.
As noted in Section~\ref{section:planar}, we work with a fixed planar embedding of $G$ and assume that $G$ is 2-vertex connected. Since $G$ is 2-vertex connected, every face of $G$ is a cycle.
We assume that the vertex set of $G$ is $V=[n]:=\{1,\hdots,n\}$. 
In order to simplify the notation, we assume that $n+1$ and $0$ denote vertices $1$ and $n$, respectively. We denote the set of faces of $G$ by $\F$, and use $\fout$ to denote the unbounded face of $G$.
We assume that $\fout=12\hdots n1$, i.e.~the nodes of the cycle $\fout$ are ordered in a clockwise manner and the edges of $\fout$ are $\{i,i+1\}$ for $i\in[n]$.

We view the nodes of every face $f=i_1i_2\hdots i_\ell i_1\in\F$ as being ordered clockwise around $f$ as well. In particular, we assume that $1\leq i_1<i_2<\hdots<i_j\leq n$, and $1\leq i_{j+1}<i_{j+2}<\hdots<i_\ell<i_1$ for some $j\in[\ell]$.
If it is clear from the context which face is being considered, we will sometimes use $i_{\ell+1}$ and $i_0$ to denote $i_1$ and $i_\ell$, respectively. 

For any $i,j\in [n]$, we use $\closedleft i,j\closedright$ to denote $\{i,i+1,\hdots,j\}$ if $i\leq j$, and $\{i,i+1,\hdots,n,1,\hdots j\}$ if $j<i$.
Analogously, for any $i\neq j\in [n]$ we use $\openleft i,j\openright$ to denote $\{i+1,i+2,\hdots,j-1\}$ if $i<j$, and $\{i+1,i+2,\hdots,n,1,\hdots,j-1\}$ if $i>j$. 
Recall that a set $S \subseteq V$ is called central if both $G[S]$ and $G[V \setminus S]$ are connected. Since $G$ is outerplanar, all central sets are of the form $[i,j]$ for $i,j\in[n]$. Observe that $[i,j]=[n]$ if $i=j+1$,
so we may assume that $i\neq j+1$ when considering central cuts.

Let $G^*$ be the weak dual of $G$. By Proposition \ref{fact:weakdualtree}, $G^*$ is a tree. 
For $f_1,f_2\in \F-\fout$, we use $G^*_{f_1,f_2}$ to denote the unique path between $f_1$ and $f_2$ in $G^*$. We overload notation, and use $G^*_{f_1,f_2}$ as a set of edges, or a set of faces; the meaning will be clear from the context. When considering any $e\in E\setminus\fout$, we also use $e$ to denote the edge corresponding to $e$ in $G^*$; it will be clear from the context which edge and graph ($G$ or $G^*$) we are considering. The following simple yet crucial lemma will prove to be extremely useful.
%We defer its proof to \cref{proofapp_i_j}.
%in is the main reason why all multicommodity flow instances considered during a run of algorithm \ref{algorithmFaceAug} satisfy the cut-condition, and are therefore fractionally feasible.  
\begin{lemma}\label{lemma:usefullemmafor[i,j]}
Let $i,j\in[n]$ with $i\neq j+1$, and $f=i_1i_2\hdots i_\ell i_1\in \F-\fout$. Let $f_1$ and $f_2$ denote the unique faces in $\F-\fout$ that contain $\{i-1,i\}$ and $\{j,j+1\}$, respectively. Then,
\noindent
\begin{enumerate}[label=(\alph*),topsep=0.2ex]
\item  $\delta_G(\closedleft i,j \closedright)= G^*_{f_1,f_2}+\{i-1,i\}+\{j,j+1\}$.\label{lemma:dualpathwithEndpointsinOuter}
\item  $\delta_G(\closedleft i,j \closedright)\subseteq G^*_{f_1,f}\cup G^*_{f,f_2} +\{i-1,i\}+\{j,j+1\}$.\label{lemma:dualpathContainedIn}
\item $\delta_G(\closedleft i,j \closedright)=G^*_{f_1,f}\cup G^*_{f,f_2} +\{i-1,i\}+\{j,j+1\}$ if both $[i,j]$ and $[j+1,i-1]=[n]\setminus[i,j]$ contain a node of $f$.\label{lemma:fcontainedinPath} 
\end{enumerate}
\end{lemma}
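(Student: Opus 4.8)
The plan is to reduce all three parts to Theorem~\ref{theorem:dualofplanarGcycles}, which identifies edge sets of central cuts of $G$ with edge sets of simple circuits in the dual $\GD$, and then to exploit that $G^*$ is a tree (Proposition~\ref{fact:weakdualtree}).

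For part (a), I would first record that since $i\neq j+1$ the set $[i,j]$ is a proper nonempty central set, and that the \emph{only} outer edges lying in $\delta_G([i,j])$ are $\{i-1,i\}$ and $\{j,j+1\}$: an outer edge $\{t,t+1\}$ crosses the cut iff exactly one of $t,t+1$ lies in the arc $[i,j]$, which happens precisely at the two ends of the arc. Now apply Theorem~\ref{theorem:dualofplanarGcycles}: $\{e^D:e\in\delta_G([i,j])\}$ is the edge set of a simple circuit $C$ in $\GD$. A dual edge $e^D$ is incident to the vertex $\fout$ of $\GD$ exactly when $e$ is an outer edge of $G$, so $C$ uses exactly two edges incident to $\fout$ — the duals of $\{i-1,i\}$ and of $\{j,j+1\}$ — and therefore visits $\fout$ exactly once. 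Deleting $\fout$ from $C$ leaves a simple path of $\GD[\VD-\fout]=G^*$ whose two endpoints are the internal faces incident to $\{i-1,i\}$ and to $\{j,j+1\}$, namely $f_1$ and $f_2$. Since $G^*$ is a tree, the only simple path between $f_1$ and $f_2$ in it is $G^*_{f_1,f_2}$. Translating back through the bijection $e\leftrightarrow e^D$ gives $\delta_G([i,j]) = G^*_{f_1,f_2}+\{i-1,i\}+\{j,j+1\}$.

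Part (b) then follows from a pure tree fact: $G^*_{f_1,f_2}\subseteq G^*_{f_1,f}\cup G^*_{f,f_2}$ as edge sets, because for any edge $e$ on the path between $f_1$ and $f_2$, deleting $e$ separates $f_1$ from $f_2$ in $G^*$, so whichever side contains $f$, the path from $f$ to the opposite one of $\{f_1,f_2\}$ must use $e$; combined with part (a) this gives the claimed containment. For part (c) it suffices, again by parts (a) and (b), to show that the given face $f$ actually lies on the path $G^*_{f_1,f_2}$, since then $G^*_{f_1,f}\cup G^*_{f,f_2}$ is exactly $G^*_{f_1,f_2}$ and the containment of (b) becomes an equality. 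Since $f$ has a vertex in $[i,j]$ and a vertex in $[j+1,i-1]$, some edge $e_f$ of the cycle $f$ joins a vertex of $[i,j]$ to a vertex of $[j+1,i-1]$, i.e.~$e_f\in\delta_G([i,j])$. If $e_f$ is a non-outer edge, then by part (a) it is an edge of the path $G^*_{f_1,f_2}$, and since (as an edge of $G^*$) it is incident to $f$, the face $f$ lies on that path. If $e_f$ is an outer edge, it must be one of $\{i-1,i\},\{j,j+1\}$ (the only outer cut edges), so the internal face $f$ incident to it equals $f_1$ or $f_2$, which again lies on the path.

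The main obstacle is part (a): one has to read off the precise shape of the dual simple circuit handed over by Theorem~\ref{theorem:dualofplanarGcycles} — in particular that it meets $\fout$ exactly once, which is where the ``exactly two outer edges in the cut'' count is essential, and that its remaining portion is forced, by $G^*$ being a tree, to be the unique tree-path $G^*_{f_1,f_2}$ rather than some other walk. Once (a) is established, (b) and (c) are short; the only subtlety in (c) is the case distinction on whether the witnessing cut edge of $f$ is a chord or an outer edge.
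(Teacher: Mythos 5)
Your proof is correct and follows essentially the same route as the paper: part~(a) invokes Theorem~\ref{theorem:dualofplanarGcycles} to get a simple circuit $C$ in $\GD$, uses the observation that $\{i-1,i\}$ and $\{j,j+1\}$ are the only outer edges in $\delta_G([i,j])$ to identify the two neighbours of $\fout$ on $C$, and uses that $G^*$ is a tree to force $C-\fout$ to be the tree-path $G^*_{f_1,f_2}$; parts~(b) and~(c) then follow by the same tree observations as in the paper. The only cosmetic differences are the order of the small steps (you argue ``two outer cut edges $\Rightarrow$ $\fout$ visited exactly once'' directly, while the paper first concludes $\fout\in C$ from the tree property; in~(c) you case-split on whether the witnessing edge of $f$ is a chord or an outer edge, while the paper first disposes of $f\in\{f_1,f_2\}$) — the underlying argument is identical.
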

\begin{proof}
By Theorem~\ref{theorem:dualofplanarGcycles}, $\delta_G([i,j])$ corresponds to the set of edges of a simple circuit $C$ in the dual graph of $G$. Since $G^*$ is a tree, $\fout$ must be present in $C$. 
Thus, $C-\fout$ is a path $G^*_{f^\prime_1,f^\prime_2}$ whose endpoints $f^\prime_1,f^\prime_2$ are faces adjacent to $\fout$. Observe that $\{i-1,i\}$ and $\{j,j+1\}$ are the only edges in $\delta_G([i,j])$ that are incident to $\fout$, and therefore, $\{f_1,f_2\}=\{f_1^\prime,f_2^\prime\}$. It follows that $G^*_{f_1,f_2}=\delta_G([i,j])-\{i-1,i\}-\{j,j+1\}$. This proves part~\ref{lemma:dualpathwithEndpointsinOuter}. Since $G^*$ is a tree, $G^*_{f_1,f_2}\subseteq G^*_{f_1,f}\cup G^*_{f,f_2}$. The statement of part~\ref{lemma:dualpathContainedIn} then follows from part~\ref{lemma:dualpathwithEndpointsinOuter}.

To prove part \ref{lemma:fcontainedinPath}, it suffices to show that $f\in G^*_{f_1,f_2}$, since that would imply, by the fact that $G^*$ is a tree, that $G^*_{f_1,f_2}=G^*_{f_1,f}\cup G^*_{f,f_2}$; in which case the result then follows from part~\ref{lemma:dualpathwithEndpointsinOuter}. If $f=f_1$ or $f=f_2$, then clearly $G^*_{f_1,f_2}=G^*_{f_1,f}\cup G^*_{f,f_2}$. Assume that $f_1\neq f\neq f_2$. Then neither $\{i-1,i\}$ nor $\{j,j+1\}$ are incident to $f$. Since both $[i,j]$ and $[j+1,i-1]$ contain a node of $f$, there is an edge $e$ of $f$ in $\delta_G(\closedleft i,j \closedright)$,
and since $e\neq\{i-1,i\}$ and $e\neq\{j,j+1\}$, we get from part~\ref{lemma:dualpathwithEndpointsinOuter} that $e\in G^*_{f_1,f_2}$. Therefore $f\in G^*_{f_1,f_2}$, as desired.
\end{proof}

%The proof of the above can be found in \cref{proofapp_i_j}.

As mentioned earlier, our plan is to first transform the given multicommodity flow instance into a more suitable instance consisting of only good demands, which allows us to invoke Theorem~\ref{theorem:G+Hplanar} in order to prove our main result.
\begin{restatable}{rethm}{theoremnew}\label{theorem:Newinstance}
Given an outerplanar graph $G=(V,E)$ and a multicommodity flow instance $\I(G,c,D,d)$ such that the cut-condition is satisfied, there exists a polynomial time algorithm that computes an instance $\I(G,c',D',d')$ and a partition $\{\W_g\}_{g\in D}$ of $D'$ such that the following properties hold.
\begin{enumerate}[label={\textnormal{(P\arabic*)}}, topsep=0.2ex, noitemsep, leftmargin=*]
\item \textnormal{(Feasibility)} The cut-condition is satisfied for $\I(G,c',D',d')$.
\label{cutcondition}
\item \textnormal{(Capacity)} $c'(e)\leq c(e)+\alpha_{\textsc{RL}}\cdot\Dmax$ for $e\in \fout$ and $c'(e)\leq c(e)+2 \cdot \alpha_{\textsc{RL}}\cdot\Dmax$ for $e\in E\setminus \fout$.\label{capproperty}
\item \textnormal{(Partition-into-paths)} For each $g \in D$, all the edges in $\W_g$ are \emph{good} and form a path between the endpoints of $g$. Furthermore, $d'(h)=d(g)$ for each $h \in \W_g$.\label{pathproperty}
\end{enumerate}
\end{restatable}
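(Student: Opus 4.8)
The plan is to induct on the number of bounded faces of $G$, peeling off one ``ear'' at a time. By Proposition~\ref{fact:weakdualtree}, $G^*$ is a tree. If $G$ has only one bounded face then $G$ is a cycle and $\I(G,c,D,d)$ is a ring-loading instance: the blackbox ring-loading algorithm returns an unsplittable routing, i.e.\ a path $P_g$ in $G$ for each $g\in D$, with load $\ell(e):=\sum_{g:\,e\in P_g}d(g)\le c(e)+\alpha_{\textsc{RL}}\Dmax$ on every edge. We output $\W_g:=E(P_g)$, each such edge regarded as a good demand of value $d(g)$, then $D':=\bigsqcup_{g\in D}\W_g$ and $c'(e):=\ell(e)$. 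Every edge lies on $\fout$, so (P2) holds; for a central (two-edge) cut $\{e_1,e_2\}$ of the cycle one has $d'(\delta_{H'}(\{e_1,e_2\}))=\ell(e_1)+\ell(e_2)=c'(\delta_G(\{e_1,e_2\}))$, so (P1) holds; (P3) is immediate.

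For the inductive step, pick a leaf $f$ of $G^*$; its unique incident edge in $G^*$ is a chord $e_f=\{a,b\}$ of $G$ not on $\fout$, all other edges of $f$ lie on $\fout$, and every vertex of $I_f:=\openleft a,b\openright$ has degree $2$ in $G$. Let $G'=G-I_f$, which is again $2$-connected outerplanar, has one fewer bounded face, has $e_f$ on its outer face, and has $\fout(G')$ agreeing with $\fout(G)$ on $E(G')\setminus e_f$. Partition the demands into $D_{\mathrm{in}}$ (both endpoints in $[a,b]$), $D_{\mathrm{cross}}$ (one endpoint in $I_f$, the other outside $[a,b]$), and $D_{\mathrm{rest}}$ (no endpoint in $I_f$). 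The heart of the argument is the following claim: one can choose a \emph{pin} $\sigma(g)\in\{a,b\}$ for each $g=(u,w)\in D_{\mathrm{cross}}$, and a split $c(e_f)=\gamma+\beta$ with $\gamma,\beta\ge 0$, so that \emph{(i)} the cycle instance $\I_f$ on $f$ — arc edges keeping their $c$-capacities, the chord with capacity $\gamma$, demand set $D_{\mathrm{in}}\cup\{(u,\sigma(g)):g\in D_{\mathrm{cross}}\}$ — satisfies the cut-condition, and \emph{(ii)} the residual instance $\I'$ on $G'$ — capacities from $c$ except $c(e_f)$ lowered to $\beta$, demand set $D_{\mathrm{rest}}\cup\{(\sigma(g),w):g\in D_{\mathrm{cross}}\}$ — satisfies the cut-condition. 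This is the point where several demands are pinned simultaneously.

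Granting the claim, $\I_f$ and $\I'$ both satisfy the cut-condition, so we run ring-loading on $\I_f$ to get paths $P^f_h$ with loads $\ell_f(e)\le c_{\I_f}(e)+\alpha_{\textsc{RL}}\Dmax$, and recurse on $\I'$ to get $(c''',D''',\{\W^{\mathrm{rec}}_h\})$. Assemble: $\W_g:=E(P^f_g)$ for $g\in D_{\mathrm{in}}$; $\W_g:=\W^{\mathrm{rec}}_g$ for $g\in D_{\mathrm{rest}}$; and for $g=(u,w)\in D_{\mathrm{cross}}$, $\W_g$ is the good $u$--$w$ path obtained from the concatenation of $P^f_{(u,\sigma(g))}$ and $\W^{\mathrm{rec}}_{(\sigma(g),w)}$ (shortcutting if needed). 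This yields (P3), and $D'=\bigsqcup_g\W_g$ is contained in the $P^f$-edges together with $D'''$. Set $c'(e):=\ell_f(e)$ on the arc edges of $f$, $c'(e):=c'''(e)$ on $E(G')\setminus e_f$, and $c'(e_f):=c'''(e_f)+\gamma+\alpha_{\textsc{RL}}\Dmax$. Since $e_f\in\fout(G')$, the recursion gives $c'''(e_f)\le\beta+\alpha_{\textsc{RL}}\Dmax$, whence $c'(e_f)\le(\beta+\gamma)+2\alpha_{\textsc{RL}}\Dmax=c(e_f)+2\alpha_{\textsc{RL}}\Dmax$; the arc edges satisfy $c'(e)=\ell_f(e)\le c(e)+\alpha_{\textsc{RL}}\Dmax$ and lie on $\fout(G)$; the remaining edges inherit the bound from the recursion — so (P2) holds. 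For (P1), by Lemma~\ref{central} it suffices to check a central cut $[i,j]$ of $G$; since $I_f$ has degree-$2$ vertices, $\delta_G([i,j])$ is partitioned into its arc edges of $f$, possibly $e_f$, and $\delta_{G'}([i,j]\cap V(G'))\setminus e_f$ (Lemma~\ref{lemma:usefullemmafor[i,j]}), so the $D'$-demand crossing $[i,j]$ is at most $\sum_{e\in E(f)\cap\delta_G([i,j])}\ell_f(e)$ plus the $D'''$-demand crossing $[i,j]\cap V(G')$; comparing this with $c'(\delta_G([i,j]))$ using $\ell_f(e_f)\le\gamma+\alpha_{\textsc{RL}}\Dmax$ (the ring-loading guarantee on $\I_f$) and the inductive (P1) for $\I'$ gives the inequality. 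Every step is polynomial time.

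The main obstacle is the claim above: producing the simultaneous pinning $\sigma$ together with a split $\gamma+\beta\le c(e_f)$ keeping both $\I_f$ and $\I'$ feasible, the tightness of the budget $\gamma+\beta\le c(e_f)$ being the delicate part. I would start from a feasible fractional flow $x$ for $(G,c,D)$, which exists by Theorem~\ref{OS} since all vertices of $G$ lie on one face. Because every vertex of $I_f$ has degree $2$, each $x_g$-path for a crossing demand $g=(u,w)$ first reaches $V(G')$ at $a$ or at $b$; recording these fractions ``projects'' $x$ onto a feasible fractional flow on the cycle $f$ in which each crossing demand is split between $(u,a)$ and $(u,b)$, and onto one on $G'$ in which it is split between $(a,w)$ and $(b,w)$, with the $x$-load on $e_f$ splitting as $\gamma^\ast+\beta^\ast\le c(e_f)$. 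It remains to round each crossing demand's split to a single pin $\sigma(g)$ without increasing the chord requirement; here I would exploit that, by Lemma~\ref{lemma:usefullemmafor[i,j]}, the only cuts of $\I_f$ affected by the rounding are the ``boundary'' arcs $[a,j]$ and $[i,b]$ (and symmetrically the cuts of $\I'$ crossing $e_f$), which have a ring structure — so the rounding can be carried out by a single ring-loading solve on the ear cycle with $a$ and $b$ contracted, which simultaneously yields $\sigma$ and routes the $u$-sides of the crossing demands within the arc capacities plus $\alpha_{\textsc{RL}}\Dmax$, while the residual instance $\I'$ is then checked feasible with the complementary chord capacity again via Lemma~\ref{lemma:usefullemmafor[i,j]}. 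Making this rounding work cleanly, and coordinated between $\I_f$ and $\I'$, is where I expect the real difficulty; the rest is bookkeeping driven by the tree structure of $G^*$ and Lemma~\ref{lemma:usefullemmafor[i,j]}.
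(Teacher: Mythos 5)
Your high-level plan mirrors the paper's: peel off a leaf face $f$ of the weak dual, pin a whole batch of crossing and in-ear bad demands simultaneously via an unsplittable ring-loading call, charge the $\alpha_{\textsc{RL}}\cdot\Dmax$ slack to edges of $f$, and observe each edge is charged at most twice. But the execution is genuinely different, and the difference is precisely where your proof has a gap. The paper never splits $G$ into $\I_f$ on the ear and $\I'$ on $G' = G - I_f$, and hence never has to decide how to split $c(e_f)$ as $\gamma + \beta$. Instead, each iteration builds a ring-loading instance on the \emph{entire} outer cycle $[n]$, with synthetic capacities $\crl(\{i,i+1\}) := x^*(G^*_{f,f_i}) + x^*(\{i,i+1\})$ derived from a current feasible flow $x^*$ and dual-tree paths rooted at $f$. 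Lemma~\ref{lemma:usefullemmafor[i,j]} shows $\crl$ dominates $x^*$ on every central cut of the ring, so the ring instance is feasible; its $\alpha$-feasible unsplittable solution directly picks the pins, and a second application of Lemma~\ref{lemma:usefullemmafor[i,j]}(c) shows that the pinned instance $\I(G, x^* + \alpha_{\textsc{RL}}\Dmax\cdot\bon_f, D_f, d^{(k+1)})$ still satisfies the cut-condition. The graph stays whole; only face-$f$ edges get a capacity bump, and an edge-label invariant ($\fout$ edges start at $1$, interior edges at $0$, cap $2$) tracks the total charge and produces exactly the asymmetric (P2) bound.

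The gap in your proposal is the central claim: that there is an integral pin $\sigma$ together with a budget split $\gamma + \beta \le c(e_f)$ making both $\I_f$ and $\I'$ feasible. Your suggested mechanism --- a ring-loading solve on the ear cycle with $a$ and $b$ contracted --- does not deliver this: after contracting, $e_f$ is a loop and drops out, so the solve produces no value of $\gamma$; the demands $D_{\mathrm{in}}$ (both endpoints in $[a,b]$) and the chord's role in routing them are not modeled; and it is not argued why a rounding that is good for the ear cuts (which push $\gamma$ up) is simultaneously good for the $G'$ cuts crossing $e_f$ (which push $\beta$ up) with $\gamma + \beta$ bounded by $c(e_f)$ rather than, say, $c(e_f) + \alpha_{\textsc{RL}}\Dmax$. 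The fractional split $\gamma^* + \beta^* \le c(e_f)$ you extract from the flow $x$ is sound, but integrality is nontrivial precisely because the two subinstances compete for the chord; the paper's ``unfolding onto $[n]$'' formulation sidesteps this coordination problem entirely. Your (P2) arithmetic and the gluing for (P3) are fine conditionally, and the (P1) cut decomposition is plausible, but as written the proposal reduces Theorem~\ref{theorem:Newinstance} to an unproved claim that would still need an argument of comparable difficulty to the paper's Lemma~\ref{lemma:instancesSatisfyCutCond}.
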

We devote the next subsections to describing the proof of Theorem~\ref{theorem:Newinstance}. 
Before that, we see how Theorems~\ref{theorem:G+Hplanar} and \ref{theorem:Newinstance} together imply Theorem~\ref{theorem:outerplanar}, which we restate.
\theoremmain*
\begin{proof} 
Given an instance $\I(G,c,D,d)$ satisfying the conditions of the theorem, we construct a feasible instance $\I(G,c',D',d')$ using Theorem~\ref{theorem:Newinstance}.
The (Partition-into-paths) property \ref{pathproperty} in Theorem~\ref{theorem:Newinstance} implies that every demand in $D'$ is good. 
It then follows from Theorem~\ref{theorem:G+Hplanar} that $\I(G,c',D',d')$ admits a $1$-feasible unsplittable flow $z$. For a demand $h \in D'$, let $z(h)$ be the edge set of the path on which $z$ routes $h$. We construct an unsplittable flow $y$ for instance $\I(G,c,D,d)$ as follows: for $g \in D$, let $p_g$ be a path between the end-points of $g$ using the edges in $\cup_{h \in W_g} z(h)$. The (Partition-into-paths) property \ref{pathproperty} ensures that such a path $p_g$ always exists. We set $y_g(p_g)=d(g)$ and $y_g(p)=0$ for all $p \in \mathcal{P}_{g} \setminus \{p_g\}$. For each $g \in D$ and $h \in W_g$, The (Partition-into-paths) property \ref{pathproperty} states that $d'(h)=d(g)$. Hence, the following holds for all $e\in E$: 
%and hence it follows that $y(e) \leq z(e)$ for all $e \in E$. For all $e \in E$, we have:
\begin{equation*}
 y(e) \leq z(e) \leq c'(e)+d_{\max} \leq c(e) + 2 \cdot \alpha_{\textsc{RL}}\cdot\Dmax+ d_{\max}=c(e) + (2 \cdot \alpha_{\textsc{RL}}+1) \cdot d_{\max}.
\end{equation*}
The third inequality follows from the (Capacity) property ~\ref{capproperty}.
%in Theorem~\ref{theorem:Newinstance}.
We conclude that $y$ is a ($2 \cdot \alpha_{\textsc{RL}}+1)$-feasible unsplittable flow for $\I(G,c,D,d)$, proving the theorem. 
\end{proof}
We now describe our main algorithm which will lead to a proof of Theorem~\ref{theorem:Newinstance}.  
%\vspace*{-1ex}
\subsection{Description of the Algorithm}\label{descriptionalgo}
%\vspace*{-1ex}
\subsubsection{Overview:}
In this section, we give a detailed description of the algorithm. The input of the algorithm is an instance $\I(G,c,D,d)$ of multicommodity flow satisfying the cut-condition. We also assume that we are given an algorithm $\Alg$ which given a feasible ring-loading instance outputs an $\alpha$-feasible unsplittable flow in polynomial time.
%\david{(is the last sentence unambiguous?)}.

The algorithm works in iterations. At the end of every iteration, we use a \emph{pinning} operation to create a new (feasible) instance. Recall that a demand $h$ is \emph{bad} if no edge of $G$ has the same end-points as $h$. In each iteration, we pin a carefully chosen set of bad demands and increase the capacity of carefully chosen edges to create a new feasible instance. The algorithm terminates when there are no bad demands left. 

To keep track of the bad demands and edge capacity increases, we use edge-labels. The label of an edge changes during the course of the algorithm and takes value in $\{0,1,2\}$. Initially, all edges on the unbounded face $\fout$ are assigned a label of 1, while the rest of the edges are assigned a label of 0. The label of an edge cannot decrease during the course of the algorithm. The algorithm terminates when all the edges have a label of 2. The label of an edge keeps track of the increase in capacity of an edge as follows: whenever the label of an edge $e$ increases by one, we increase the capacity of $e$ by $\alpha \cdot \Dmax$. Since the label of any edge can only increase by at most 2, we do not increase the capacity of any edge by more than $2 \cdot \alpha \cdot \Dmax$ during the course of the algorithm. 

We now list some useful properties of edge labels, which are also summarized in Lemma~\ref{propositionCycle2}. 
%The edges with label 1 form a simple cycle and the label of an edge is 0 if and only if both of its end-points are on the cycle formed by label 1 edges.
The edges with label 1 form a simple cycle $C$ and the label of an edge $e\in E$ is 0 if and only if both of its end-points are on $C$ and $e\notin C$. All the edges which do not have label 1 or 0 have a label of 2. The algorithm terminates when no edge has label 1. This implies that there are no edges of label 0 and all the edges have label 2 at the termination. It is easy to see that these invariants hold at the beginning of the algorithm.

The cycle bounded by label 1 edges is used to keep track of the bad demands. More precisely, both the end points of a bad demand are incident on the cycle formed by label 1 edges. At the termination of the algorithm, there are no label 1 edges and hence the final instance comprises only of \emph{good} demands.
%We next give a detailed description of an iteration of the algorithm.
%\vspace*{-1ex}
\subsubsection{Description of an Iteration:}
At the start of the algorithm we define the following:
%\vspace*{-2ex}
\begin{equation*}
c^{(1)}:=c~;~D^{(1)}:=D~;~\W^{(1)}_g:=\{g\}~\text{for}~g\in D~;
~\labone(e):=\begin{cases}
1,&\:\text{if }~e\in\fout\\
0,&\:\text{otherwise.}
\end{cases}   
\end{equation*}
Let $x^{(1)}=\sum_{g\in D}x^{(1)}_g$ be a feasible flow of $\I(G,c,D,d)$.\footnote{By Theorem \ref{OS}, the cut-condition is sufficient for the existence of a feasible flow on outerplanar graphs.} We now give a detailed description of an iteration of the algorithm.
Suppose we are at the start of iteration $\ite$ of the algorithm. 
Let $\Dp$ be the current set of demands and $\{\Wk_g\}_{g \in D}$ be the partition of $\Dp$.
We note that the demand edges in $\W_g^{(k)}$ form a walk between the end points of $g$ for $g \in D$. 
Let $x^{(k)}$ be the feasible flow for the current instance $\I(G,\cp,\Dp,\dpr)$, 
and $\labk$ be the current labels of the edges. The current demand values $\dpr$ are always given by $\dpr(h)=d(g)$, if $h\in\Wk_g$.

If there are no label 1 edges, then the algorithm returns the instance $\I(G,c'=\cp,D'=\Dp,d'=\dpr)$ and the set $\{\W_g=\W^{(k)}_g\}_{g \in D}$, and terminates.

Otherwise, let $C^{(k)}$ be the (simple) cycle formed by the edges $e\in E$ with $\labk(e)=1$ and $G[C^{(k)}]$ be the induced graph on the vertices of $C^{(k)}$. Observe that $G[C^{(1)}]=G$.
We overload notation and use $C^{(k)}$ to denote both the edge and vertex set of the cycle $C^{(k)}$. 

Since $G[C^{(k)}]$ is 2-vertex connected and outerplanar, its weak dual $G^{*}[C^{(k)}]$ is a tree.
Also, the edges of $G^{*}[C^{(k)}]$ correspond precisely to the edges $e\in E$ with $\labk(e)=0$.
Let $f$ be a leaf vertex of $G^{*}[C^{(k)}]$. Note that $f$ corresponds to a bounded face of $G$.

If $G[C^{(k)}]=C^{(k)}$, then $f=C^{(k)}$ and $\labk(e)=1$ for each $e\in f$. 
Otherwise, $\labk(e^\prime)=0$ for precisely one edge $e^\prime\in f$ and $\labk(e)=1$ for each $e\in f-e^\prime$.
Thus, suppose that $f$ is given by the vertex sequence $f=i_1 i_2 \ldots i_\ell i_1$ such that $\ell \geq 3$ and the label of all edges of $f$ distinct from $\{i_\ell,i_1\}$ is 1. 
The label of edge $\{i_\ell,i_1\}$ could be 0 or 1.

At the end of the current iteration, we increase the label of all edges of $f$ by 1. Hence the only edge of $f$ which could have a label of 1 after this iteration is $\{i_\ell,i_1\}$. This implies that in iteration $(k+1)$, either there are no edges of label 1, or the cycle formed by label 1 edges is $(C^{(k)} \setminus E(f)) \cup\{ \{i_\ell,i_1\}\}$. As mentioned in the last section, we want to maintain the invariant that both the end-points of a bad demand are incident on the cycle formed by the label 1 edges. To this end, we will pin a set of demands so that no bad demands are incident on the vertices $\{i_2,i_3,\ldots,i_{\ell-1}\}$ after the end of the current iteration.

Let $\Dp_1\subseteq \Dp$ be the set of bad demands containing one end-point in $\{i_2,i_3,\ldots,i_{\ell-1}\}$ and one end-point in $C^{(k)} \setminus V(f)$.\footnote{By outerplanarity, any demand with an endpoint in $\{i_2,i_3,\ldots,i_{\ell-1}\}$ and an endpoint in $C^{(k)} \setminus V(f)$ is bad.} 
Let $\Dp_2\subseteq \Dp$ be the set of bad demands both whose end-points are incident on $\{i_1,i_2,\ldots,i_\ell\}$.
Define:
%\vspace*{-1ex}
\begin{equation}\label{eq:defx}
x^*:=\sum_{h\in\Dp_1\cup\Dp_2}x^{(k)}_h.
%\vspace*{-1ex}
\end{equation}
We would like to find an appropriate pinning of the demands in $\Dp_1 \cup \Dp_2$, and to this end, we create a ring-loading instance as follows. For $i\in[n]$, let $f_i$ be the unique face in $\F-\fout$ containing the edge $\{i,i+1\}$. We define the following ring-loading instance,
%\vspace*{-1ex}
\begin{equation}
\mathcal{RL}([n],\crl,\Dp_1\cup\Dp_2,\dpr)~\text{where}~\crl(\{i,i+1\}):=x^*\big(G^*_{f,f_i}\big)+x^*(\{i,i+1\}).\label{ringLoadfeasible}
%\label{definingEdgecapacitiesforRL}  
\end{equation}
See Figure \ref{ringloadingdrawing} for an example. In Lemma~\ref{lemma:instancesSatisfyCutCond} we show that the cut-condition is satisfied for this ring-loading instance, and hence we can use $\Alg$
to obtain an $\alpha$-feasible unsplitable flow $z=\sum_{h\in \Dp_1\cup\Dp_2}z_h$. 
\begin{figure}[t]%{0.5\textwidth}
        \includegraphics[width=1\linewidth]{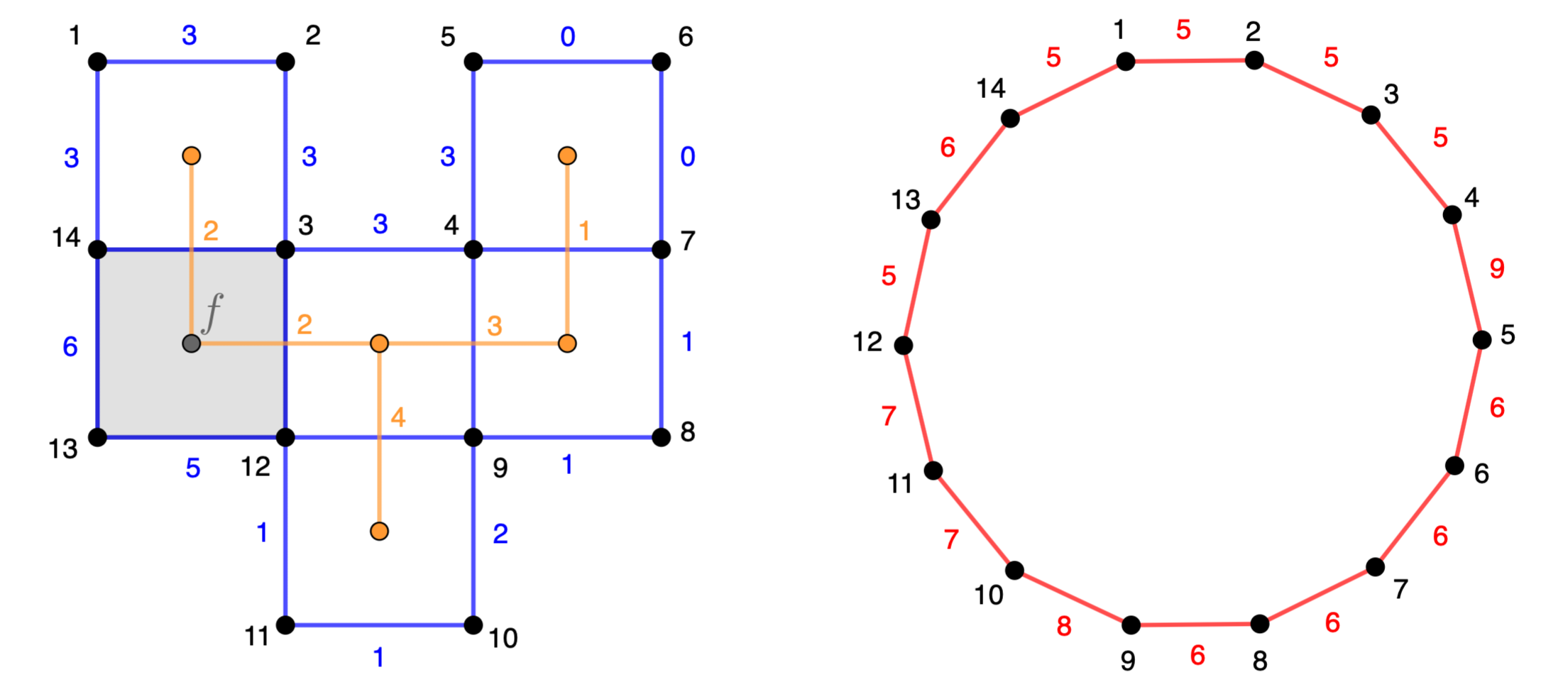}
        \vspace*{1ex}
       \caption{ The figure on the left depicts an outerplanar graph $G$ where each edge $e\in E$ is labelled by $x^*(e)$, where $x^*$ is defined as in \eqref{eq:defx}. 
        The shaded region represents the face $f$ being considered during the iteration, and the yellow tree corresponds to the weak dual $G^*$ of $G$.
        The figure on the right depicts the resulting capacities $\crl$ assigned to the ring loading instance $\mathcal{RL}([n],\crl,\Dp_1\cup\Dp_2,\dpr)$. } \label{ringloadingdrawing}
        %\vspace*{-2ex}
\end{figure}
%and create a ring-loading instance $\mathcal{RL}([n],\crl,\Dp_1\cup\Dp_2,\dpr)$\label{ringLoadfeasible}.
For $h \in \Dp_1 \cup \Dp_2$, we overload notation and use $z_h$ to also denote the (unique) path along which $h$ is routed in $z$. 
The manner in which we pin a demand $h$ depends on whether the path $z_h$ uses the vertices $i_1,i_\ell$ or not. 
%\david{(good spot for a figure if time allows)}
More precisely, we replace each demand $h \in \Dp_1 \cup \Dp_2$ by a set of demands $\Dp_h$ as described below.

\noindent\textbf{Case 1: \label{FirstForstarts}} Suppose that $h\in\Dp_1$ and $V(h)=\{s,t\}$; without loss of generality, we may assume that $s=i_p$ for some $1<p< \ell$, and $t\in C^{(k)} \setminus V(f)$. If $i_1 \in z_h$, then
%\vspace*{-1ex}
\begin{equation}
  \Dp_h := \{\{i_j,i_{j+1}\}:1\leq j< p\}\cup\{\{i_1,t\}\}.\label{1stD1option}  
\end{equation}
Otherwise, if $i_\ell \in z_h$, then
%\vspace*{-1ex}
\begin{equation}
  \Dp_h :=\{\{i_j,i_{j+1}\}:p\leq j< \ell\}\cup\{\{i_\ell,t\}\}.\label{2ndD1option}  
\end{equation}
\noindent\textbf{Case 2:\label{SecondForStarts}} Now suppose that $h \in \Dp_2$ and $V(h)=\{s,t\}$; without loss of generality, we may assume that $s=i_p$ and $t=i_q$ for $1\leq p< q\leq\ell$. If both $i_1,i_\ell \in z_h$, then: 
%\vspace*{-1ex}
\begin{equation}
\Dp_h := \{\{i_j,i_{j+1}\}:1\leq j< p\}\cup\{\{i_\ell,i_1\}\}\cup\{\{i_j,i_{j+1}\}:q\leq j< \ell\}. \label{2ndD2option}
\end{equation}
%\david{(Should we either write $\Dp_h := \{\{i_j,i_{j+1}\}:1\leq j< p\}\cup\{\{i_j,i_{j+1}\}:q\leq j< \ell\}+\{i_\ell,1\}$, or add another bracket to $\{i_\ell,i_1\}$ in the above equation?)}
Otherwise; if either $i_1\notin z_h$ or $i_\ell \not \in z_h$,\footnote{It may be the case that $|\{i_1,i_\ell\}\cap V(z_h)|=1$, if either $p=1$ or $q=\ell$.}
then:
%\vspace*{-1ex}
\begin{equation}
\Dp_h := \{\{i_j,i_{j+1}\}:p\leq j< q\}. \label{1stD2option}   
\end{equation}
In \cref{pr3} of Lemma~\ref{propositionCycle2} we show that there is at most one bad demand edge in $\W_g^{(k)}$. If $h \in \W_g^{(k)}$ for $h \in \Dp_1 \cup \Dp_2$, then we update $\Wk_g$ by removing $h$ and adding $\Dp_h$. 
%\vspace*{-1ex}
\begin{alignat}{2}
\Wkplus_g&\leftarrow (\Wk_g-h)\cup\Dp_h&&,  \hspace{3 mm}\text{if there exists $h\in \Wk_g$ with $h\in\Dp_1\cup\Dp_2$ }; \nonumber\\
\Wkplus_g&\leftarrow \Wk_g&&, \hspace{3 mm}\text{otherwise.}\label{updateWg}
\end{alignat}
We update $\Dp$ as follows: 
\begin{equation}
    \Dpplus\leftarrow \cup_{g\in D}\Wkplus_g \label{updateDp}
\end{equation}
We also update $d^{(k+1)}$ accordingly. 
Let $\bon_f\in\{0,1\}^E$ be the characteristic vector of the edge set of $f$. 
In Lemma~\ref{lemma:instancesSatisfyCutCond}, we show that the cut-condition is satisfied for the following instance:
\begin{equation}
  \mathcal{I}\left(G,x^*+\alpha \cdot \Dmax\cdot\bon_f,D_f,d^{(k+1)}\right) \hspace{3 mm} \text{where} \hspace{3 mm}D_f:=\cup_{h\in \Dp_1\cup\Dp_2}\Dp_{h}. \label{cutconditionEndofIteration} 
\end{equation}
We compute a feasible flow $y=\sum_{r\in D_f}y_r$ for this instance and set,
\begin{equation}
 x^{(k+1)}\leftarrow x^{(k)}-x^*+y \label{newflowx}   
\end{equation}
%\david{(undefined $x^\prime$)}
As already mentioned, we increase the label of all the edges of $f$ by 1, and  also increase their capacity by $\alpha \cdot \Dmax$. More precisely,
\begin{equation}
c^{(k+1)} \leftarrow \cp+\alpha \cdot \Dmax\cdot\bon_f \hspace{3 mm} \text{and} \hspace{3 mm} \lab^{(k+1)}\leftarrow \labk+\bon_f. \label{updateflowcapacityandlabel}   
\end{equation}
This completes the description of the algorithm. 
See Figure \ref{images_algorithm} for an example that illustrates how the demand edges could evolve on each iteration of the algorithm.
%\newpage
\begin{figure}[ht!]
    \centering
    \vspace*{1.5ex}
    \begin{subfigure}{1\textwidth}
    \includegraphics[width=0.95\linewidth]{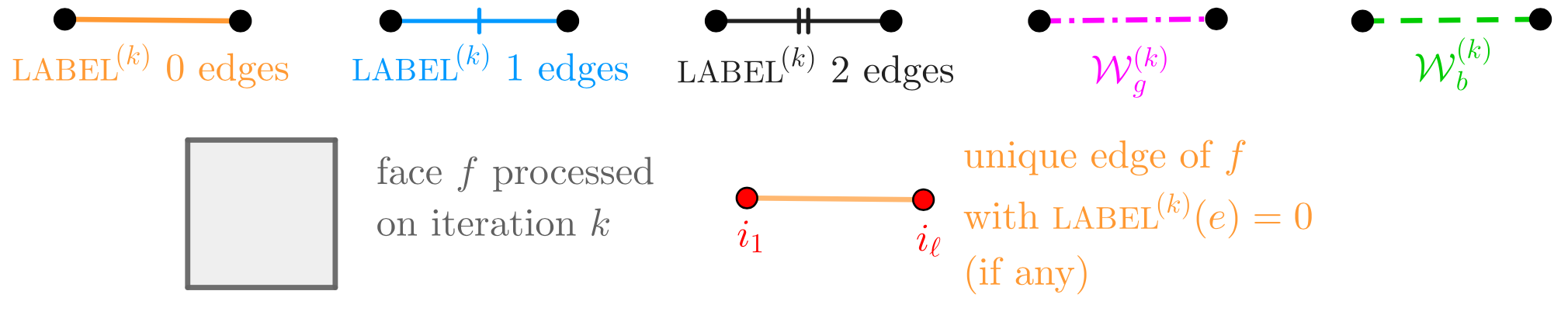}
    \centering
    \end{subfigure}
    \vspace*{2ex}
    \begin{subfigure}{0.5\textwidth}
        \includegraphics[width=0.9\linewidth]{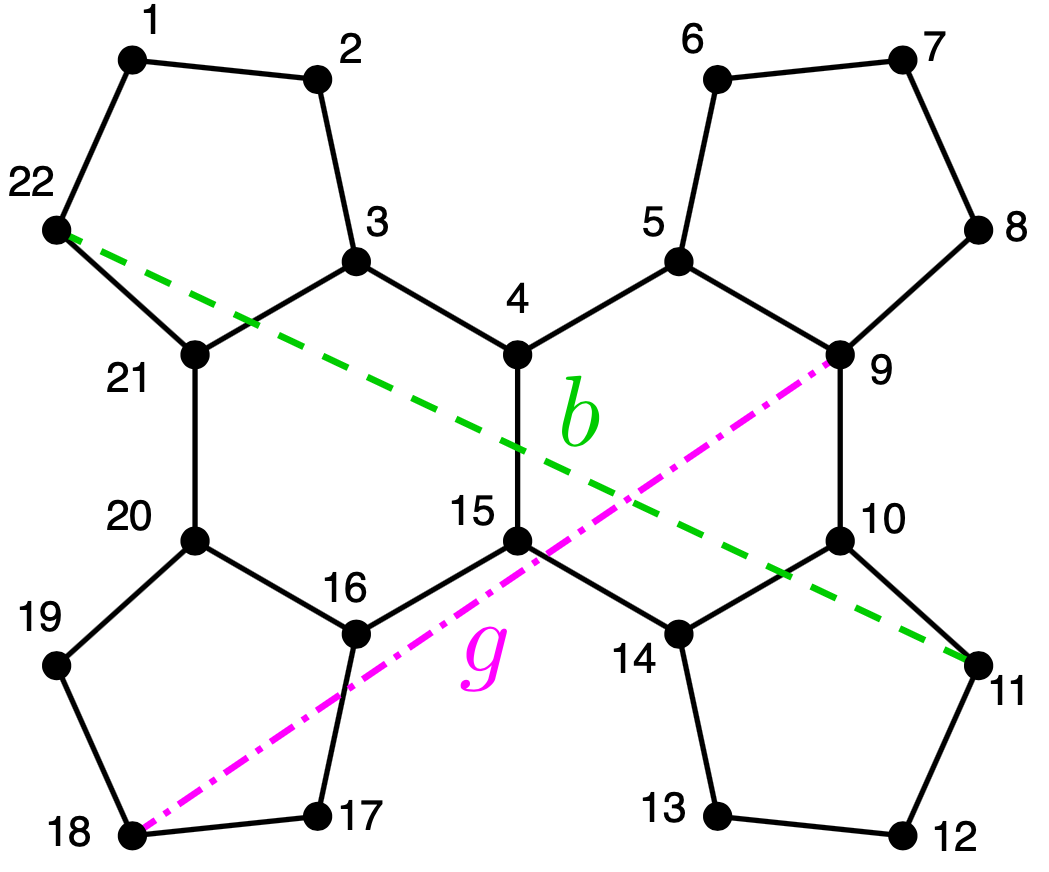}
        \centering
        %\vspace*{-0.22cm}
        \caption*{{\small Original instance with $D=\{\textcolor{verde}{b},\textcolor{morado}{g}\}$}}
    \end{subfigure}
    \vspace*{2ex}
    \hspace*{-3ex}
    \begin{subfigure}{0.5\textwidth}
        \includegraphics[width=0.9\linewidth]{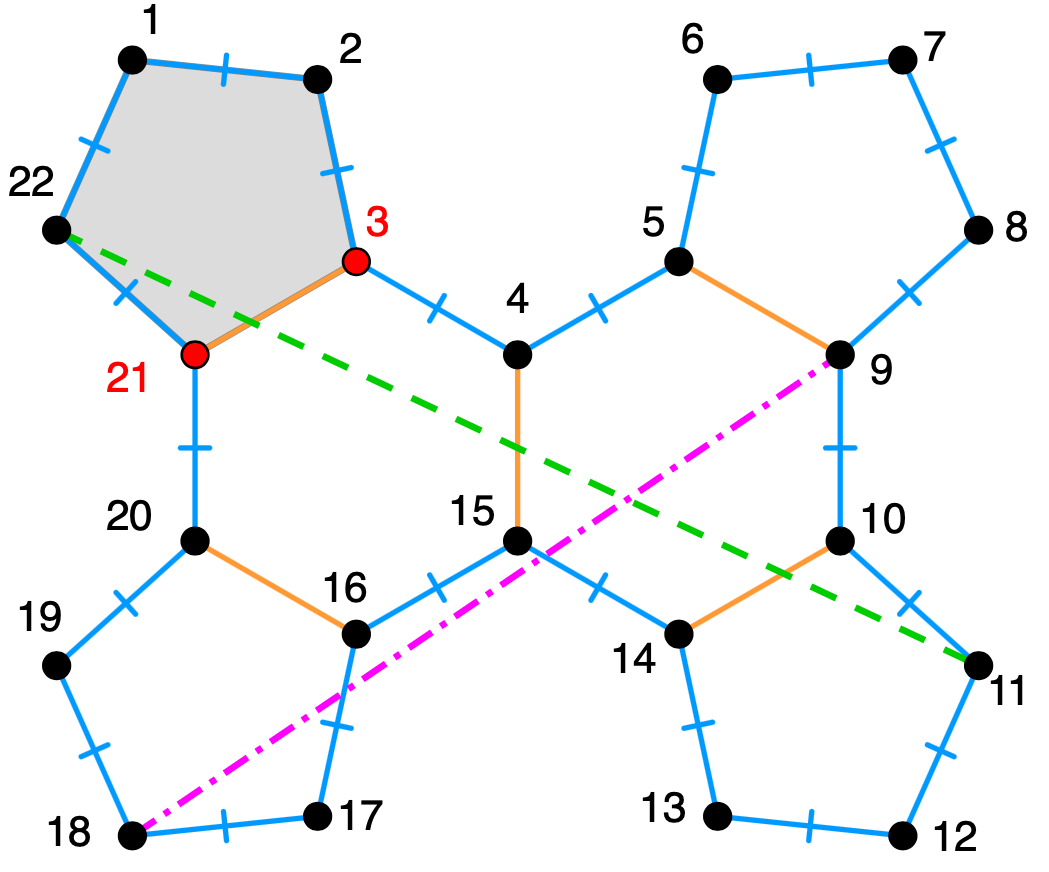}
        \centering
        \caption*{{\small \textbf{Iteration 1:} $D^{(1)}_1=\{\textcolor{verde}{\{11,22\}}\}$, $D^{(1)}_2=\emptyset$.}}
    \end{subfigure}
    \vspace*{2ex}
    \begin{subfigure}{0.5\textwidth}
        \includegraphics[width=0.9\linewidth]{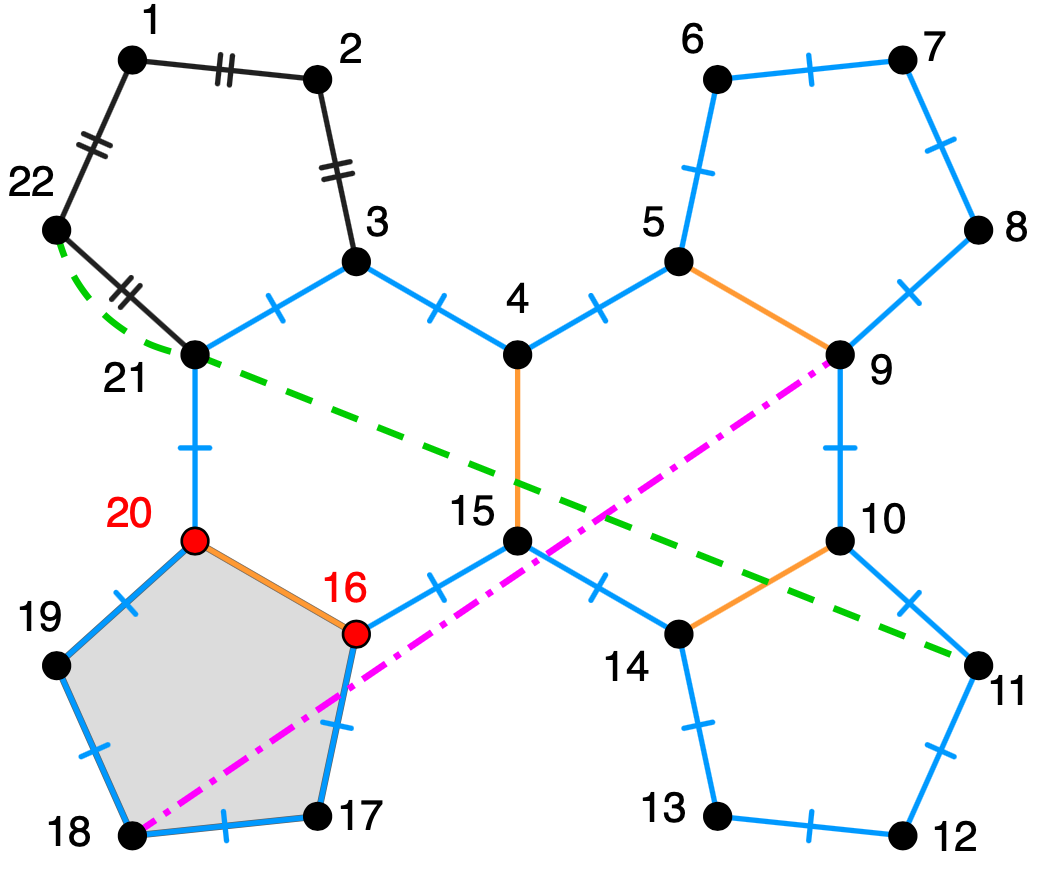}
        \centering
        \caption*{ {\small \textbf{Iteration 2:} $D^{(2)}_1=\{\textcolor{morado}{\{9,18\}}\}$, $D^{(2)}_2=\emptyset$.}}
    \end{subfigure}
    \vspace*{2ex}
    \hspace*{-3ex}
    \begin{subfigure}{0.5\textwidth}
        \includegraphics[width=0.9\linewidth]{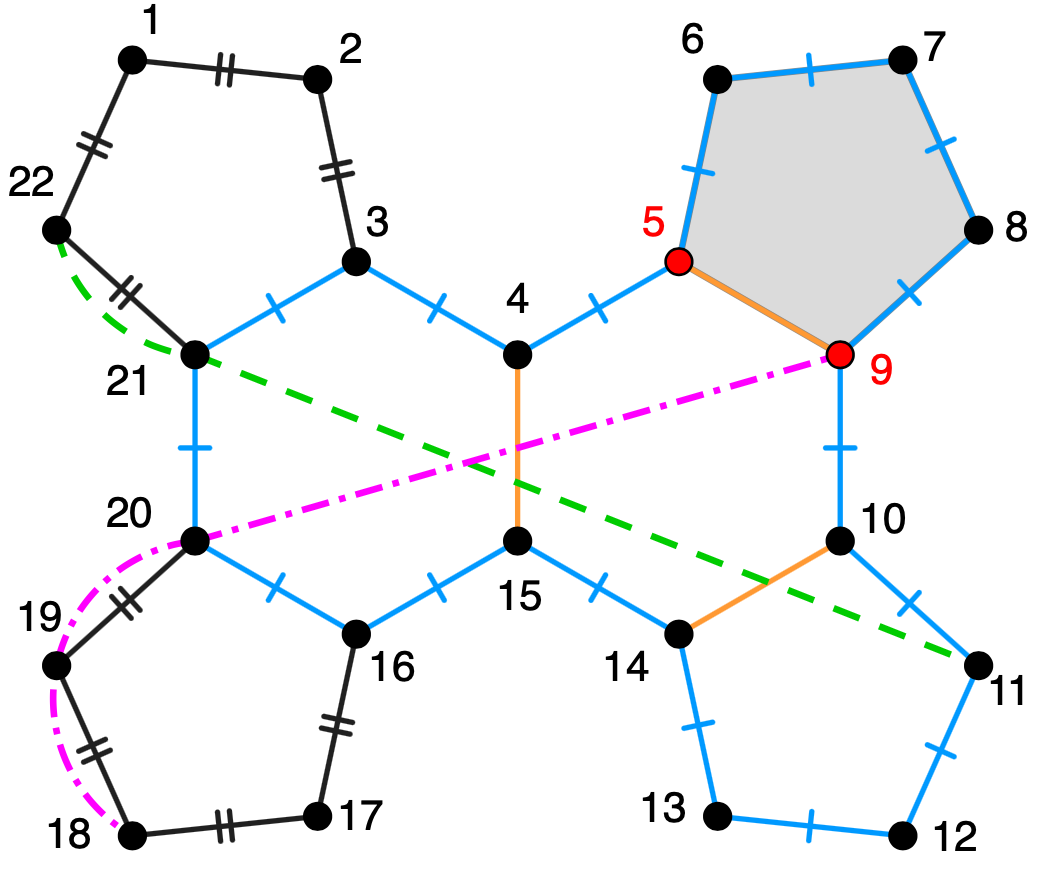}
        \centering
        \caption*{ {\small \textbf{Iteration 3:} $D^{(3)}_1=\emptyset$, $D^{(3)}_2=\emptyset$.}}
    \end{subfigure}
    % \vspace*{1ex}
    % \begin{subfigure}{0.5\textwidth}
    %     \includegraphics[width=0.8\linewidth]{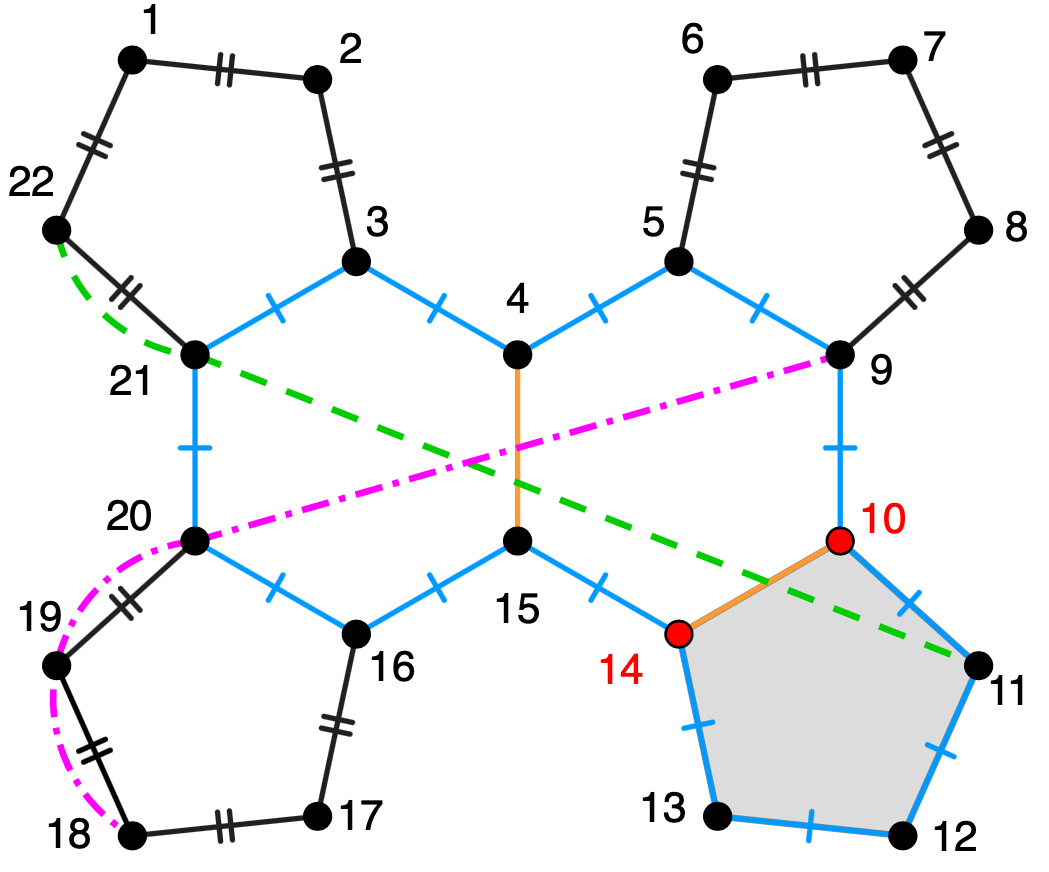}
    %     \centering
    %     \caption*{ {\small \textbf{Iteration 4:} $D^{(4)}_1=\{\textcolor{verde}{\{11,21\}}\}$, $D^{(4)}_2=\emptyset$.}}
    % \end{subfigure}
    % %\vspace*{2ex}
    % \hspace*{-3ex}
    % \begin{subfigure}{0.5\textwidth}
    %     \includegraphics[width=0.8\linewidth]{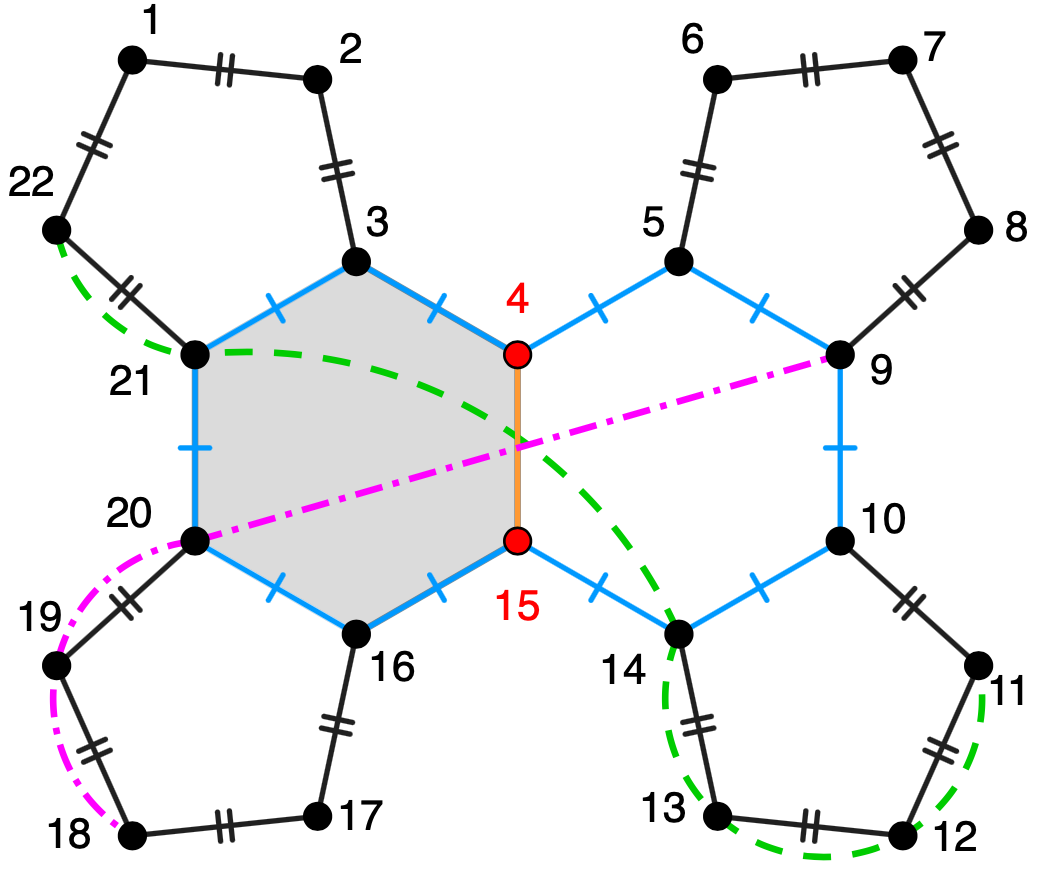}
    %     \centering
    %     \caption*{ {\small \textbf{Iteration 5:} $D^{(5)}_1=\{\textcolor{verde}{\{14,21\}},\textcolor{morado}{\{9,20\}}\}$, $D^{(5)}_2=\emptyset$.}}
    % \end{subfigure}
     %\vspace*{2ex} 
    \end{figure}
 %\newpage   
    \setcounter{figure}{1}
    \begin{figure}[ht!]
     \begin{subfigure}{0.5\textwidth}
        \includegraphics[width=0.9\linewidth]{ite4f.png}
        \centering
        \caption*{ {\small \textbf{Iteration 4:} $D^{(4)}_1=\{\textcolor{verde}{\{11,21\}}\}$, $D^{(4)}_2=\emptyset$.}}
    \end{subfigure}
    \vspace*{2ex}
    \hspace*{-3ex}
    \begin{subfigure}{0.5\textwidth}
        \includegraphics[width=0.9\linewidth]{ite5f.png}
        \centering
        \caption*{ {\small \textbf{Iteration 5:} $D^{(5)}_1=\{\textcolor{verde}{\{14,21\}},\textcolor{morado}{\{9,20\}}\}$, $D^{(5)}_2=\emptyset$.}}
    \end{subfigure}

    \begin{subfigure}{0.5\textwidth}
        \includegraphics[width=0.9\linewidth]{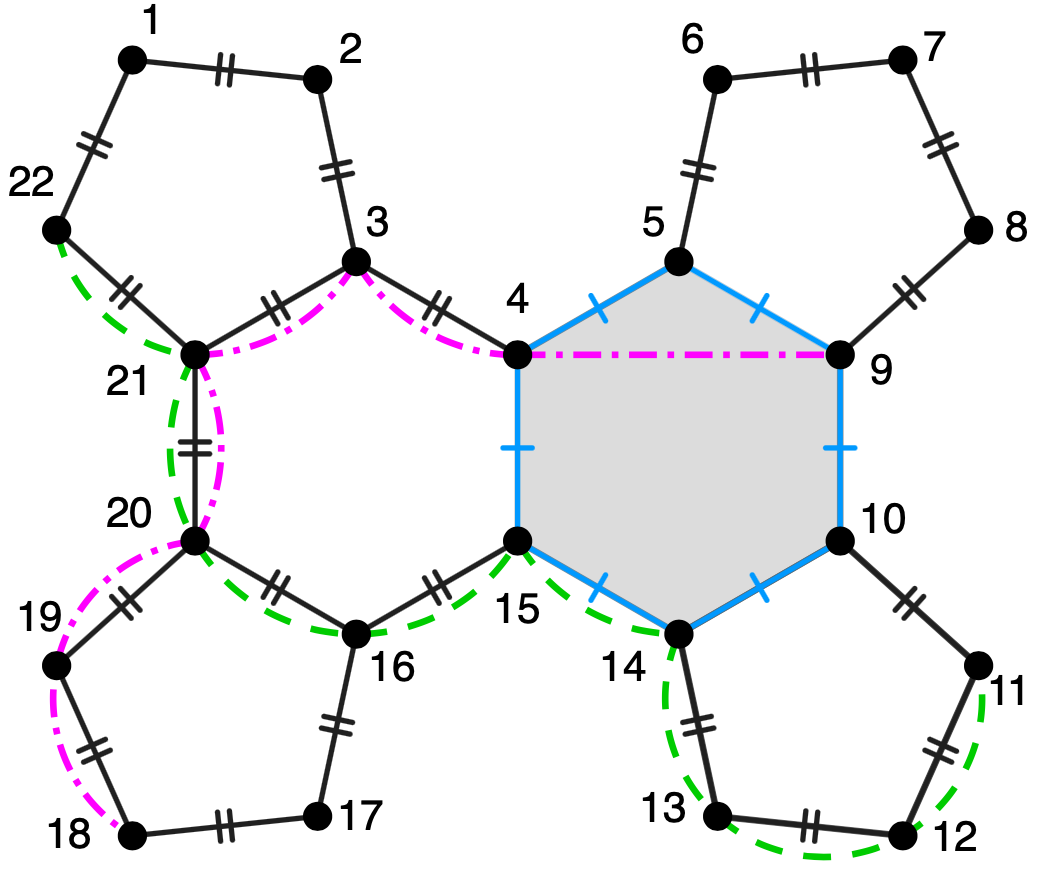}
        \centering
        \caption*{ {\small \textbf{Iteration 6:} $D^{(6)}_1=\emptyset$, $D^{(6)}_2=\{\textcolor{morado}{\{4,9\}}\}$.}}
    \end{subfigure}
    \vspace*{2ex}
    \hspace*{-3ex}
    \begin{subfigure}{0.5\textwidth}
        \includegraphics[width=0.9\linewidth]{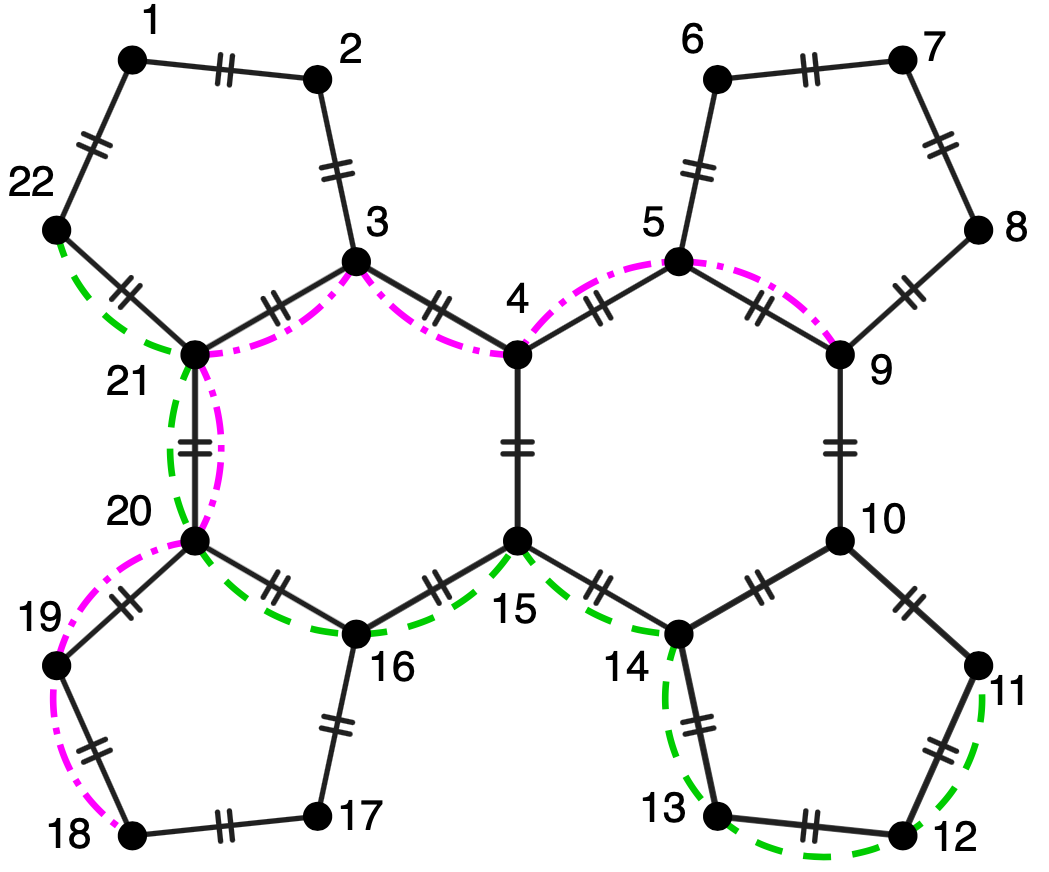}
        \centering
        \caption*{{\small \textbf{Iteration 7:} $\textsc{label}^{\hspace{-0.07cm}(7)}(e)=2$ for each $e\in E$. }}
    \end{subfigure}
    \vspace{1ex}
     \caption{The algorithm terminates at the start of iteration 7. 
     The demand edges of the resulting instance returned by the algorithm are $D^{(7)}=\W^{(7)}_b\cup\W^{(7)}_g$, where\\
    $\W^{(7)}_b=\{\textcolor{verde}{\{11,12\},\{12,13\},\{13,14\},\{14,15\},\{15,16\},\{16,20\},\{20,21\},\{21,22\}}\}$,\\
    $\W^{(7)}_g=\{\textcolor{morado}{\{9,5\},\{5,4\},\{4,3\},\{3,21\},\{21,20\},\{20,19\},\{19,18\}}\}$.}\label{images_algorithm}
    %\label{fig:enter-label22}
    %\vspace*{-3ex}
\end{figure}

\subsection{Proof of Correctness}
\noindent We first note a few key lemmas which are useful in proving Theorem~\ref{theorem:Newinstance}. 
For any iteration $\ite\geq 1$ and $j\in\{0,1,2\}$, 
let $E_j(k):=\{e\in E:\labk(e)=j\}$.
Recall that $\labk(e)$ denotes the label of $e$ at the start of iteration $\ite$;
where we initially define $\labone(e)=1$ if $e\in\fout$, and $\labone(e)=0$ if $e\in E\setminus E(\fout)$.

\begin{lemma}\label{propositionCycle2}
The following holds for each iteration $\ite\geq 1$ and $g\in D$.
\begin{enumerate}[label=(\alph*),topsep=0.2ex]
\item $E_1(k)$ is either the edge-set of a simple cycle $C^{(k)}$ in $G$, or the empty set.\label{pr1}
\item Consider the induced graph $G[C^{(k)}]$ on the vertex-set of $C^{(k)}$. Then $E_0(k)$ are precisely the edges of $G[C^{(k)}]$ that do not lie on the cycle $C^{(k)}$, and $E_2(k)$ are precisely the edges of $G$ that are not contained in $G[C^{(k)}]$.\label{pr2}
\item The demand edges in $\W_g^{(k)}$ form a walk between the end-points of $g$. Furthermore, there is at most one bad demand edge $h\in \W_g^{(k)}$.
If such $h$ exists, then both of its end-points lie on the cycle $C^{(k)}$.\label{pr3}
\end{enumerate}
\end{lemma}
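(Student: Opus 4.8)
The plan is to prove all three parts simultaneously by induction on the iteration $\ite$. The base case $\ite = 1$ is immediate: $E_1(1) = E(\fout)$ is the edge set of the cycle $\fout$, so $G[C^{(1)}] = G$, whence $E_0(1)$ is exactly the set of chords of $G$ (edges not on $\fout$) and $E_2(1) = \emptyset$, and each $\W_g^{(1)} = \{g\}$ is a single (possibly bad) edge whose endpoints lie on $C^{(1)} = \fout$ since all vertices do. For the inductive step, assume \cref{pr1}--\cref{pr3} hold at the start of iteration $\ite$, and recall how the algorithm passes to iteration $\ite+1$: it picks a leaf $f = i_1 i_2 \ldots i_\ell i_1$ of the tree $G^{*}[C^{(k)}]$, and updates $\lab^{(k+1)} = \labk + \bon_f$ via \eqref{updateflowcapacityandlabel}.

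For \cref{pr1} and \cref{pr2} I would argue as follows. Since $f$ is a leaf of the tree $G^{*}[C^{(k)}]$, exactly one edge of $f$ — say $\{i_\ell, i_1\}$ — has label $0$ (or, in the degenerate case $G[C^{(k)}] = C^{(k)}$, all of $f$ has label $1$ and $f = C^{(k)}$), and every other edge of $f$ has label $1$ by the inductive form of \cref{pr2}. After adding $\bon_f$, the edges $\{i_1,i_2\},\ldots,\{i_{\ell-1},i_\ell\}$ go from label $1$ to label $2$, the edge $\{i_\ell,i_1\}$ goes from $0$ to $1$ (or from $1$ to $2$), and all other labels are unchanged. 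Hence $E_1(k+1) = \big(E_1(k) \setminus E(f)\big) \cup \{\{i_\ell,i_1\}\}$, which is precisely the cycle obtained from $C^{(k)}$ by deleting the path $i_1 i_2 \ldots i_\ell$ and adding the chord $\{i_\ell, i_1\}$ — a simple cycle $C^{(k+1)}$, unless $f = C^{(k)}$, in which case $E_1(k+1) = \emptyset$. This gives \cref{pr1}. For \cref{pr2}, note $G[C^{(k+1)}]$ is the subgraph of $G[C^{(k)}]$ obtained by deleting the interior vertices $i_2,\ldots,i_{\ell-1}$ of the leaf face; by outerplanarity these vertices are incident only to edges of $f$, all of which are now label $2$, so $E_2(k+1) = E_2(k) \cup E(f)$ is exactly the set of edges not in $G[C^{(k+1)}]$, and the remaining label-$0$ edges $E_0(k+1) = E_0(k) \setminus \{\{i_\ell,i_1\}\}$ are exactly the chords of $G[C^{(k+1)}]$.

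For \cref{pr3}, I would track how $\W_g^{(k)}$ changes under the update rule \eqref{updateWg}. If $\W_g^{(k)}$ contains no demand in $\Dp_1 \cup \Dp_2$, it is unchanged, and the walk/at-most-one-bad-edge property is inherited; moreover any bad $h \in \W_g^{(k)}$ has both endpoints on $C^{(k)}$ and, since $h$ was \emph{not} pinned, neither endpoint lies in $\{i_2,\ldots,i_{\ell-1}\}$ (a bad demand incident to an interior vertex of $f$ with its other endpoint on $C^{(k)} \setminus V(f)$ lies in $\Dp_1$, and one with both endpoints among $\{i_1,\ldots,i_\ell\}$ lies in $\Dp_2$), so both endpoints of $h$ survive on $C^{(k+1)}$. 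Otherwise, by the inductive \cref{pr3} there is a unique bad $h \in \W_g^{(k)}$ and it lies in $\Dp_1 \cup \Dp_2$; we replace $h$ by $\Dp_h$ as in \eqref{1stD1option}--\eqref{1stD2option}. In each of the four cases one checks directly that $\Dp_h$ is a path between the endpoints of $h$ (consecutive edges $\{i_j,i_{j+1}\}$ along $f$, possibly plus one edge $\{i_1,t\}$, $\{i_\ell,t\}$, or $\{i_\ell,i_1\}$), so splicing it into the walk $\W_g^{(k)}$ in place of $h$ yields a walk between the endpoints of $g$. Every edge $\{i_j,i_{j+1}\}$ with $1 \le j < \ell$ is a supply edge of $f$, hence good; the edges $\{i_\ell, i_1\}$ (a chord or supply edge of $G$) and $\{i_1,t\}$, $\{i_\ell,t\}$ are good because in Cases 1 and 2 the routing path $z_h$ actually uses the relevant vertex $i_1$ or $i_\ell$, and one argues via outerplanarity and the structure of $z_h$ that the corresponding pair is joined by an edge of $G$ — this is where one invokes that $t$ lies on $C^{(k)}$ and that $z_h$ passing through $i_1$ (resp. $i_\ell$) forces $\{i_1,t\}$ (resp. $\{i_\ell,t\}$) to be a chord of $G[C^{(k)}]$. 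Thus $\W_g^{(k+1)}$ contains no bad demand at all in this case, and the lemma's bound of "at most one" holds a fortiori.

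The main obstacle I anticipate is the last point: verifying that the newly created demands $\{i_1,t\}$, $\{i_\ell,t\}$, and $\{i_\ell,i_1\}$ in the pinning cases are genuinely \emph{good} (i.e.\ realized by an actual edge of $G$), rather than merely central pairs. This requires combining outerplanarity, the fact that $f$ is a \emph{leaf} of the current weak dual so that $C^{(k)} \setminus V(f)$ hangs off the single chord $\{i_\ell,i_1\}$, and the precise condition on $z_h$ (passing through $i_1$ and/or $i_\ell$) that triggers each case in \eqref{1stD1option}--\eqref{1stD2option}; I would isolate this as a short separate claim about the geometry of $G[C^{(k)}]$ and the leaf face $f$ before assembling the induction. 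Everything else is bookkeeping with the label vector and the definition of the update rules.
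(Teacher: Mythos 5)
Your proofs of parts~\ref{pr1} and~\ref{pr2} are correct and essentially identical to the paper's (the paper splits into three cases depending on whether $E_1(k-1)$ is empty and whether $G[C^{(k-1)}]$ equals $C^{(k-1)}$, but the content is the same bookkeeping with labels and the observation that the leaf face's interior vertices are adjacent only to edges of $f$).

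However, your argument for part~\ref{pr3} contains a genuine error. You claim that in Case~1 the new demand $\{i_1,t\}$ (resp.~$\{i_\ell,t\}$) is good, i.e.\ realized by an actual edge of $G$, reasoning that $z_h$ passing through $i_1$ forces $\{i_1,t\}$ to be a chord of $G[C^{(k)}]$. This is false. The vertex $t$ is some $\omega_q\in C^{(k)}\setminus V(f)$, and the fact that the ring path $z_h$ passes through $i_1$ on its way from $i_p$ to $t$ says nothing about $t$ being adjacent to $i_1$ in $G$ --- the ring path simply traverses consecutive outer-cycle vertices $i_p,i_{p-1},\ldots,i_1,\omega_p,\omega_{p-1},\ldots,t$, and in general no edge of $G$ joins $i_1$ to $t$. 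Indeed, if your claim were true, then after one iteration there would be no bad demand edges left, the algorithm would terminate, and the lemma's ``at most one bad demand edge'' phrasing would be pointless. The correct statement (and what the paper proves) is weaker: the new edge $\{i_1,t\}$ or $\{i_\ell,t\}$ may well be bad, but there is at most one such edge inserted into $\W_g^{(k+1)}$, and its two endpoints both lie on the new cycle $C^{(k+1)}=\omega_1\cdots\omega_p i_1 i_\ell\omega_1$, which is exactly the invariant part~\ref{pr3} asserts. Everything else you wrote (the edges $\{i_j,i_{j+1}\}$ for $j<\ell$ being supply edges of $f$ and hence good, $\{i_\ell,i_1\}$ being a face edge of $G$ and hence good, the $D_2$ case producing only good edges, and the walk property surviving the splice) is fine; the fix is simply to drop the claim that $\{i_1,t\}$ and $\{i_\ell,t\}$ are good and instead note that whichever one appears is the unique potentially-bad edge of $\W_g^{(k+1)}$ and that both of its endpoints sit on $C^{(k+1)}$.
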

\begin{proof}
The proof is by induction on $k$. 
The result clearly holds when $k=1$.
Suppose that $k\geq 2$. By the induction hypothesis, the result holds for $k-1$. We consider three cases.\\

\noindent \textbf{Case 1:} $E_1(k-1)=\emptyset$. 

By \ref{pr2}, it follows that $\labkminus(e)=2$ for each $e\in E$; thus, the algorithm reaches its termination-condition at the start of iteration $(k-1)$ (in particular there is no iteration $\ite$).\\
%thus, no face of $G$ satisfies the conditions of the while-loop after iteration $k-1$, and the algorithm terminates after iteration $(k-1)$.\\  

\noindent \textbf{Case 2:} $E_1(k-1)\neq\emptyset$ and $G[C^{(k-1)}]=C^{(k-1)}$. 

Then the cycle $C^{(k-1)}$ is a face of $G$. By \ref{pr2}, it follows that $\labkminus(e)=2$ for all edges $e\in E\setminus E_1(k-1)$.
%Thus $C^{(k-1)}$ is the face processed in iteration $k$. 
Thus $C^{(k-1)}$ is the (unbounded) face $f$ of $G$ processed in iteration $k-1$.
%Since $\labkminus(e)=1$ for each $e\in C^{(k-1)}$, then $\labk(e)=2$ for each $e\in C^{(k-1)}$, and hence $\labk(e)=2$ for each $e\in E$. 
Since $\labk =\labkminus +\bon_f$ (see \eqref{updateflowcapacityandlabel}), it follows that $\labk(e)=2$ for each $e\in E$.
This proves \ref{pr1} and \ref{pr2}. 
We now show \ref{pr3}. 
By the induction hypothesis, (i) there is at most one bad demand edge $h\in\Wkminus_g$; and
(ii) $\Wkminus_g$ is the edge-set of a walk between the endpoints of $g$.
%If such an edge does not exist, then $\W_g^{(k-1)}=\W_g^{(k)}$ since $\Dp_1$ and $\Dp_2$ (defined on lines \ref{definingDP1} and \ref{definingDP2} of \ref{algorithmFaceAug}) only contain edges whose endpoints are not adjacent in $G$, and \ref{pr3} is trivially satisfied in this case. 
If such a bad demand edge $h$ does not exist, then $\W_g^{(k-1)}=\W_g^{(k)}$ (see \eqref{updateWg}), and therefore \ref{pr3} is trivially satisfied in this case.
Suppose on the other hand that there is a bad demand edge $h\in\Wkminus_g$.
By \ref{pr3}, 
both end-points $s,t$ of $h$ lie on $C^{(k-1)}$; 
therefore $h \in \Dpminus_2$.
%\david{(should we place the definition of $\Dp_1$ and $\Dp_2$ on an equation on the previous section?)}
At the end of iteration $k-1$, 
edge $h$ is replaced by a sequence of demand edges $\Dpminus_h$ which form a simple $s-t$ path along the edges of $C^{(k-1)}$ (see \eqref{2ndD2option} and \eqref{1stD2option}).
Thus, (i) $\Wk_g=(\Wkminus_g-h)\cup \Dpminus_h$ only contains \emph{good} demand edges; and (ii) $\Wk_g$ is the edge-set of a walk between the end-points of $g$.
This proves \ref{pr3}.\\

\noindent \textbf{Case 3:} $E_1(k-1)\neq\emptyset$ and $G[C^{(k-1)}]\neq C^{(k-1)}$. 

Then $C^{(k-1)}$ is not a bounded face\footnote{Recall that all the faces of $G$ except $\fout$ are bounded} of $G$. Also, $G[C^{(k-1)}]$ is a (2-vertex connected) outerplanar graph and each bounded face of $G[C^{(k-1)}]$ is also a bounded face of $G$. 
By Fact \ref{fact:weakdualtree}, the weak dual $G^*[C^{(k-1)}]$ of $G[C^{(k-1)}]$ is a tree. 
By \ref{pr2}, the edge-set $E_0(k-1)$ is precisely the edge-set of $G^*[C^{(k-1)}]$;  
%therefore, the faces of $G$ satisfying the conditions of the while-loop on iteration $k$ are precisely the faces representing the leaf nodes of $G^*[C^{(k-1)}]$.
therefore, the faces of $G$ that can be processed on iteration $k-1$ are precisely the faces representing the leaf nodes of $G^*[C^{(k-1)}]$. 
%Recall that the face $f$ that is processed during an iteration of the algorithm is one for which the 
Let $f=i_1i_2\hdots i_\ell i_1$ be the face that is processed on iteration $\ite-1$. 
Then $\labkminus(\{i_j,i_{j+1}\})=1$ for each $1\leq j\leq \ell-1$ and $\labkminus(\{i_\ell,i_{1}\})=0$. 
Let $\omega_1\omega_2\hdots \omega_p i_1i_2\hdots i_\ell \omega_1$ denote the node-sequence of the simple cycle $C^{(k-1)}$, 
where $p\geq 1$. 
%Since $\lab(k)=\lab(k-1)+\bon_f$,
Since $\labk=\labkminus+\bon_f$,
the edge-set $E_1(k)$ induces the simple cycle $C^{(k)}$ given by the node-sequence $\omega_1\omega_2\hdots \omega_p i_1i_\ell \omega_1$. 
This proves \ref{pr1}. 
By the above, the only edge $e\in E$ such that $\labkminus(e)=0$ and $\labk(e)=1$ is $e=\{i_\ell,i_1\}$ i.e.,~$E_0(k)=E_0(k-1)-\{i_\ell,i_1\}$.
Therefore, $E_0(k)$ consists precisely of the edges $e$ of $G$ whose two endpoints lie on $C^{(k)}$ and $e\notin C^{(k)}$. 
This proves \ref{pr2}. 
    % Observe that $E_1(k-1)$ induce the simple cycle $C(k-1)$ given by a node-sequence $\omega_1\omega_2\hdots \omega_p i_1i_2\hdots i_\ell \omega_1$, where $p\geq 1$ since $C(k-1)$ is not a face of $G$; 
    % therefore, $E_1(k)$ induce the simple cycle $C(k)$ given by the node-sequence $\omega_1\omega_2\hdots \omega_p i_1i_\ell \omega_1$. This proves \ref{pr1}.     
    % By $(i)$ and $(ii)$, the edge-set $E_0(k)$ are precisely the edges of $G^*[C(k-1)]-f$,
    % which is precisely the weak dual graph of $G[C(k)]$; and therefore, $E_0(k)$ is precisely the set of edges of $G[C(k)]$ that do not lie in $C(k)$. This proves \ref{pr2}. 
If $\W_g^{(k-1)}$ does not contain a bad demand edge,
then $\W_g^{(k-1)}=\W_g^{(k)}$, as observed before. 
Otherwise, $\W_g^{(k-1)}$ contains precisely one bad demand edge $h$.
By \ref{pr3}, both end-points $s,t$ of $h$ lie on $C^{(k-1)}$. 
If neither $s$ nor $t$ are contained in $\{i_2,i_3,\hdots,i_\ell\}$, 
then $h$ is not contained in $\Dp_1\cup\Dp_2$ during iteration $k-1$;
in which case $\Wk_g=\Wkminus_g$ and $s,t\in G[C^{(k)}]$.
Otherwise, $h\in \Dp_1\cup\Dp_2$. 
If $h\in \Dp_2$, then, by the same argument as in Case 2,
(i) $\Wk_g=(\Wkminus_g-h)\cup \Dpminus_h$ only contains \emph{good} demand edges; and (ii) $\Wk_g$ is the edge-set of a walk between the end-points of $g$.
%$\W_g^{(k)}$ consists of edges whose endpoints are adjacent in $G$. 
Thus assume that $h\in \Dp_1$, $s\in \{i_2,i_3,\hdots,i_{\ell-1}\}$, and $t\in \{\omega_1,\omega_2,\hdots,\omega_p\}$. 
If $\Wk_g$ contains a bad demand edge,
then this edge is either $\{i_1,t\}$ or $\{i_\ell,t\}$;
both of these possible demands edges have their two end-points in $C^{(k)}$. 
Finally, since (a) $\Wkminus_g$ is the edge-set of a walk between the end-points of $g$, and (b) $h$ gets replaced in $\Wkminus_g$ with the edge-set $\Dpminus_h$ of an $s-t$ path (containing at most one bad demand edge),
then $\Wk_g$ is the edge-set of a walk between the end-points of $G$. This proves \ref{pr3}.
\end{proof}
%%%%%%%%%%%%%%%%%%%
%%%%%%%%%%%%%%%%%%%%

\begin{lemma}\label{lemma:instancesSatisfyCutCond}
Let $x$ be a feasible flow for $\I\left(G,\cp,\Dp,\dpr\right)$, then the following hold:
\noindent
\begin{enumerate}[label=(\alph*),topsep=0.2ex]
\item The ring-loading instance $\RL\left([n],\crl,\Dp_1\cup\Dp_2,\dpr\right)$ defined at \cref{ringLoadfeasible} satisfies the cut-condition.\label{condition2}
\item The instance $\mathcal{I}\left(G,x^*+\alpha \cdot \Dmax\cdot\bon_f,D_f,d^{(k+1)}\right)$ defined at \cref{cutconditionEndofIteration} satisfies the cut-condition.\label{condition3}
\end{enumerate}
\end{lemma}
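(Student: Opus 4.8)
\textbf{Proof plan for Lemma~\ref{lemma:instancesSatisfyCutCond}.}

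The plan is to verify both cut-conditions by reducing every central cut in the relevant graph to a statement about the flow $x^*$ routed across a dual path through the face $f$, using Lemma~\ref{lemma:usefullemmafor[i,j]} as the main tool. Recall that by Lemma~\ref{central} it suffices to check the cut-condition on central cuts, which in an outerplanar graph are exactly the intervals $[i,j]$.

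For part~\ref{condition2}, the central cuts in the cycle $[n]$ are the arcs $[i,j]$, whose supply edges in the ring-loading instance are $\{i-1,i\}$ and $\{j,j+1\}$, so the capacity across the cut is $\crl(\{i-1,i\}) + \crl(\{j,j+1\}) = x^*(G^*_{f,f_{i-1}}) + x^*(\{i-1,i\}) + x^*(G^*_{f,f_j}) + x^*(\{j,j+1\})$. The demand crossing $[i,j]$ consists of the edges of $\Dp_1\cup\Dp_2$ with exactly one endpoint in $[i,j]$. The key point is that each such demand $h$ is routed by the original feasible flow $x^{(k)}$ (hence $x^*$) along paths in $G$; every such path must cross $\delta_G([i,j])$, and by Lemma~\ref{lemma:usefullemmafor[i,j]}\ref{lemma:dualpathContainedIn}, $\delta_G([i,j]) \subseteq G^*_{f_{i-1},f} \cup G^*_{f,f_j} + \{i-1,i\} + \{j,j+1\}$ (after relabelling the faces of Lemma~\ref{lemma:usefullemmafor[i,j]} to match: $f_1$ there is $f_{i-1}$ here, $f_2$ there is $f_j$ here). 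Thus the total $x^*$-flow of demands crossing $[i,j]$ is at most $x^*\big(G^*_{f,f_{i-1}} \cup G^*_{f,f_j} + \{i-1,i\} + \{j,j+1\}\big)$, which is at most $\crl(\{i-1,i\}) + \crl(\{j,j+1\})$ since $x^*$ is nonnegative and the two dual paths $G^*_{f,f_{i-1}}$, $G^*_{f,f_j}$ are edge-disjoint as paths emanating from the leaf-adjacent structure (or at worst the double-count only helps us). Since each demand in $\Dp_1\cup\Dp_2$ is fully routed by $x^{(k)}$ at its demand value $\dpr(h)$, the demand crossing $[i,j]$ equals the $x^*$-flow of those demands across $[i,j]$, giving the cut-condition.

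For part~\ref{condition3}, the relevant graph is $G$ again, so central cuts are intervals $[i,j]$, and we must show $(x^* + \alpha\Dmax\bon_f)(\delta_G([i,j])) \geq d^{(k+1)}(\delta_H([i,j]))$ where $H = (V, D_f)$. The demands in $D_f = \cup_{h}\Dp_h$ are the pinned replacements; each replacement path $\Dp_h$ for $h\in\Dp_1\cup\Dp_2$ consists of edges of $f$ together with at most one extra edge, and it joins the same endpoints as $h$ only partially — rather, $\Dp_h$ traces part of $f$ and then one chord. The crucial structural fact is that a demand edge of $\Dp_h$ crossing $[i,j]$ either lies on $f$ — and there can be at most ``two'' such crossing edges of $f$ summed over the structure, absorbed by the $\alpha\Dmax\bon_f$ slack on the (at most two) edges of $f$ in $\delta_G([i,j])$ — or is one of the chord edges $\{i_1,t\}$, $\{i_\ell,t\}$, $\{i_\ell,i_1\}$, which ``inherits'' a crossing from the original $z$-path $z_h$, which in turn is bounded by $\crl$, which in turn is bounded by $x^*$ across the corresponding dual path by part~\ref{condition2}'s analysis. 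Concretely I would split $\delta_H([i,j])\cap D_f$ into the edges lying on $f$ and the chord edges; bound the former by $\alpha\Dmax \cdot |\delta_G([i,j])\cap E(f)| \leq \alpha\Dmax \cdot 2$ against the extra capacity $\alpha\Dmax\bon_f$ — here I should double-check the ``at most two edges of $f$ cross any interval cut'' claim, which follows because $f$ is a cycle and $\delta_G([i,j])\cap E(f)$ is an edge cut of the cycle $f$, hence has even size, and an inspection of Lemma~\ref{lemma:usefullemmafor[i,j]}\ref{lemma:fcontainedinPath} pins it at exactly $2$ in the interesting case — and bound the latter (chord) contributions against $x^*(\delta_G([i,j]))$ by tracking, for each chord demand in $\Dp_h$, the portion of $z_h$ (equivalently, the $\crl$-capacity, equivalently the $x^*$-mass along the dual path) that it represents, and noting these portions are disjoint across the different $h$'s.

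The main obstacle is part~\ref{condition3}: correctly accounting for how the pinned demands $\Dp_h$ cross a cut $[i,j]$, in particular showing the chord-edge contributions telescope against the genuine $x^*$-flow without double counting, and that the on-face contributions never exceed the $2\alpha\Dmax$ of extra capacity on the (at most two) face edges in the cut. This requires a careful case analysis matching the four definitions \eqref{1stD1option}--\eqref{1stD2option} against the position of the cut endpoints $i,j$ relative to $i_1,\ldots,i_\ell$ and to $C^{(k)}\setminus V(f)$, using the feasibility of the ring-loading solution $z$ (which routes each $h$ along a single path $z_h$ respecting $\crl$) as the bridge. Everything else — part~\ref{condition2} and the reduction to central cuts — is a direct application of the preliminary lemmas.
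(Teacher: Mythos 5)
Your part~(a) matches the paper's argument: reduce to central cuts, use the feasibility of $x^{(k)}$ to get $\dpr(\delta_{\Dp_1\cup\Dp_2}([i,j]))\le x^*(\delta_G([i,j]))$, and then bound $x^*(\delta_G([i,j]))$ by $\crl(\{i-1,i\})+\crl(\{j,j+1\})$ via Lemma~\ref{lemma:usefullemmafor[i,j]}\ref{lemma:dualpathContainedIn}. Your remark that any double-counting in $G^*_{f_1,f}\cup G^*_{f,f_2}$ only helps is correct since $x^*\ge 0$.

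For part~(b), however, there is a genuine gap. You propose to split the demands of $D_f$ crossing $[i,j]$ into on-face edges and chord edges, and to bound the on-face contribution by $2\alpha\cdot\Dmax$ (the extra capacity coming from $\alpha\Dmax\cdot\bon_f$), then bound the chord contribution separately by $x^*(\delta_G([i,j]))$. The first part of this split is false: the total demand $d^{(k+1)}$ carried by the (at most two) face edges of $f$ crossing the cut is not bounded by $2\alpha\Dmax$. Many different pinned demands $h$ contribute a replacement edge of value $d(g)$ to the same face edge $\{i_q,i_{q+1}\}$, and the sum over those contributions can vastly exceed $2\alpha\Dmax$. The quantity $2\alpha\Dmax$ is the \emph{excess} capacity available, not a bound on either the on-face demand or the excess demand created by pinning; a path $\Dp_h$ can cross the cut up to three times, and the total excess across all $h$ is controlled only through feasibility of the ring-loading flow $z$, not combinatorially. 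Your chord-edge ``telescoping'' is also too vague to close the argument: the paths $z_h$ are not disjoint, so a disjointness claim does not hold.

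The argument the paper uses instead is cleaner and worth internalizing: because each $\Dp_h$ is a partition of the ring path $z_h$ into consecutive segments, the edge flow on the ring is unchanged after pinning, so $\RL([n],\crl+\alpha\Dmax,D_f,d^{(k+1)})$ is feasible and hence satisfies the cut-condition. This gives $d^{(k+1)}(\delta_{D_f}([i,j]))\le(\crl+\alpha\Dmax)(\delta_{[n]}([i,j]))$ in one step, with no on-face/chord split. One then converts this ring bound back to $G$ via Lemma~\ref{lemma:usefullemmafor[i,j]}; here a case analysis is \emph{necessary}. When both $[i,j]$ and its complement meet $V(f)$ (the paper's Case~3), $f$ lies on $G^*_{f_1,f_2}$, so part~\ref{lemma:fcontainedinPath} makes $\crl(\delta_{[n]}([i,j]))$ equal to $x^*(\delta_G([i,j]))$, and exactly two edges of $f$ lie in $\delta_G([i,j])$, so the $2\alpha\Dmax$ slack is exactly accounted for. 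When $[i,j]$ misses $V(f)$ entirely (the paper's Case~1), $f$ need not lie on $G^*_{f_1,f_2}$, the dual paths $G^*_{f,f_1}$ and $G^*_{f,f_2}$ overlap, so $\crl(\delta_{[n]}([i,j]))$ can strictly exceed $x^*(\delta_G([i,j]))$, and there is no $\bon_f$ slack in $\delta_G([i,j])$; the ring-loading bound is therefore too weak, and the paper instead shows directly that pinning leaves the demand across such a cut unchanged. Your sketch does not separate these situations, and a uniform treatment via the ring bound would fail in Case~1.
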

\begin{proof}
Observe that at the start of iteration $k$, $x^*$ is a feasible flow of $\I(G,x^*,\Dp_1\cup\Dp_2,\dpr)$, and hence $\I(G,x^*,\Dp_1\cup\Dp_2,\dpr)$ satisfies the cut-condition, i.e.,
%(for all central cuts):
\begin{equation}\label{eq:cutconditionD1D2}
\dpr\left(\delta_{\Dp_1\cup\Dp_2}\left(\,\closedleft i,j\closedright\,\right)\right)\leq x^*\big(\delta_{G}\left(\,\closedleft i,j\closedright\,\right)\big) \text{ for each }i,j\in[n].
\end{equation}
Let $i,j\in[n]$ with $i\neq j+1$, and $f_1$ and $f_2$ be the unique faces in $\F-\fout$ that contain $\{i-1,i\}$ and $\{j,j+1\}$ respectively. Then,
\begin{align*}
\dpr\left(\delta_{\Dp_1\cup\Dp_2}\left(\,\closedleft i,j\closedright\,\right)\right)
&\leq  x^*\big(\delta_{G}\left(\,\closedleft i,j\closedright\,\right)\big)
\leq x^*(G^*_{f_1,f})+x^*(G^*_{f,f_2})+x^*(\{i-1,i\})+x^*(\{j,j+1\})\\
&=\crl\big(\,\{i-1,i\}\,\big)+\crl\big(\,\{j,j+1\}\,\big)=\crl\big(\,\delta_{[n]}(\,\closedleft i,j\closedright\,)\,\big).
\end{align*}
The first inequality follows from \eqref{eq:cutconditionD1D2}. The second inequality follows from Lemma~\ref{lemma:usefullemmafor[i,j]}. The first equality follows from the definition of $\crl$. 
%on line  \ref{definingEdgecapacitiesforRL}.
    %The last equation follows since the cut edges of $\closedleft i,j\closedright)$ are precisely $\{i-1,i\},\{j,j+1\}$. 
 It follows from Lemma~\ref{central} that $\RL([n],\crl,\Dp_1\cup\Dp_2,\dpr)$ satisfies the cut-condition. This proves \cref{condition2}. 
 Hence, Algorithm $\Alg$ computes a $\alpha$-feasible unsplittable flow $z$ for $\RL\left([n],\crl,\Dp_1\cup\Dp_2,\dpr\right)$. 
 We now prove \ref{condition3}. 
 By Lemma~\ref{central}, it suffices to show that,
% \david{Should we use some annoying notation like in here
    % https://arxiv.org/pdf/1606.07927 since there are parallel edges?}
\begin{equation}\label{eq:cutconditionendofalgorithm}
d^{(k+1)}\big(\delta_{D_f}\left(\,\closedleft i,j\closedright\,\right)\big)\leq (x^*+\alpha \cdot \Dmax\cdot\bon_f)(\delta_{G}\left(\,\closedleft i,j\closedright\,\right)\big) ~\text{for}~i,j\in[n], i\neq j+1.
\end{equation}
We consider three cases.

\noindent \textbf{Case 1: $i,j \in \openleft i_\ell,i_{1} \openright$}. 

It suffices to show that, 
$$\dpr\left(\delta_{\Dp_1\cup\Dp_2}\left(\,\closedleft i,j\closedright\,\right)\right)=d^{(k+1)}\big(\,\delta_{D_f}\left(\,\closedleft i,j\closedright\,\right)\big),$$
since inequality \eqref{eq:cutconditionendofalgorithm} would then immediately follow from \eqref{eq:cutconditionD1D2} in this case. Observe that: (i) all the edges in $\Dp_h$, for $h\in\Dp_2$, are only incident to nodes of $f$; and (ii) if $h\in \Dp_1$, then $t\in V(h)$ is the unique endpoint of $h$ contained in $\openleft i_\ell,i_1\openright$, and there is precisely one demand edge of $\Dp_h$ with exactly one endpoint in $\openleft i_\ell,i_1\openright$, and this endpoint is $t$. Then $\delta_{\Dp_2}\left(\,\closedleft i,j\closedright\,\right)=\emptyset$ and $\delta_{\Dp_h}\left(\,\closedleft i,j\closedright\,\right)=\emptyset$ for each $ h\in (\Dp_1\cup\Dp_2)\setminus\delta_{\Dp_1}\left(\,\closedleft i,j\closedright\,\right)$. This implies that:
\begin{align*}
\dpr\big(\delta_{\Dp_1\cup\Dp_2}(\closedleft i,j\closedright)\big)&=\sum_{h\in \delta_{\Dp_1}(\closedleft i,j\closedright)}\dpr(h);\\
\dpr\big(\delta_{D_f}(\closedleft i,j\closedright)\big)&=\sum_{h\in \delta_{\Dp_1}(\closedleft i,j\closedright)}\sum_{r\in \delta_{\Dp_h}(\closedleft i,j\closedright)}\dpr(r).
\end{align*}
Let $h\in \delta_{\Dp_1}(\closedleft i,j\closedright)$. By observation (ii) above, $|\delta_{\Dp_h}\left(\,\closedleft i,j\closedright\,\right)|=1$. Thus, since $\dpr(r)=\dpr(h)$ for each $r\in \Dp_h$, $\dpr(h)=\sum_{r\in \delta_{\Dp_h}(\closedleft i,j\closedright)}\dpr(r)$.
It follows that $\dpr\big(\delta_{\Dp_1\cup\Dp_2}\left(\,\closedleft i,j\closedright\,\right)\big)=\dpr\big(\delta_{D_f}\left(\,\closedleft i,j\closedright\,\right)\big)$, as desired.     

\noindent\textbf{Case 2:} $i,j \in \openleft i_q,i_{q+1} \openright$ for some $1\leq q\leq \ell-1$. 

Then either $\closedleft i,j\closedright\subseteq \openleft i_q,i_{q+1} \openright$ or $[n]\setminus\closedleft i,j\closedright\subseteq \openleft i_{q},i_{q+1} \openright$.
% %If $\closedleft i,j\closedright\subseteq \openleft i_q,i_{q+1} \openright$ ,
    % %then $\delta_{D_f}\left(\,\closedleft i,j\closedright\,\right)=\emptyset$ 
    Since for each $h\in \Dp_1\cup\Dp_2$, the demand edges in $\Dp_h$ 
    %reated on the for-loops on lines \ref{FirstForstarts} and \ref{SecondForStarts} 
    are incident only to nodes contained in either $f$ or $\openleft i_\ell,i_1 \openright$, 
    then either $[i,j]$ or $[n]\setminus[i,j]$ does not contain any node incident to edges of $D_f$.
    Hence $\delta_{D_f}\left(\,\closedleft i,j\closedright\,\right)=\emptyset$,
     and inequality \eqref{eq:cutconditionendofalgorithm} follows trivially.
    
\noindent\textbf{Case 3:} $\{i,j\} \nsubseteq \openleft i_q,i_{q+1} \openright$ for every $1\leq q\leq \ell$, where $i_{\ell+1}:=i_1$.

Then both $\closedleft i,j\closedright$ and $[n]\setminus [i,j]$ contain at least one node from $f$. Let $f_1$ and $f_2$ denote the unique faces in $\F-\fout$ that contain $\{i-1,i\}$ and $\{j,j+1\}$, respectively.
Since $z$ is an $\alpha$-feasible unsplittable flow for $\RL\left([n],\crl,\Dp_1\cup\Dp_2,\dpr\right)$,
and by construction of $D_f$,
the ring-loading instance $\RL\left([n],\crl+\alpha \cdot \Dmax,D_f,d^{(k+1)}\right)$ is feasible.
Therefore $\RL\left([n],\crl+\alpha \cdot \Dmax,D_f,d^{(k+1)}\right)$ satisfies the cut-condition. 
%Let $\big(\delta_{[n]}\left(\,\closedleft i,j\closedright\,\right)\big)$ denote the total capacity of edges going across the set $[i,j]$ in the ring-loading instance $\mathcal{RL}([n],\crl+\alpha \cdot \Dmax,D_f,\dpr)$. 
It follows that,
\begin{align*}
         d^{(k+1)}\big(\delta_{D_f}\left(\,\closedleft i,j\closedright\,\right)\big)
        &\leq  \left(\crl+\alpha \cdot \Dmax\right)\big(\delta_{[n]}\left(\,\closedleft i,j\closedright\,\right)\big)\\
         &= \crl\big(\,\{i-1,i\}\,\big)+\crl\big(\,\{j,j+1\}\,\big)+2 \cdot \alpha \cdot \Dmax\\
        &=x^*\left(G^*_{f,f_1}\right)+x^*\big(\,\{i-1,i\}\,\big)+x^*\left(G^*_{f,f_2}\right)+x^*\big(\,\{j,j+1\}\,\big)+2 \cdot \alpha \cdot \Dmax\\
        &=x^*\left(G^*_{f_1,f_2}\right)+x^*\big(\,\{i-1,i\}\,\big)+x^*\big(\,\{j,j+1\}\,\big)+2 \cdot \alpha \cdot \Dmax\\
        &=\left(x^*+\alpha \cdot \Dmax \cdot \bon_f\right)\big(\delta_{G}\left(\,\closedleft i,j\closedright\,\right)\big).
\end{align*}
 The second equality follows from the definition of $\crl$.
The third equality follows from Lemma~\ref{lemma:usefullemmafor[i,j]}.
The last equality follows from part (c) of Lemma~\ref{lemma:usefullemmafor[i,j]}, and the fact that there are precisely two edges of $f$ in $\delta_{G}\left(\,\closedleft i,j\closedright\,\right)$. This proves \cref{condition3}.
\end{proof} 
\begin{comment}
\textcolor{red}{
We sketch the proof of the critical Lemma~\ref{lemma:instancesSatisfyCutCond}, which makes a repeated use of Lemma~\ref{lemma:usefullemmafor[i,j]}.
The ring-loading instance $\RL([n],\crl,\Dp_1\cup\Dp_2,\dpr)$ satisfies the cut-condition (i.e., it is feasible) because (1) $\mathcal{I}(G,x^*,\Dp_1\cup\Dp_2,\dpr)$ satisfies the cut-condition, and (2) the capacity of any central cut is at least as large in the ring with respect to $\crl$ as it is in $G$ with respect to $x^*$. 
Because of this and by definition of $D_f$, the instance $\RL\left([n],\crl+\alpha\cdot\Dmax,D_f,d^{(k+1)}\right)$ satisfies the cut-condition.
From this and (1), it follows that the instance $\mathcal{I}(G,x^*+\alpha \cdot \Dmax\cdot\bon_f,D_f,d^{(k+1)})$ satisfies the cut-condition, by noting that any central cut induced by some $[i,j]$ for which $d^{(k+1)}\left(\delta_{D_f}([i,j])\right)>\dpr\left(\delta_{\Dp_1\cup\Dp_2}([i,j])\right)$, satisfies the crucial property that the capacity of the cut induced by $[i,j]$ is the same in both the ring with capacities given by $\crl+\alpha\cdot\Dmax$ and the graph $G$ with capacities $x^*+\alpha\cdot \Dmax \cdot\bon_f$. 
}
\end{comment}
A consequence of Lemma~\ref{lemma:instancesSatisfyCutCond} is that the instance $\I(G,\cp,\Dp,\dpr)$ computed by the algorithm at the end of an iteration is feasible.
\begin{lemma} \label{lemma:every_ite_feasible}
Let $x^{(k)}=\sum_{h\in\Dp}x_h$ be a feasible flow for $\I\left(G,\cp,\Dp,\dpr\right)$, then $x^{(k+1)}$ is a feasible flow for $\I\left(G, c^{(k+1)}, D^{(k+1)}, d^{(k+1)}\right)$. 
\end{lemma}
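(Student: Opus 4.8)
The plan is to verify directly that $x^{(k+1)} = x^{(k)} - x^* + y$ is a feasible flow for the updated instance $\I(G, c^{(k+1)}, D^{(k+1)}, d^{(k+1)})$, where $c^{(k+1)} = \cp + \alpha\cdot\Dmax\cdot\bon_f$ and $D^{(k+1)} = \cup_{g\in D}\Wkplus_g$. Feasibility has two parts: the demand constraints and the capacity constraints.

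For the demand constraints, I would argue that $x^{(k)} - x^*$ correctly routes exactly the demands in $\Dp \setminus (\Dp_1\cup\Dp_2)$ (since $x^* = \sum_{h\in\Dp_1\cup\Dp_2} x^{(k)}_h$ is subtracted off), and these demands are unchanged in $D^{(k+1)}$ with unchanged values. For the new demands in $D_f = \cup_{h\in\Dp_1\cup\Dp_2}\Dp_h$, by Lemma~\ref{lemma:instancesSatisfyCutCond}\ref{condition3} the instance $\I(G, x^* + \alpha\cdot\Dmax\cdot\bon_f, D_f, d^{(k+1)})$ satisfies the cut-condition, so by Theorem~\ref{OS} it admits a feasible flow $y$; this $y$ routes exactly the demands $D_f$ with the correct values $d^{(k+1)}$. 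Since $D^{(k+1)} = (\Dp\setminus(\Dp_1\cup\Dp_2)) \cup D_f$ (a disjoint union up to relabeling), $x^{(k+1)}$ satisfies all demand constraints with the correct values. I should also check the values match: $d^{(k+1)}(h) = d(g)$ for $h\in\Wkplus_g$, which is consistent with how $y$ was constructed on the instance from \eqref{cutconditionEndofIteration}.

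For the capacity constraints, fix an edge $e\in E$. Since $x^{(k)}$ is feasible for $\I(G,\cp,\Dp,\dpr)$, we have $x^{(k)}(e) \le \cp(e)$, and since $x^*$ only uses flow present in $x^{(k)}$, we have $0 \le (x^{(k)} - x^*)(e) \le \cp(e) - x^*(e)$. Since $y$ is feasible for the instance of \eqref{cutconditionEndofIteration} with capacities $x^* + \alpha\cdot\Dmax\cdot\bon_f$, we have $y(e) \le x^*(e) + \alpha\cdot\Dmax\cdot\bon_f(e)$. Adding these two bounds gives
\begin{equation*}
x^{(k+1)}(e) = (x^{(k)} - x^*)(e) + y(e) \le \cp(e) - x^*(e) + x^*(e) + \alpha\cdot\Dmax\cdot\bon_f(e) = \cp(e) + \alpha\cdot\Dmax\cdot\bon_f(e) = c^{(k+1)}(e),
\end{equation*}
which is exactly the capacity bound for the new instance. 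The one subtlety to handle carefully is that $x^{(k)} - x^*$ must be a genuine (nonnegative) flow: this holds because $x^*$ is defined as a sub-collection $\sum_{h\in\Dp_1\cup\Dp_2} x^{(k)}_h$ of the demand flows comprising $x^{(k)}$, so on every edge its flow is pointwise dominated by that of $x^{(k)}$; hence removing it leaves a valid flow routing the remaining demands.

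The main obstacle, such as it is, is bookkeeping rather than mathematics: one must be careful that the demand set $D^{(k+1)}$ obtained via \eqref{updateWg}–\eqref{updateDp} really does decompose as the disjoint union of the ``untouched'' demands $\Dp\setminus(\Dp_1\cup\Dp_2)$ and the ``new'' demands $D_f$, and that $d^{(k+1)}$ restricted to each piece agrees with the values used when constructing $x^{(k)} - x^*$ and $y$ respectively. Once that identification is in place, the argument is just the two-line capacity inequality above together with the demand-satisfaction check, invoking Lemma~\ref{lemma:instancesSatisfyCutCond}\ref{condition3} and Theorem~\ref{OS} to guarantee $y$ exists.
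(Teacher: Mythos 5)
Your argument is correct and matches the paper's proof: in both, $x^{(k)}-x^*$ is shown to be a feasible flow for the residual demands on capacities $\cp-x^*$, $y$ is the Okamura--Seymour flow for $D_f$ guaranteed by Lemma~\ref{lemma:instancesSatisfyCutCond}\ref{condition3}, and the two are added to give $x^{(k+1)}$ feasible for $\I(G,c^{(k+1)},D^{(k+1)},d^{(k+1)})$. You spell out the edge-wise capacity inequality and the nonnegativity of $x^{(k)}-x^*$ a bit more explicitly than the paper, but the decomposition and key lemmas are identical.
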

\begin{proof}
Suppose that $f\in\F-\fout$ is the face being considered in iteration $k$. Since $x^{(k)}$ is a feasible flow for instance $\I\left(G,\cp,\Dp,\dpr\right)$, it follows that 
%and $x^\prime=x-x^*$, where $x^*=\sum_{h\in \Dp_1\cup\Dp_2}x_h$, then
$x^{(k)}-x^*$ is a feasible flow for instance 
$\I\left(G,\cp-x^*,\Dp\setminus \left(\Dp_1\cup\Dp_2\right),\dpr\right)$.
By Lemma~\ref{lemma:instancesSatisfyCutCond}, $\mathcal{I}\left(G,x^*+\alpha \cdot \Dmax\cdot\bon_f,D_f,d^{(k+1)}\right)$ (defined at \cref{cutconditionEndofIteration}) satisfies the cut-condition. Therefore, by Theorem~\ref{OS} a feasible flow $y$ of $\mathcal{I}\left(G,x^*+\alpha \cdot \Dmax\cdot\bon_f,D_f,d^{(k+1)}\right)$ can be computed (in polynomial time). It then follows that $x^{(k)}-x^*+y$ is a feasible flow for $\I\left(G,\cp+\alpha \cdot \Dmax\cdot\bon_f,\left(\Dp\setminus \left(\Dp_1\cup\Dp_2\right)\right)\cup D_f,d^{(k+1)}\right)$
The proof of the lemma follows from the fact that $x^{(k+1)}= x^{(k)}-x^*+y$, and $$\I(G, c^{(k+1)}, D^{(k+1)}, 
 d^{(k+1)})=\I\left(G,\cp+\alpha \cdot \Dmax\cdot\bon_f,\left(\Dp\setminus \left(\Dp_1\cup\Dp_2\right)\right)\cup D_f,d^{(k+1)}\right).$$
\end{proof}

We are now ready to give a proof of Theorem~\ref{theorem:Newinstance}, which we restate.
\theoremnew*
\begin{proof}By Lemma~\ref{lemma:every_ite_feasible}, the cut-condition is satisfied for each $\I(G,\cp,\Dp,\dpr)$, and hence the cut-condition is also satisfied for the final output $\I(G,c',D',d')$. Note that the capacity of an edge $e$ is increased by $\alpha \cdot d_{\max}$ only when its label is also increased by 1. Initially, the label of all the edges on the unbounded face $\fout$ is 1 while the label of the remaining edges is 0. At the end of the algorithm, the label of every edge is 2. This implies that the capacity of edges on $\fout$ increases by $\alpha \cdot \Dmax$ and the capacity of the remaining edges increases by $2 \cdot \alpha \cdot \Dmax$.
By \cref{pr3} of Lemma~\ref{propositionCycle2}, it follows that the edges in $W_g$ form a walk between the end points of $g$. Furthermore, if there are bad edges in $W_g$, then both its end-points are incident on the cycle formed by label 1 edges. At the end of the algorithm, there are no label 1 edges and hence there are no bad edges in $W_g$. Whenever we insert a new demand edge in $\W_g^{(k)}$, we set its value to $d(g)$. Hence all the demand edges in $W_g$ have value $d(g)$, and this completes the proof.
%of Theorem~\ref{theorem:Newinstance}.
\end{proof}
\subsection{Handling Parallel Edges in $G$}\label{parallele}
Suppose that $G=(V,E)$ has parallel edges. We describe how to proceed in this case.
For every $u,v\in V$, let $E(u,v)$ denote the edges incident to both $u$ and $v$.
We can create a new graph $G_0=(V,E_0)$ by replacing each $E(u,v)\neq\emptyset$ with an edge $uv$ of capacity $c_0(uv)=\sum_{e\in E(u,v)}c(e)$.
Then $\I(G,c,D,d)$ is feasible if and only if $\I(G_0,c_0,D,d)$ is feasible. 
We can apply the algorithm in this section to $\I(G_0,c_0,D,d)$ in order to obtain an instance $\I(G_0,c_0^\prime,D^\prime,d^\prime)$ satisfying the properties in Theorem~\ref{theorem:Newinstance}.
For each $uv\in E_0$, if $c_0^\prime(uv)=c_0(uv)+\delta$, then pick an edge $e\in E(u,v)$ and define $c^\prime(e)=c(e)+\delta$ and $c^\prime(e^\prime)=c(e^\prime)$ for all $e^\prime\in E(u,v)-e$.
Then , the instance $\I(G,c^\prime,D^\prime,d^\prime)$ satisfies the properties of Theorem~\ref{theorem:Newinstance}  with respect to $\I(G,c,D,d)$.

\section{Proof of Theorem~\ref{theorem:G+Hplanar}}\label{section:good}
In this section, we give a proof of Theorem~\ref{theorem:G+Hplanar}, which we restate for convenience.
\theoremgood*
As discussed in Section 2, we may assume without loss of generality that $G=(V,E)$ is 2-vertex connected. 
%We assume that $G$ can have parallel edges. Since $G$ is outerplanar and the cut-condition is satisfied, there exists a feasible flow $x$ for $\I(G,c,H,d)$ by Theorem~\ref{OS}.
For any pair $u\neq v\in V$ of nodes, we use $E(u,v):=\{e\in E:V(e)=\{u,v\}\}$ and $D(u,v):=\{g\in D:V(g)=\{u,v\}\}$ to denote the set of edges of $E$ and $D$, respectively, incident to both $u$ and $v$.
We say that an unbounded face $f=i_1i_2\hdots i_\ell i_1$ of $G$ is an \emph{ear incident to $\{i_\ell, i_1\}$}, 
if $\ell\geq 3$ and if for every $j\in[\ell-1]$, there exists an edge $e\in E(i_\ell,i_{j+1})$ incident to the unbounded face of $G$.
\footnote{Observe that if we create a graph $G_0$ by replacing each $E(u,v)\neq\emptyset$ in $G$ with a single edge $uv$, then an ear of $G$ corresponds to a leaf of the weak dual of $G_0$.} 

We first give a brief description of the algorithm.
Our algorithm works in iterations; one per each unbounded face of $G$ incident to at least $3$ vertices.
At the start of iteration $k$ we consider a feasible instance $\I(G^{(k)},c^{(k)},D^{(k)},d^{(k)})$ where all demand edges are good and where $G^{(k)}$ is 2-vertex connected .
Initially, $\I(G^{(1)},c^{(1)},D^{(1)},d^{(1)})=\I(G,c,D,d)$.
We consider an ear $f$ of $G^{(k)}$ incident to some $\{i_\ell,i_1\}$. 
%All demands in $\cup_{j=1}^{\ell-1}D^{(k)}(i_j,i_{j+1})$ are either fully or partially routed on iteration $k$, using only edges in $\cup_{j=1}^{\ell-1}E^{(k)}(i_j,i_{j+1})$. 
For each $j\in[\ell-1]$, each demand $g\in D^{(k)}(i_j,i_{j+1})$ is either routed along a single edge of $E^{(k)}(i_j,i_{j+1})$, or $g$ is replaced with a demand of value $d^{(k)}(g)$ incident to $i_1,i_\ell$ after routing a flow of value $d^{(k)}(g)$ both from $i_j$ to $i_1$, and from $i_{j+1}$ to $i_\ell$, across two paths with node sequence $i_j\hdots i_1$ and $i_{j+1}\hdots i_\ell$, respectively.
We do this so that the capacity of any edge is not exceeded by more than $\Dmax$.
After this procedure, there are no demands incident to $i_2,\hdots,i_{\ell-1}$.
If there is one $j\in[\ell-1]$, 
such that all edges $e$ in $E^{(k)}(i_j,i_{j+1})$ have exceeded their original capacity $c^{(k)}(e)$, we remove $\{i_2,\hdots,i_{\ell-1}\}$ from $G^{(k)}$; otherwise, we replace $\{i_2,\hdots,i_{\ell-1}\}$ and their incident edges with a single edge $e_0=\{i_1,i_\ell\}$ with a suitable capacity.
We then continue to iteration $k+1$, and in there we consider any ear from the graph of the resulting instance.
%The resulting instance is then passed to iteration $k+1$.
%Note that the number of unbounded faces in the graph with more than two vertices decreases by one.
After the last iteration, the resulting (feasible) instance has a supply graph with only two vertices, which is straightforward to solve, for instance, by using Claim \ref{claim:pathgood} below. 

\noindent Iteration $k$ proceeds differently, depending on which of the following two cases hold:
\begin{align}\label{gooddemands:cases}
\textbf{Case 1:}& \sum_{g\in D^{(k)}(i_j,i_{j+1})}d^{(k)}(g)\leq \Dmax+\sum_{e\in E^{(k)}(i_j,i_{j+1})}c^{(k)}(e) \text{ for each }j\in [\ell-1].\nonumber\\
\textbf{Case 2:}& \sum_{g\in D^{(k)}(i_j,i_{j+1})}d^{(k)}(g)> \Dmax+\sum_{e\in E^{(k)}(i_j,i_{j+1})}c^{(k)}(e) \text{ for precisely one }j\in [\ell-1].
\end{align}
Observe that if $\I(G^{(k)},c^{(k)},D^{(k)},d^{(k)})$ satisfies the cut-condition, then it cannot be the case that there are $1\leq j< j^{\prime}\leq \ell-1$ with  $\sum_{g\in D^{(k)}(i_j,i_{j+1})}d^{(k)}(g)> \Dmax+\sum_{e\in E^{(k)}(i_j,i_{j+1})}c^{(k)}(e)$ and  $\sum_{g\in D^{(k)}(i_{j^\prime},i_{{j^\prime}+1})}d^{(k)}(g)>\Dmax+\sum_{e\in E^{(k)}(i_{j^\prime},i_{j^\prime+1})}c^{(k)}(e)$.
This holds because,
\begin{align*}
\sum_{e\in E(i_j,i_{j+1})}c^{(k)}(e)+\sum_{e\in E(i_{j^\prime},i_{j^\prime+1})}c^{(k)}(e)&=c^{(k)}(\{i_{j+1},i_{j+2},\hdots,i_{j^\prime}\})\geq d^{(k)}(\{i_{j+1},i_{j+2},\hdots,i_{j^\prime}\})\\
&=\sum_{g\in D^{(k)}(i_j,i_{j+1})}d^{(k)}(g)+\sum_{g\in D^{(k)}(i_{j^\prime},i_{j^\prime+1})}d^{(k)}(g),
\end{align*}
where the equalities follow from the fact that $f=i_1i_2\hdots i_\ell i_1$ is an ear of $G^{(k)}$.
%incident to $\{i_\ell,i_1\}$).

In order to prove Theorem~\ref{theorem:G+Hplanar}, we make repeated use of the following straightforward claims.
\begin{claim}\label{claim:pathgood}
Consider an instance $\mathcal{I}(G,c,D,d)$.
Let $v_1,v_2\hdots,v_r\in V$ be a set of distinct vertices, and let $D^*\subseteq D$ be a set of demand edges such that $V(g)=\{v_1,v_r\}$ for all $g\in D^*$.
If 
\begin{equation*}
\sum_{g\in D^*}d(g)\leq\Dmax+\sum_{e\in E(v_j,v_{j+1})}c(e),\;\forall j\in[r-1],
\end{equation*}
then one can efficiently compute a set of paths $\{p_g\}_{g\in D^*}$ in $G$ such that the node sequence of each $p_g$ is precisely $v_1v_2\hdots v_r$, and such that 
\begin{equation*}
\sum_{g\in D^*:e\in p_g}d(g)\leq c(e)+\Dmax,\;\forall e\in \{E(v_j,v_{j+1})\}_{j\in[r-1]}.
\end{equation*}
\end{claim}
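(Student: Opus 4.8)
The plan is to route every demand in $D^*$ along the same node sequence $v_1 v_2 \cdots v_r$. Since the $v_i$ are distinct, a path $p_g$ of this form is completely determined by choosing, for each segment $j \in [r-1]$, one parallel edge of $E(v_j,v_{j+1})$ to carry $g$. The problem thus decouples across the $r-1$ segments: for segment $j$ we must distribute the ``items'' $\{d(g)\}_{g\in D^*}$ among the ``bins'' $E(v_j,v_{j+1})$, where bin $e$ has capacity $c(e)$, so that each bin's load exceeds its capacity by at most $\Dmax$. The hypothesis says exactly that the total item size $\sum_{g\in D^*}d(g)$ does not exceed $\Dmax$ plus the total bin capacity, and of course $d(g)\le\Dmax$ for every $g\in D^*$ (demands are bounded by $\Dmax$).

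So the only real content is the following one-segment statement: given bins of capacities $c_1,\dots,c_m$ ($m\ge 1$) and items of sizes $a_1,\dots,a_t$ with $a_i\le\Dmax$ for all $i$ and $\sum_i a_i\le \Dmax+\sum_j c_j$, the \emph{next-fit greedy} succeeds. Here ``greedy'' means: fix an arbitrary order of the bins and of the items, maintain a current bin (starting with bin $1$), and place each item into the current bin if this keeps its load at most $c_j+\Dmax$, otherwise advance to the next bin and start it with this item. I would prove that the process never advances past bin $m$. If it were forced to, then for each $j\in[m]$ there is a ``triggering'' item of size $a^{(j)}\le\Dmax$ with $L_j+a^{(j)}>c_j+\Dmax$, where $L_j$ denotes the final load of bin $j$; since the bins $1,\dots,m$ hold disjoint item sets and $a^{(m)}$ is a further item (sent to bin $m+1$), we have $\sum_i a_i\ge \sum_{j=1}^m L_j + a^{(m)}$. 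Combining with $L_j>c_j+\Dmax-a^{(j)}$ and $a^{(j)}\le\Dmax$ gives $\sum_i a_i> \sum_j c_j + m\Dmax - \sum_{j=1}^{m-1}a^{(j)}\ge \sum_j c_j+\Dmax$, contradicting the hypothesis. By construction every bin ends with load at most $c_j+\Dmax$, so the greedy assignment has the desired property, and it is clearly computable in polynomial time.

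Given the one-segment statement, I would apply it to each $j\in[r-1]$ with bins $E(v_j,v_{j+1})$ (capacities $c(e)$) and items $\{d(g):g\in D^*\}$, using the hypothesis for that $j$, to obtain a map $\phi_j:D^*\to E(v_j,v_{j+1})$ with $\sum_{g:\phi_j(g)=e}d(g)\le c(e)+\Dmax$ for all $e\in E(v_j,v_{j+1})$. Then I set $p_g$ to be the path $v_1,\phi_1(g),v_2,\phi_2(g),\dots,v_{r-1},\phi_{r-1}(g),v_r$. Because the $v_i$ are distinct this is a simple path with node sequence exactly $v_1 v_2\cdots v_r$, and an edge $e\in E(v_j,v_{j+1})$ lies on $p_g$ if and only if $\phi_j(g)=e$; hence $\sum_{g\in D^*:e\in p_g}d(g)=\sum_{g:\phi_j(g)=e}d(g)\le c(e)+\Dmax$, as required.

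I expect the main obstacle to be getting the one-segment greedy argument right — in particular recognizing that the correct stopping rule is to fill a bin up to $c_j+\Dmax$ (not merely up to $c_j$, which would be too conservative) and the short counting argument that bin $m+1$ is never reached. Everything else — the decoupling into segments and the assembly of the $p_g$ — is routine, the only point requiring a moment's care being that distinctness of $v_1,\dots,v_r$ is what makes the per-segment choices independent and the resulting walk a simple path using each chosen edge exactly once.
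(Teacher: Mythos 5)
Your proposal is correct and takes essentially the same approach as the paper: both decompose the problem across the $r-1$ segments and then run a greedy/next-fit procedure per segment, the key observation in each being that a ``closed'' bin must have load strictly exceeding its capacity $c(e)$. The paper phrases this as an induction on the number of parallel edges with a maintained invariant (the remaining demand fits the remaining bins plus $\Dmax$), while you derive a contradiction by a global count over all bins; these are equivalent, and both are valid.
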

\begin{proof}
%By the Max Flow-Min Cut Theorem,
Let $j\in[r-1]$.
Suppose that $E(i_j,i_{j+1})=\{e_1,\hdots,e_a\}$ and $D(i_j,i_{j+1})=\{g_1,\hdots,g_b\}$. 
%We prove by induction on $|E(i_j,i_{j+1})|$
It suffices to prove that we can compute an assignment $\sigma_j:D^*\rightarrow E(i_j,i_{j+1})$, such that $\sum_{g\in D^*:\sigma_j(g)=e}d(g)\leq c(e)+\Dmax$ for each $e\in E(v_j,v_{j+1})$. 
The proof of the theorem would then follow by taking $p_g=\{\sigma_j(g)\}_{j\in[\ell-1]}$.
We prove the above by induction on $|E(i_j,i_{j+1})|$.
Clearly, the above holds if $|E(i_j,i_{j+1})|=1$. 
Thus, suppose that $|E(i_j,i_{j+1})|>1$.
Take any edge $e^\prime\in E(v_j,v_{j+1})$, and let $Q\subseteq D^*$ be a maximal inclusion-wise subset of demands with $\sum_{g\in Q}d(g)\leq c(e^\prime)+\Dmax$.
Let $\sigma_j(g)=e^\prime$ for all $g\in Q$.
If $Q=D^*$, we are done. 
If $Q\neq D^*$, then $c(e^\prime)<\sum_{g\in Q}d(g)$.
Therefore, $\sum_{g\in D^*\setminus Q}d(g)\leq \Dmax+\sum_{e\in E(v_j,v_{j+1})-e^\prime}c(e)$. By the induction hypothesis, we can assign each $g\in D^*\setminus Q$ to some edge in $E(v_j,v_{j+1})-e^\prime$, so that $\sum_{g\in D^*\setminus Q:\sigma_j(g)=e}d(g)\leq c(e)+\Dmax$ for each $e\in E(v_j,v_{j+1})-e^\prime$. 
\end{proof}
\begin{claim}\label{claim:residual}
Suppose that an instance $\mathcal{I}(G,c,D,d)$ satisfies the cut-condition.
Let $g\in D$ be a good demand with $V(g)=\{u,v\}$. 
Suppose that one routes $g$ through some $e^\prime\in E(u,v)$, 
by possibly exceeding the capacity of $e^\prime$. 
Then the residual instance $\mathcal{I}(G,c^\prime,D-g,d)$ satisfies the cut-condition, where $c^\prime(e^\prime)=\max\{0,c(e^\prime)-d(g)\}$ and $c^\prime(e)=c(e)$ for each $e\in E-e^\prime$.
\end{claim}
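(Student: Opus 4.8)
The plan is to verify the cut-condition for the residual instance $\mathcal{I}(G,c',D-g,d)$ directly, one cut at a time. Fix an arbitrary $S\subseteq V$; the goal is to show $c'(\delta_G(S))\geq d(\delta_{D-g}(S))$. The single structural fact that makes everything go through is that, since $g$ is \emph{good} and is routed on the parallel edge $e'\in E(u,v)$, both $g$ and $e'$ have endpoint set $\{u,v\}$; hence $g\in\delta_D(S)$ if and only if $e'\in\delta_G(S)$.

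First I would dispose of the case $e'\notin\delta_G(S)$. Here $u$ and $v$ lie on the same side of $S$, so $g\notin\delta_D(S)$ and therefore $\delta_{D-g}(S)=\delta_D(S)$; moreover $c'$ agrees with $c$ on every edge of $\delta_G(S)$. Thus $c'(\delta_G(S))=c(\delta_G(S))\geq d(\delta_D(S))=d(\delta_{D-g}(S))$, using the cut-condition for the original instance $\mathcal{I}(G,c,D,d)$.

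The remaining case is $e'\in\delta_G(S)$, so $g\in\delta_D(S)$ and $d(\delta_{D-g}(S))=d(\delta_D(S))-d(g)$. On the supply side the only edge whose capacity changed is $e'$, and $c(e')-c'(e')=c(e')-\max\{0,c(e')-d(g)\}=\min\{c(e'),d(g)\}\leq d(g)$. Consequently $c'(\delta_G(S))=c(\delta_G(S))-\min\{c(e'),d(g)\}\geq c(\delta_G(S))-d(g)\geq d(\delta_D(S))-d(g)=d(\delta_{D-g}(S))$, again by the cut-condition for $\mathcal{I}(G,c,D,d)$. Since $S$ was arbitrary, $\mathcal{I}(G,c',D-g,d)$ satisfies the cut-condition.

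I do not expect any real obstacle here; the one subtlety worth flagging is the truncation to $\max\{0,c(e')-d(g)\}$ in the definition of $c'$, which can decrease the capacity of a crossing cut by more than the amount $g$ actually ``used up'' on $e'$ — but never by more than $d(g)$, which is precisely the amount by which the demand across any such cut drops once $g$ is deleted, so the inequality is preserved in all cases.
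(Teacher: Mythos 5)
Your proof is correct and follows the same route as the paper: split on whether $e'$ crosses the cut (equivalently, whether $u$ and $v$ are separated by $S$), note that $g$ crosses exactly when $e'$ does because they share endpoints, and observe $c(e')-c'(e')\le d(g)$ to handle the crossing case. The paper packages the final estimate as $c'(\delta_G(S))=c(\delta_G(S))-c(e')+c'(e')\ge d(\delta_D(S))-d(g)$, which is the same bound you derive via the $\min\{c(e'),d(g)\}$ expression.
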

\begin{proof}
Let $S\subseteq V$. 
Clearly, the cut-condition holds if $u,v\in S$ or $u,v\in V\setminus S$.
If suffices to consider the case in which, without loss of generality, $u\in S$ and $v\notin S$.
Observe that,
$$c^{\prime}(\delta_G(S))=c(\delta_G(S))-c(e^\prime)+c^\prime(e^\prime)\geq d(\delta_D(S))-c(e^\prime)+c^\prime(e^\prime)\geq d(\delta_D(S))-d(g).$$
\end{proof}
For each case in \eqref{gooddemands:cases}, we show how to construct the feasible instance $\I(G^{(k+1},c^{(k+1)},D^{(k+1)},d^{(k+1)})$, and show how to convert a 1-feasible unsplittable flow in $\I(G^{(k+1)},c^{(k+1)},D^{(k+1)},d^{(k+1)})$ into a 1-feasible unsplittable flow in $\I(G^{(k)},c^{(k)},D^{(k)},d^{(k)})$.

\subsection{Description of an Iteration in Case 1}
Suppose that $\sum_{g\in D^{(k)}(i_j,i_{j+1})}d^{(k)}(g)\leq \Dmax+\sum_{e\in E^{(k)}(i_1,i_{j+1})}c^{(k)}(e)$ for all $j\in[\ell-1]$.
By Claim~\ref{claim:pathgood}, we can assign each $g\in D^{(k)}(i_j,i_{j+1})$ to an edge $e_g\in E^{(k)}(i_j,i_{j+1})$, so that $d_e\leq c(e)+\Dmax$ for each $e\in E^{(k)}(i_j,i_{j+1})$,
where
$d_e:=\sum_{g\in D^{(k)}(i_j,i_{j+1}):e_g=e}d^{(k)}(g)$.
%For each $e\in E^{(k)}(i_j,i_{j+1})$,
%let $d_e=\sum_{g\in D^{(k)}(i_j,i_{j+1}):e_g=e}d^{(k)}(g)$.
After this operation, define $E_j=\{e\in E^{(k)}(i_j,i_{j+1}):d_e\leq c^{(k)}(e)\}$ as the set of edges whose capacity was not exceeded during the above procedure.
Create a new edge $e_0$ incident to $i_1$ and $i_\ell$, and define 
\begin{align*}
V^{(k+1)}&:=V^{(k)}\setminus\{i_2,\hdots,i_{\ell-1}\},\\
E^{(k+1)}&:=\left(E^{(k)}\setminus \cup_{j\in [\ell-1]}E^{(k)}(i_j,i_{j+1})\right)+e_0,\\
c^{(k+1)}(e)&:=\begin{cases}
c^{(k)}(e),&\:\text{if }~e\in E^{(k)}\setminus \cup_{j\in [\ell-1]}E^{(k)}(i_j,i_{j+1})\\
\min_{j\in[\ell-1]}\sum_{e\in E_j}(c^{(k)}(e)-d_e),&\:\text{if }~e=e_0,
\end{cases}\\
D^{(k+1)}&:=D^{(k)}\setminus \cup_{j\in [\ell-1]}D^{(k)}(i_j,i_{j+1}),\\
d^{(k+1)}(g)&:=d^{(k)}(g).
\end{align*}
\begin{claim}
$\I(G^{(k+1)},c^{(k+1)},D^{(k+1)},d^{(k+1)})$ is feasible.
\end{claim}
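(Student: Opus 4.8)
The plan is to verify that $\I(G^{(k+1)},c^{(k+1)},D^{(k+1)},d^{(k+1)})$ satisfies the cut-condition; since $G^{(k+1)}$ is outerplanar (it is obtained from the outerplanar graph $G^{(k)}$ by deleting $i_2,\dots,i_{\ell-1}$ with their incident edges and adding $e_0$ parallel to $\{i_\ell,i_1\}$), Theorem~\ref{OS} then produces a feasible flow. I would use: that $\I(G^{(k)},c^{(k)},D^{(k)},d^{(k)})$ satisfies the cut-condition; that, since $f$ is an ear and all demands are good, the vertices $i_2,\dots,i_{\ell-1}$ are tips of the ear, so that they, and every demand of $D^{(k)}$ with an endpoint among them, lie only in the classes $E^{(k)}(i_j,i_{j+1})$, $D^{(k)}(i_j,i_{j+1})$ for $j\in[\ell-1]$ — exactly the edges and demands removed in forming the new instance, the only new edge being $e_0$; and that the assignment from Claim~\ref{claim:pathgood} sends every $g\in D^{(k)}(i_j,i_{j+1})$ to some $e_g\in E^{(k)}(i_j,i_{j+1})$, so that $\sum_{e\in E^{(k)}(i_j,i_{j+1})}d_e=d^{(k)}\big(D^{(k)}(i_j,i_{j+1})\big)$ and every $e\in E^{(k)}(i_j,i_{j+1})\setminus E_j$ has $0<d_e-c^{(k)}(e)\le\Dmax$.

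First I would fix a cut $S\subseteq V^{(k+1)}$ and lift it to a cut $S'\subseteq V^{(k)}$ by distributing $i_2,\dots,i_{\ell-1}$ among the two sides of $S$. Since every edge of $E^{(k+1)}-e_0$ and every demand of $D^{(k+1)}$ has both endpoints in $V^{(k+1)}$ and keeps its $G^{(k)}$-value, only the ear classes can change the boundary: $\delta_{G^{(k)}}(S')=\big(\delta_{G^{(k+1)}}(S)-e_0\big)\cup\big(\text{ear-path edges crossing }S'\big)$ and $\delta_{D^{(k)}}(S')=\delta_{D^{(k+1)}}(S)\cup\big(\text{deleted demands crossing }S'\big)$. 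If $i_1$ and $i_\ell$ lie on the same side of $S$, then $e_0\notin\delta_{G^{(k+1)}}(S)$, and putting $i_2,\dots,i_{\ell-1}$ on that side yields an $S'$ with no ear class crossing it; then the cut-condition for $S$ in the new instance is literally the cut-condition for $S'$ in $G^{(k)}$.

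The interesting case is $e_0\in\delta_{G^{(k+1)}}(S)$, say $i_1\in S$, $i_\ell\notin S$; here $c^{(k+1)}(\delta_{G^{(k+1)}}(S))=c^{(k+1)}(e_0)+c^{(k)}\big(\delta_{G^{(k+1)}}(S)-e_0\big)$. I would let $j^*\in[\ell-1]$ attain $c^{(k+1)}(e_0)=\min_{j\in[\ell-1]}\sum_{e\in E_j}\big(c^{(k)}(e)-d_e\big)$ and lift $S$ to $S':=S\cup\{i_2,\dots,i_{j^*}\}$, so that among $i_1,\dots,i_\ell$ exactly $i_1,\dots,i_{j^*}$ lie in $S'$. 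Then the only ear class in $\delta_{G^{(k)}}(S')$ is $A:=E^{(k)}(i_{j^*},i_{j^*+1})$ and the only deleted demand class in $\delta_{D^{(k)}}(S')$ is $B:=D^{(k)}(i_{j^*},i_{j^*+1})$, so the cut-condition for $S'$ in $G^{(k)}$ reads $c^{(k)}\big(\delta_{G^{(k+1)}}(S)-e_0\big)+c^{(k)}(A)\ge d^{(k+1)}\big(\delta_{D^{(k+1)}}(S)\big)+d^{(k)}(B)$. On the other hand, because $\sum_{e\in A}d_e=d^{(k)}(B)$ and every $e\in A\setminus E_{j^*}$ contributes a negative term, $c^{(k+1)}(e_0)=\sum_{e\in E_{j^*}}\big(c^{(k)}(e)-d_e\big)\ge\sum_{e\in A}\big(c^{(k)}(e)-d_e\big)=c^{(k)}(A)-d^{(k)}(B)$. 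Combining these two inequalities with the expression for $c^{(k+1)}(\delta_{G^{(k+1)}}(S))$ yields $c^{(k+1)}(\delta_{G^{(k+1)}}(S))\ge d^{(k+1)}(\delta_{D^{(k+1)}}(S))$, establishing the cut-condition.

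The step I expect to matter most is the choice of the break index $j^*$: it must be the minimizer that defines $c^{(k+1)}(e_0)$, since only then is the leftover capacity $\sum_{e\in E_{j^*}}(c^{(k)}(e)-d_e)$ on the non-overloaded level-$j^*$ edges both equal to $c^{(k+1)}(e_0)$ and at least $c^{(k)}(A)-d^{(k)}(B)$ (the overloaded edges of $A$ only help this bound); for a different break point the bound can fail. The rest is the routine but finicky bookkeeping of which edges and demands are deleted and which cross the lifted cut, resting on $i_2,\dots,i_{\ell-1}$ being tips of the ear. The converse direction — lifting a $1$-feasible unsplittable flow of the new instance back to $G^{(k)}$ — and Case~2 of \eqref{gooddemands:cases} would be treated separately.
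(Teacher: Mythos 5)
Your proof is correct, but it takes a genuinely different route from the paper's. The paper first invokes Claim~\ref{claim:residual} repeatedly to show that the \emph{residual} instance obtained after routing the ear demands (with capacities reduced to $c^{(k)}(e)-d_e$ on $E_j$, and overloaded ear edges deleted) still satisfies the cut-condition; it then notes that $\I(G^{(k+1)},\dots)$ is obtained from this residual instance by contracting the demand-free ear path into the single edge $e_0$, whose capacity $\min_j\sum_{e\in E_j}(c^{(k)}(e)-d_e)$ is precisely the min $i_1$--$i_\ell$ cut of that path, and appeals to Max-Flow--Min-Cut to conclude. You instead verify the cut-condition for $\I(G^{(k+1)},\dots)$ directly: for a cut $S$ with $e_0\in\delta(S)$ you lift it back to $G^{(k)}$ using the break index $j^*$ that \emph{attains} the minimum defining $c^{(k+1)}(e_0)$, observe that only the class $A=E^{(k)}(i_{j^*},i_{j^*+1})$ and $B=D^{(k)}(i_{j^*},i_{j^*+1})$ cross, and then use $\sum_{e\in A}d_e=d^{(k)}(B)$ together with the non-positivity of $c^{(k)}(e)-d_e$ on $A\setminus E_{j^*}$ to show $c^{(k+1)}(e_0)\ge c^{(k)}(A)-d^{(k)}(B)$, which is exactly the slack needed. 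This calculation is in effect an unrolled, explicit form of the min-cut argument the paper packages into Claim~\ref{claim:residual} plus Max-Flow--Min-Cut; your version is more elementary and self-contained, while the paper's version is shorter and reuses machinery it has already set up (indeed the same residual-instance idea reappears in Case~2). Your identification of the minimizing $j^*$ as the crucial choice is exactly right, and is the direct analogue of ``contracting by the bottleneck cut'' in the paper's framing.
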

\begin{proof}
By Claim~\ref{claim:residual}, 
the residual instance $\I_{res}$ obtained from $\I(G^{(k)},c^{(k)},D^{(k)},d^{(k)})$ after routing the demands in $\{D^{(k)}(i_j,i_{j+1})\}_{j\in[\ell-1]}$,
replacing $\{E^{(k)}(i_j,i_{j+1})\}_{j\in[\ell-1]}$ with $\{E_j\}_{j\in[\ell-1]}$, and decreasing $c^{(k)}(e)$ by $d_e$, is feasible.
The claim follows from the Max Flow-Min Cut Theorem, by
observing that $\I_{res}$ does not have demands incident to any node in $\{i_2,\hdots,i_{\ell-1}\}$, and that $\I(G^{(k+1)},c^{(k+1)},D^{(k+1)},d^{(k+1)})$ is obtained from $\I_{res}$ by replacing $\{E_j\}_{j\in[\ell-1]}$ with $e_0$.
\end{proof}
\begin{claim}\label{flowk:case1}
A 1-feasible unsplittable flow $y$ in $\I(G^{(k+1)},c^{(k+1)},D^{(k+1)},d^{(k+1)})$ can be efficiently converted into a 1-feasible unsplittable flow $z$ in $\I(G^{k},c^{(k)},D^{(k)},d^{(k)})$.
\end{claim}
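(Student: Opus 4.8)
The plan is to transport $y$ back to $G^{(k)}$ as follows: route each ``short'' demand $g\in D^{(k)}(i_j,i_{j+1})$ along the single edge $e_g\in E^{(k)}(i_j,i_{j+1})$ already selected in the iteration, route every other demand $g\in D^{(k+1)}$ along its (unique) $y$-path, and wherever that $y$-path crosses the auxiliary edge $e_0=\{i_1,i_\ell\}$, splice in a path along the ear $i_1i_2\ldots i_\ell$.

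By Claim~\ref{claim:pathgood} (as applied inside the iteration), the short demands load each $e\in E^{(k)}(i_j,i_{j+1})$ with exactly $d_e\le c^{(k)}(e)+\Dmax$ units. Let $D_0\subseteq D^{(k+1)}$ be the set of demands whose $y$-path uses $e_0$. Since $y$ is unsplittable and $1$-feasible, $\sum_{g\in D_0}d^{(k+1)}(g)=y(e_0)\le c^{(k+1)}(e_0)+\Dmax$, and by construction $c^{(k+1)}(e_0)=\min_{j\in[\ell-1]}\sum_{e\in E_j}(c^{(k)}(e)-d_e)$. Hence for every segment $j\in[\ell-1]$ we have $\sum_{g\in D_0}d^{(k+1)}(g)\le \Dmax+\sum_{e\in E_j}(c^{(k)}(e)-d_e)$, which is exactly the hypothesis needed to invoke Claim~\ref{claim:pathgood} on the vertex path $i_1i_2\ldots i_\ell$, with the parallel class $E_j$ placed between $i_j$ and $i_{j+1}$ and given the nonnegative residual capacity $c^{(k)}(e)-d_e$, and demand set $D_0$ with each $g$ reinterpreted as incident to $i_1,i_\ell$. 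That claim yields, for each $g\in D_0$, an $i_1$--$i_\ell$ path through $\bigcup_{j}E_j$ that adds at most $(c^{(k)}(e)-d_e)+\Dmax$ units to each $e\in E_j$. For $g\in D_0$, let $z$ route $g$ along its $y$-path with the single occurrence of $e_0$ replaced by this path (oriented to agree with the direction in which the $y$-path crosses $e_0$); for $g\in D^{(k+1)}\setminus D_0$, let $z$ keep the $y$-path. All the edges used lie in $G^{(k)}$, so $z$ is a well-defined unsplittable flow, computable in polynomial time because Claim~\ref{claim:pathgood} is constructive and $y$-paths can be read off directly.

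It then remains to check $z(e)\le c^{(k)}(e)+\Dmax$ for each $e\in E^{(k)}$. If $e\notin\bigcup_{j\in[\ell-1]}E^{(k)}(i_j,i_{j+1})$ (equivalently $e\in E^{(k+1)}-e_0$), then $z$ routes through $e$ exactly the demands $y$ does, so $z(e)=y(e)\le c^{(k+1)}(e)+\Dmax=c^{(k)}(e)+\Dmax$. If $e\in E^{(k)}(i_j,i_{j+1})\setminus E_j$, only short demands use $e$, so $z(e)=d_e\le c^{(k)}(e)+\Dmax$. If $e\in E_j$, the load is $d_e$ from the short demands plus at most $(c^{(k)}(e)-d_e)+\Dmax$ from $D_0$, so $z(e)\le c^{(k)}(e)+\Dmax$. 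Thus $z$ is $1$-feasible, and since each demand is routed along a single path, it is unsplittable.

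The crux is the capacity bookkeeping at $e_0$: the value $c^{(k+1)}(e_0)=\min_{j}\sum_{e\in E_j}(c^{(k)}(e)-d_e)$ is precisely what ensures that whatever flow $y$ pushes across $e_0$ can be absorbed by the not-yet-saturated ear edges in \emph{every} segment simultaneously while spending only the single additive $\Dmax$ of slack that Claim~\ref{claim:pathgood} costs per edge. Minor care points are the orientation in which a $y$-path traverses $e_0$ when splicing in the ear path, and the degenerate possibility that $\sum_{e\in E_j}(c^{(k)}(e)-d_e)$ is small for some $j$ (which also forces the total demand of $D_0$ to be correspondingly small).
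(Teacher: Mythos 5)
Your proposal is correct and follows essentially the same approach as the paper: route short demands along the chosen $e_g$, keep $y$-paths for demands avoiding $e_0$, and for demands through $e_0$ splice in an ear path obtained by applying Claim~\ref{claim:pathgood} to the residual capacities $c^{(k)}(e)-d_e$ on the surviving edges $E_j$. The case analysis of $z(e)$ and the capacity bookkeeping match the paper's proof step for step.
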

\begin{proof}
Let $y(g)$ be the edge set of the path on which $y$ routes $g\in D^{(k+1)}$.
For each $g\in \cup_{j\in[\ell-1]} D^{(k)}(i_j,i_{j+1})$, let $z(g)=\{e_g\}$.
For the remaining demands of $D^{(k)}$ i.e., the demands in $D^{(k+1)}$, we consider two cases.
For each $g\in D^{(k+1)}$ with $e_0\notin y(g)$, let $z(g)=y(g)$.
Let $D_0\subseteq D^{(k+1)}$ be the set of demands $g$ with $e_0\in y(g)$.
Let $y(e_0)$ be the flow that traverses $e_0$.
Then,
\begin{equation*}
y(e_0)\leq \Dmax+ c^{(k+1)}(e_0)=\Dmax+\min_{j\in[\ell-1]} \sum_{e\in E_j}(c^{(k)}(e)-d_e).
\end{equation*}
It follows from Claim~\ref{claim:pathgood} that we can compute a set of paths $\{p_g\}_{g\in D_0}$ in $\cup_{j\in[\ell-1]}E_j$, each of which has the node sequence $i_1i_2\hdots i_\ell$, such that for each $e\in \cup_{j\in[\ell-1]}E_j$,
\begin{equation*}
\sum_{g\in D_0:e\in p_g}d^{(k)}(g)\leq c^{(k)}(e)-d_e+\Dmax.
\end{equation*}
For each $g\in D_0$, we set $z(g)=(y(g)-e_0)\cup p_g$.
Therefore, for each $e\in \cup_{j\in[\ell-1]} E_j$,
\begin{align*}
z(e)&=\sum_{g\in D^{k}(i_j,i_{j+1}):z(g)=e}d^{(k)}(g)+\sum_{g\in D_0:e\in p_g}d^{(k)}(g)\\
%\sum_{g\in D^{(k)}:e\in z(g)}d^{(k)}(g)&=\sum_{g\in D^{k}(i_j,i_{j+1}):z(g)=e}d^{(k)}(g)+\sum_{g\in D_0:e\in p_g}d^{(k)}(g)\\
&\leq d_e+c^{(k)}(e)-d_e+\Dmax\leq c^{(k)}(e)+\Dmax.
\end{align*}
%Finally, note that none of the edges outside $\cup_{j\in[\ell-1]}E_j$ is used to route the demands in $D_0$.
Finally, note that $z(e)=d_e\leq c^{(k)}(e)+\Dmax$ for each $e\in \cup_{j\in[\ell-1]}\left(E^{(k)}(i_j,i_{j+1})\setminus E_j\right)$, 
and that $z(e)=y(e)\leq c^{(k+1)}(e)+\Dmax=c^{(k)}(e)+\Dmax$ for each $e\in E^{(k)}\setminus\cup_{j\in[\ell-1]}E^{(k)}(i_j,i_{j+1})$.
\begin{comment}
$e\in E^{(k)}\setminus\cup_{j\in[\ell-1]}E^{(k)}(i_j,i_{j+1})$, 
$z(e)=y(e)\leq c^{(k+1)}(e)+\Dmax=c^{(k)}(e)+\Dmax,$
$$\sum_{g\in D^{(k)}:e\in z(g)}d^{(k)}(g)=y(e)\leq c^{(k+1)}(e)+\Dmax=c^{(k)}(e)+\Dmax,$$
and for each $e\in \cup_{j\in[\ell-1]}\left(E^{(k)}(i_j,i_{j+1})\setminus E_j\right)$, $$\sum_{g\in D^{(k)}:e\in z(g)}d^{(k)}(g)=d_e\leq c^{(k)}(e)+\Dmax.$$
\end{comment}
\end{proof}

\subsection{Description of an Iteration in Case 2}
Let $j^\prime$ be the unique index in $[\ell-1]$ for which $\sum_{g\in D^{(k)}(i_{j^\prime},i_{j^\prime+1})}d^{(k)}(g)> \Dmax+\sum_{e\in E^{(k)}(i_{j^\prime},i_{j^\prime+1})}c^{(k)}(e)$.
As in Case 1, for each $j^\prime\neq j\in[\ell-1]$,
we can assign each $g\in D^{(k)}(i_j,i_{j+1})$ to an edge $e_g\in E^{(k)}(i_j,i_{j+1})$, 
so that $d_e\leq c(e)+\Dmax$ for each $e\in E^{(k)}(i_j,i_{j+1})$, 
where $d_e=\sum_{g\in D^{(k)}(i_j,i_{j+1}):e_g=e}d^{(k)}(g)$.
Take an inclusion-wise maximal subset $D^*_{j^\prime}\subseteq D^{(k)}(i_{j^\prime},i_{j^\prime+1})$, 
such that each $g\in D^*$ can be mapped to an edge $e_g\in E^{(k)}(i_{j^\prime},i_{j^\prime+1})$, so that $\sum_{g\in D^*_{j^\prime}:e_g=e}d^{(k)}(g)\leq c(e)+\Dmax$ for each $e\in E^{(k)}(i_{j^\prime},i_{j^\prime+1})$.
If $D^*_{j^\prime}= D^{(k)}(i_{j^\prime},i_{j^\prime+1})$, we can proceed in the exact same way as in Case 1 and define $\I(G^{(k+1)},c^{(k+1)},D^{(k+1)},d^{(k+1)})$ as there.
Thus, assume that $D^*_{j^\prime}\neq D^{(k)}(i_{j^\prime},i_{j^\prime+1})$.
%By Claim~\ref{claim:residual}, the residual instance 
Let $D^\#_{j^\prime}=D^{(k)}(i_{j^\prime},i_{j^\prime+1})\setminus D^*_{j^\prime}$.
For each $g\in D^\#_{j^\prime}$, we create a new demand edge $g^\#$ incident to $i_1,i_\ell$.
Define,
\begin{align*}
 V^{(k+1)}&:=V^{(k)}\setminus \{i_2,\hdots,i_{\ell-1}\},\\ E^{(k+1)}&:=E^{(k)}\setminus \{E(i_j,i_{j+1})\}_{j\in [\ell-1]},\\
 c^{(k+1)}(e)&:=c^{(k)}(e), \\
D^{(k+1)}&:=\left( D^{(k)}\setminus \cup_{j\in[\ell-1]}D^{(k)}(i_j,i_{j+1})\right)\cup\{g^\#\}_{g\in D_{j^\prime}^\#}. \\
d^{(k+1)}(g)&:=\begin{cases}
d^{(k)}(g),&\:\text{if }~g\in D^{(k)}\setminus \cup_{j\in [\ell-1]}D^{(k)}(i_j,i_{j+1})\\
d^{(k)}(h),&\:\text{if }~g=h^\#.
\end{cases}   
\end{align*}
%We define $V^{(k+1)}=V^{(k)}\setminus \{i_2,\hdots,i_{\ell-1}\}$, $E^{(k+1)}=E^{(k)}\setminus \{E(i_j,i_{j+1})\}_{j\in [\ell-1]}$, 
%$D^{(k+1)}=\left( D^{(k)}\setminus \{D^{(k)}(i_j,i_{j+1})\}_{j\in[\ell-1]}\right)\cup\{g^\#\}_{g\in D_{j^\prime}^\#}$. 
%For each $e\in E^{(k+1)}$ define $c^{(k+1)}(e)=c^{(k)}(e)$. 
%For each $g\in D^\#_{j^\prime}$ define $d^{(k+1)}(g^\#)=d^{(k)}(g)$, and let $d^{(k+1)}(g)=d^{(k)}(g)$ for $g\in  D^{(k)}\setminus \{D^{(k)}(i_j,i_{j+1})\}_{j\in[\ell-1]}$.
\begin{claim}\label{claim:iaib}
$\I(G^{(k+1)},c^{(k+1)},D^{(k+1)},d^{(k+1)})$ is feasible.
\end{claim}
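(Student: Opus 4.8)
The plan is to verify that $\I(G^{(k+1)},c^{(k+1)},D^{(k+1)},d^{(k+1)})$ satisfies the cut-condition and then appeal to Theorem~\ref{OS}. This suffices: $G^{(k+1)}$ is a subgraph of $G$, hence outerplanar, and every demand edge of $D^{(k+1)}$ is good (the surviving original demands by hypothesis, and each new demand $g^\#$ joins $i_1$ and $i_\ell$, which stay adjacent in $G^{(k+1)}$ through the edges $E^{(k)}(i_\ell,i_1)$ that are not removed), so all demand endpoints lie on the outer face of $G^{(k+1)}$ and the cut-condition implies feasibility.

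Write $f=i_1i_2\cdots i_\ell i_1$ for the processed ear and call the edges in $\bigcup_{j\in[\ell-1]}E^{(k)}(i_j,i_{j+1})$ the \emph{ear edges} and those in $\bigcup_{j\in[\ell-1]}D^{(k)}(i_j,i_{j+1})$ the \emph{ear demands}. Two observations will drive the argument. First, since $f$ is an ear, each of $i_2,\dots,i_{\ell-1}$ is incident in $G^{(k)}$ only to ear edges; hence, for any $S\subseteq V^{(k+1)}$ and any $S'$ obtained from $S$ by adding a subset of $\{i_2,\dots,i_{\ell-1}\}$, the non-ear supply edges of $G^{(k)}$ crossing $S'$ are exactly the supply edges of $G^{(k+1)}$ crossing $S$, and the non-ear demands of $D^{(k)}$ crossing $S'$ are exactly the original demands of $D^{(k+1)}$ crossing $S$. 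Second, I would establish that $d^{(k)}(D^*_{j^\prime})\ge c^{(k)}(E^{(k)}(i_{j^\prime},i_{j^\prime+1}))$: if it failed, then for any $g\in D^\#_{j^\prime}$ we would have $\sum_{h\in D^*_{j^\prime}\cup\{g\}}d^{(k)}(h)<c^{(k)}(E^{(k)}(i_{j^\prime},i_{j^\prime+1}))+\Dmax$, so Claim~\ref{claim:pathgood} applied to the two-vertex path $i_{j^\prime}i_{j^\prime+1}$ would let us assign $D^*_{j^\prime}\cup\{g\}$ to the edges of $E^{(k)}(i_{j^\prime},i_{j^\prime+1})$ overloading each by at most $\Dmax$, contradicting the maximality of $D^*_{j^\prime}$.

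Now fix $S\subseteq V^{(k+1)}$. If $i_1$ and $i_\ell$ lie on the same side of $S$, set $S'=S\cup\{i_2,\dots,i_{\ell-1}\}$ if $i_1,i_\ell\in S$ and $S'=S$ otherwise; then no ear edge, no edge of $E^{(k)}(i_\ell,i_1)$, no ear demand and no demand $g^\#$ crosses $S'$, so the first observation yields $c^{(k)}(\delta_{G^{(k)}}(S'))=c^{(k+1)}(\delta_{G^{(k+1)}}(S))$ and $d^{(k)}(\delta_{H^{(k)}}(S'))=d^{(k+1)}(\delta_{H^{(k+1)}}(S))$, and the cut-condition for $S$ follows from that for $S'$. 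If $i_1$ and $i_\ell$ lie on opposite sides, say $i_1\in S$, $i_\ell\notin S$ (the reverse case is symmetric), set $S':=S\cup\{i_2,\dots,i_{j^\prime}\}$; then among the ear edges only $E^{(k)}(i_{j^\prime},i_{j^\prime+1})$ crosses $S'$, among the ear demands only $D^{(k)}(i_{j^\prime},i_{j^\prime+1})$ crosses $S'$, the edges $E^{(k)}(i_\ell,i_1)$ cross both $S$ and $S'$, and every demand $g^\#$ crosses $S$. Writing $A$ for the total capacity of the edges of $G^{(k+1)}$ crossing $S$ that are not in $E^{(k)}(i_\ell,i_1)$, and $B$ for the total demand of the original demands of $D^{(k+1)}$ crossing $S$, the cut-condition at $S'$ says
\[
c^{(k)}(E^{(k)}(i_{j^\prime},i_{j^\prime+1}))+c^{(k)}(E^{(k)}(i_\ell,i_1))+A\ \ge\ B+d^{(k)}(D^{(k)}(i_{j^\prime},i_{j^\prime+1})).
\]
Since $D^{(k)}(i_{j^\prime},i_{j^\prime+1})$ is the disjoint union $D^*_{j^\prime}\cup D^\#_{j^\prime}$ and $d^{(k)}(D^*_{j^\prime})\ge c^{(k)}(E^{(k)}(i_{j^\prime},i_{j^\prime+1}))$, this rearranges to $c^{(k)}(E^{(k)}(i_\ell,i_1))+A\ge B+d^{(k)}(D^\#_{j^\prime})$, which is precisely $c^{(k+1)}(\delta_{G^{(k+1)}}(S))\ge d^{(k+1)}(\delta_{H^{(k+1)}}(S))$.

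The same-side case and the bookkeeping about which edges and demands cross the extended cuts are routine consequences of the ear structure. The step I expect to be the crux is the opposite-side case: one must realize that $S'$ should cut the ear exactly at the overloaded class $E^{(k)}(i_{j^\prime},i_{j^\prime+1})$, and that the surplus $d^{(k)}(D^{(k)}(i_{j^\prime},i_{j^\prime+1}))-c^{(k)}(E^{(k)}(i_{j^\prime},i_{j^\prime+1}))$ provided by the cut-condition at $S'$ is at least $d^{(k)}(D^\#_{j^\prime})$ --- which is exactly where the maximality of $D^*_{j^\prime}$ and Claim~\ref{claim:pathgood} enter.
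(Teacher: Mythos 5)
Your proof is correct, but it takes a genuinely different route from the paper's. You verify the cut-condition of $\I(G^{(k+1)},c^{(k+1)},D^{(k+1)},d^{(k+1)})$ directly, extending each cut $S$ of $G^{(k+1)}$ to a cut $S'$ of $G^{(k)}$ (adding $\{i_2,\dots,i_{\ell-1}\}$ or $\{i_2,\dots,i_{j'}\}$ depending on whether $i_1,i_\ell$ lie on the same side of $S$), using the ear structure to track which edges and demands cross, and invoking the key inequality $d^{(k)}(D^*_{j'})\geq c^{(k)}(E^{(k)}(i_{j'},i_{j'+1}))$ to absorb the overflow $d^{(k)}(D^\#_{j'})$. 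The paper instead applies Claim~\ref{claim:residual} to route $\bigl(\cup_j D^{(k)}(i_j,i_{j+1})\bigr)\setminus D^\#_{j'}$ and remove the saturated edges, observes (using the same maximality argument you use, just phrased as $E_{j'}=\emptyset$ rather than as your capacity inequality) that $i_1$ and $i_\ell$ become cut-vertices of the residual graph, pins each $g\in D^\#_{j'}$ at these cut-vertices into $g_a,g^\#,g_b$, and decomposes the residual instance into three feasible subinstances $\I(G^{(k+1)},\dots)$, $\I_a$, $\I_b$. The paper's approach yields the subinstances $\I_a,\I_b$ for free, which are then reused in Claim~\ref{flowk:case2}; your argument establishes the claim in one shot but would require re-deriving $\I_a,\I_b$ when converting the unsplittable flow back. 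Both are sound, and your direct cut-shifting argument is a reasonable elementary alternative that avoids the cut-vertex/pinning machinery.
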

\begin{proof}
As in Case 1, define $E_j=\{e\in E^{(k)}(i_j,i_{j+1}):d_e\leq c^{(k)}(e)\}$ for each $j\in[\ell-1]$.
By Claim~\ref{claim:residual}, 
the residual instance $\I_{\text{res}}=\I(G^\prime=(V^{(k)},E^\prime),c^\prime,D^\prime,d^{(k)})$ obtained from $\I(G^{(k)},c^{(k)},D^{(k)},d^{(k)})$ after routing the demands in $\left(\cup_{j\in[\ell-1]}D^{(k)}(i_j,i_{j+1})\right)\setminus D^\#_{j^\prime}$ and deleting $\cup_{j\in[\ell-1]}(E^{(k)}(i_j,i_{j+1})\setminus E_j)$, is feasible.
%Note that $c^\prime(e)=c^{(k)}(e)-d_e\geq0$ for each $e\in\cup_{j\in[\ell-1]}E_j$, and $c^\prime(e)=c^{(k)}(e)$ for all other $e\in E^\prime$.
Since $E_{j^\prime}=\emptyset$, both $i_1$ and $i_{\ell}$ are cut-vertices of $G^\prime$. 
Therefore, the cut-condition is maintained if we replace (pin) each $g\in D^\#_{j^\prime}$ in $\I_{res}$ with the three demand edges $g_a=\{i_{j^\prime},i_1\},\,g^\#=\{i_1,i_{\ell}\},\,g_b=\{i_\ell,i_{j^\prime+1}\}$ of the same demand value as $g$.
We can then split $\I_{res}$ into the three feasible instances
$\I(G^{(k+1)},c^{(k+1)},D^{(k+1)},d^{(k+1)})$,
$\I_a=\I(G_a=(V_a,E_a),c^\prime,\{g_a:g\in D^\#_{j^\prime}\},d^\prime)$, and $\I_b=\I(G_b=(V_b,E_b),c^\prime,\{g_b:g\in D^\#_{j^\prime}\},d^\prime)$, where
$V_a=\{i_1,\hdots,i_{j^\prime}\}$, $V_b=\{i_{j^\prime+1},\hdots,i_\ell\}$, $E_a=\cup_{j\in[j^\prime-1]}E_j$, $E_b=\cup_{j\in\{j^\prime+1,\hdots,\ell-1\}}E_j$, and $d^\prime(g_a)=d^\prime(g_b)=d^{(k)}(g)$.
\end{proof}
\begin{claim}\label{flowk:case2}
A 1-feasible unsplittable flow $y$ in $\I(G^{(k+1)},c^{(k+1)},D^{(k+1)},d^{(k+1)})$ can be efficiently converted into a 1-feasible unsplittable flow $z$ in $\I(G^{k},c^{(k)},D^{(k)},d^{(k)})$.
\end{claim}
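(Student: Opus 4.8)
The plan is to follow the template of the proof of Claim~\ref{flowk:case1}. Demands of $D^{(k)}$ that survive into $D^{(k+1)}$ will be routed exactly as $y$ routes them; demands that were pinned to a single edge of the ear will be routed along that edge; and each demand $g\in D^\#_{j'}$ will be routed by gluing the $y$-path of its replacement $g^\#$, which is an $i_1$--$i_\ell$ path, to two short ear-paths joining $i_{j'}$ to $i_1$ and $i_{j'+1}$ to $i_\ell$.

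\textbf{Constructing $z$.} I would first set $z(g)=\{e_g\}$ for every $g\in D^{(k)}(i_j,i_{j+1})$ with $j\neq j'$ and for every $g\in D^*_{j'}$, where $e_g$ is the edge assigned to $g$ during the iteration; write $d_e:=\sum_{g:e_g=e}d^{(k)}(g)$. For every $g\in D^{(k+1)}$ other than the new demands $g^\#$, set $z(g)=y(g)$; this is a path of $G^{(k)}$ since $G^{(k+1)}$ arises from $G^{(k)}$ by deleting the ear edges $\bigcup_{j\in[\ell-1]}E^{(k)}(i_j,i_{j+1})$ and the vertices $i_2,\dots,i_{\ell-1}$, none of which can occur on $y(g)$. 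Finally, for $g\in D^\#_{j'}$ let $y(g^\#)$ be the simple $i_1$--$i_\ell$ path on which $y$ routes $g^\#$, and set $z(g)=P^g_a\cup y(g^\#)\cup P^g_b$, where $P^g_a$ is an $i_{j'}$--$i_1$ path with node sequence $i_{j'}i_{j'-1}\cdots i_1$ contained in $E_a=\bigcup_{j\in[j'-1]}E_j$, and $P^g_b$ is an $i_{j'+1}$--$i_\ell$ path with node sequence $i_{j'+1}i_{j'+2}\cdots i_\ell$ contained in $E_b=\bigcup_{j\in\{j'+1,\dots,\ell-1\}}E_j$ (when $j'=1$ the path $P^g_a$ is empty; when $j'=\ell-1$ the path $P^g_b$ is empty). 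To obtain these ear-paths I would apply Claim~\ref{claim:pathgood} to the vertex sequence $i_1,\dots,i_{j'}$ equipped with the residual capacities $c^{(k)}(e)-d_e$ on $E_a$ (these are the capacities $c'$ used in the proof of Claim~\ref{claim:iaib}): feasibility of the instance $\I_a$ built there forces $\sum_{g\in D^\#_{j'}}d^{(k)}(g)\le\sum_{e\in E_j}\bigl(c^{(k)}(e)-d_e\bigr)$ for every $j\in[j'-1]$, which is exactly the hypothesis of Claim~\ref{claim:pathgood}, so we get paths of the prescribed node sequence with $\sum_{g\in D^\#_{j'}:e\in P^g_a}d^{(k)}(g)\le\bigl(c^{(k)}(e)-d_e\bigr)+\Dmax$ on each $e\in E_a$; the paths $P^g_b$ come symmetrically from $\I_b$. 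Since the internal vertices of $y(g^\#)$ lie in $V^{(k+1)}=V^{(k)}\setminus\{i_2,\dots,i_{\ell-1}\}$ while $P^g_a$ and $P^g_b$ use only vertices in $\{i_1,\dots,i_{j'}\}$ and $\{i_{j'+1},\dots,i_\ell\}$ respectively, the three pieces are internally disjoint and meet only at $i_1$ and $i_\ell$, so each $z(g)$ is a simple $i_{j'}$--$i_{j'+1}$ path and $z$ is unsplittable.

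\textbf{Capacity check.} It then remains to verify $z(e)\le c^{(k)}(e)+\Dmax$ for every $e\in E^{(k)}$. For $e\in E^{(k+1)}$ the contributing demands are the non-$g^\#$ demands with $e\in y(g)$ together with the $g\in D^\#_{j'}$ with $e\in y(g^\#)$, and since $d^{(k+1)}(g^\#)=d^{(k)}(g)$ this load equals $y(e)\le c^{(k+1)}(e)+\Dmax=c^{(k)}(e)+\Dmax$. For $e\in E_j$ with $j\in[j'-1]$ the load is $d_e$ from the single-edge routings plus at most $\bigl(c^{(k)}(e)-d_e\bigr)+\Dmax$ from the $P^g_a$'s, for a total of $c^{(k)}(e)+\Dmax$; the case $e\in E_b$ is symmetric. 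For $e\in E^{(k)}(i_j,i_{j+1})\setminus E_j$ with $j\neq j'$ the load is $d_e\le c^{(k)}(e)+\Dmax$, and for $e\in E^{(k)}(i_{j'},i_{j'+1})$ the load is $\sum_{g\in D^*_{j'}:e_g=e}d^{(k)}(g)\le c^{(k)}(e)+\Dmax$ by maximality of $D^*_{j'}$; note that no demand of $D^\#_{j'}$ traverses $E^{(k)}(i_{j'},i_{j'+1})$.

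\textbf{Main obstacle.} The one delicate point is the bookkeeping that keeps the additive violation at $\Dmax$ rather than $2\Dmax$: the wrap-around demands $D^\#_{j'}$ must be routed through the \emph{residual} capacities $c^{(k)}(e)-d_e$ of $E_a$ and $E_b$, and never through the already-saturated bundle $E^{(k)}(i_{j'},i_{j'+1})$ (where $E_{j'}=\emptyset$), so that on each ear edge the single-edge load $d_e$ and the wrap-around load sum to exactly $c^{(k)}(e)+\Dmax$. With that in place the remaining verifications are a routine adaptation of the proof of Claim~\ref{flowk:case1}.
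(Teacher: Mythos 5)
Your construction and capacity analysis are precisely the paper's: route pinned demands on their assigned ear edges, copy $y$ for surviving demands, and for each $g\in D^\#_{j'}$ splice $y(g^\#)$ with ear-paths obtained by applying Claim~\ref{claim:pathgood} to the residual instances $\I_a$ and $\I_b$ from Claim~\ref{claim:iaib}. The only difference is expository (you spell out the residual-capacity hypothesis check and the simplicity of the spliced paths, which the paper leaves implicit), so the proof is correct and essentially identical.
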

\begin{proof}
Let $z(g)=y(g)$ for each $g\in D^{(k)}\setminus\cup_{j\in[\ell-1]}D^{(k)}(i_j,i_{j+1})$, and let $z(g)=e_g$ for each $g\in \left(\cup_{j\in[\ell-1]}D^{(k)}(i_j,i_{j+1})\right)\setminus D^\#_{j^\prime}$. Let $\I_a$ and $\I_b$ be as in the proof of Claim \ref{claim:iaib}.
By Claim \ref{claim:pathgood}, we can compute two $1$-feasible unsplittable flows $z_a$, $z_b$ for $\I_a$ and $\I_b$, respectively.
For each $g\in D^\#_{j^\prime}$, we set $z(g)=z_a(g_a)\cup y(g^\#)\cup z_b(g_b)$.
Observe that $z(e)= d_e\leq c^{(k)}(e)+\Dmax$ for each $e\in \cup_{j\in[\ell-1]}\left (E^{(k)}(i_j,i_{j+1})\setminus E_j\right)$, and $z(e)=y(e)\leq c^{(k+1)}(e)+\Dmax=c^{(k)}(e)+\Dmax$ for each $e\in E^{(k)}\setminus \cup_{j\in[\ell-1]}E^{(k)}(i_j,i_{j+1})$. Finally, let $c\in\{a,b\}$ and $e\in E_c$. Then, $$z(e)=d_e+z_c(e)\leq d_e+c^\prime(e)+\Dmax=d_e+c^{(k)}(e)-d_e+\Dmax=c^{(k)}(e)+\Dmax.$$
\end{proof}
\subsection{Proof of Theorem~\ref{theorem:G+Hplanar}}
Let $\I(G,c,D,d)$ be a feasible instance and let $F$ be the number of unbounded faces of $G$ with at least 3 vertices.
For each iteration $k\in[F+1]$, we compute the feasible instances $\I(G^{(k)},c^{(k)},D^{(k)},d^{(k)})$ as described above. 
Observe that each instance only contains good demands, and that each $G^{(k)}$ is a 2-vertex connected planar graph. 
In the end, $\I(G^{(F+1)},c^{(F+1)},D^{(F+1)},d^{(F+1)})$ has no unbounded faces with at least 3 vertices. 
Thus, $G^{(F+1)}$ contains only 2 vertices.
By Claim~\ref{claim:pathgood}, we can compute a 1-feasible unsplittable flow in $\I(G^{(F+1)},c^{(F+1)},D^{(F+1)},d^{(F+1)})$.
By Claims~\ref{flowk:case1} and \ref{flowk:case2}, we can backtrack starting from $k=F$ in order to compute a 1-feasible unsplittable flow in $\I(G^{(1)},c^{(1)},D^{(1)},d^{(1)})=\I(G,c,D,d)$.
This concludes the proof.

\section{Conclusion}
We proved that if $G$ is an outerplanar graph and the cut-condition is satisfied, then there exists a $\mathcal{O}(1)$-feasible unsplittable flow. It is an interesting open problem to generalize this result to other settings where the cut-condition is also sufficient for a feasible routing of all the demands, for example when all the source-sink pairs are incident on the unbounded face of a planar graph. It would also be interesting to see if a similar result holds for a more general class of graphs, where the cut-condition is not sufficient for a feasible routing, such as series-parallel graphs.

\printbibliography

@article{okamura1981multicommodity,
  title={Multicommodity flows in planar graphs},
  author={Okamura, Haruko and Seymour, Paul D.},
  journal={Journal of Combinatorial Theory, Series B},
  volume={31},
  number={1},
  pages={75--81},
  year={1981},
  publisher={Elsevier}
}

@article{seymour1981odd,
  title={On odd cuts and plane multicommodity flows},
  author={Seymour, Paul D.},
  journal={Proceedings of the London Mathematical Society},
  volume={3},
  number={1},
  pages={178--192},
  year={1981},
  publisher={Wiley Online Library}
}

@book{schrijver2003combinatorial,
  title={Combinatorial optimization: polyhedra and efficiency},
  author={Schrijver, Alexander},
  volume={24},
  year={2003},
  publisher={Springer Science \& Business Media}
}

@article{schrijver1998ringloading,
  title={The ring loading problem},
  author={Schrijver, Alexander and Seymour, Paul D. and Winkler, Peter},
  journal={SIAM J. Discrete
Math.},
  volume={11},
  number={1},
  pages={1--14},
  year={1998},
 }

@article{skutella2016ringloading,
  title={A note on the ring loading problem},
  author={Skutella, Martin},
  journal={SIAM J. Discrete Math.},
  volume={30},
  number={1},
  pages={327--342},
  year={2016},
 
}

@inproceedings{daubel2019ringloadingBestUpperbound,
  title={An improved upper bound for the ring loading problem},
  author={D\"{a}ubel, Karl},
  booktitle={Proceedings of the 17th International Workshop on Approximation and Online Algorithms, WAOA2019},
  pages={89--105},
  year={2019},
 }

@article{dinitz1999single,
  title={On the single-source unsplittable flow problem},
  author={Dinitz, Yefim and Garg, Naveen and Goemans, Michel X.},
  journal={Combinatorica},
  volume={19},
  number={1},
  pages={17--41},
  year={1999},
  publisher={Springer}
}

@article{ford1956maximal,
  title={Maximal flow through a network},
  author={Ford, Lester Randolph and Fulkerson, Delbert R.},
  journal={Canadian journal of Mathematics},
  volume={8},
  pages={399--404},
  year={1956},
  publisher={Cambridge University Press}
}

@book{godsil2001algebraicGraphTheory,
  title={Algebraic graph theory},
  author={Godsil, Chris and Royle, Gordon F.},
  year={2001},
  publisher={New York, USA: Springer}
}

@inproceedings{kleinberg1996single,
  title={Single-source unsplittable flow},
  author={Kleinberg, Jon M.},
  booktitle={Proceedings of 37th Conference on Foundations of Computer Science},
  pages={68--77},
  year={1996},
  organization={IEEE}
}

@article{cosares1994optimization,
  title={An optimization problem related to balancing loads on SONET rings},
  author={Cosares, Steve and Saniee, Iraj},
  journal={Telecommunication Systems},
  volume={3},
  pages={165--181},
  year={1994},
  publisher={Springer}
}

@article{shmoys1993approximation,
  title={An approximation algorithm for the generalized assignment problem},
  author={Shmoys, David B and Tardos, {\'E}va},
  journal={Mathematical programming},
  volume={62},
  number={1},
  pages={461--474},
  year={1993},
  publisher={Springer}
}

@inproceedings{shapleyshmoys,
  title={Small Additive Error for Unsplittable Multicommodity Flow in Outerplanar Graphs},
  author={Shapley, Richard and Shmoys, David B. },
  booktitle={Proceedings of the 22nd International Workshop on Approximation and Online Algorithms, WAOA2024},
  pages={167--182},
  year={2024},
  }

\end{document}